\newtcolorbox{axiombox}[1][]{
  colback=gray!15,    % light gray background
  boxrule=0pt,     % border width
  arc=3mm,           % rounded corners
  left=2mm,          % internal left margin
  right=2mm,         % internal right margin
  top=2mm,           % internal top margin
  bottom=2mm,        % internal bottom margin
  auto outer arc,
  #1
}
\tikzset{
	% style to add an arrow in the middle of a path
	mid arrow/.style={postaction={decorate,decoration={
				markings,
				mark=at position .575 with {\arrow[#1]{stealth}}
	}}},
	near arrow/.style={postaction={decorate,decoration={
				markings,
				mark=at position .275 with {\arrow[#1]{stealth}}
	}}},
	far arrow/.style={postaction={decorate,decoration={
				markings,
				mark=at position .800 with {\arrow[#1]{stealth}}
	}}},
}
\pgfmathsetmacro\MathAxis{height("$\vcenter{}$")}
\theoremstyle{plain} 
\newtheorem{theorem}{Theorem}%[section] 
\newtheoremstyle{axiomwithbreak} % Name of the style
  {3pt}   % Space above
  {3pt}   % Space below
  {} % Body font
  {}      % Indent amount
  {\bfseries} % Theorem head font
  {.}     % Punctuation after theorem head
  {\newline }   % Space after theorem head (newline forces vertical separation)
  {}      % Theorem head spec
\theoremstyle{axiomwithbreak}
\newtheorem{axiom}{Axiom}
\theoremstyle{definition}               
\newtheorem{definition}[theorem]{Definition}
\newtheorem{exmp}{Example}
\theoremstyle{plain}
\newtheorem{corollary}{Corollary}[theorem]
\newtheorem{lemma}[theorem]{Lemma}
\newtheorem{Proposition}[theorem]{Proposition}
\definecolor{Mathematica1}{rgb}{0.368417, 0.506779, 0.709798}
\definecolor{Mathematica2}{rgb}{0.880722, 0.611041, 0.142051}
\definecolor{Mathematica3}{rgb}{0.560181, 0.691569, 0.194885}
\newcommand{\secref}[1]{Sec.\,\ref{#1}}
\newcommand{\appref}[1]{Appendix.\,\ref{#1}}
\newcommand{\eqnref}[1]{Eq.\,\eqref{#1}}
\newcommand{\figref}[1]{Fig.\,\ref{#1}}
\newcommand{\vdagger}{{\vphantom{\dagger}}}
\definecolor{BSorange}{RGB}{140,50,0}
\newcommand{\calA}{{\mathcal A}}
\newcommand{\calC}{{\mathcal C}}
\newcommand{\calE}{{\mathcal E}}
\newcommand{\calH}{{\mathcal H}}
\newcommand{\calO}{{\mathcal O}}
\newcommand{\calQ}{{\mathcal Q}}
\newcommand{\calS}{{\mathcal S}}
\newcommand{\calR}{{\mathcal R}}
\newcommand{\calM}{{\mathcal M}}
\newcommand{\calP}{{\mathcal P}}
\newcommand{\rd}{{\partial}}
\newcommand{\Cgh}{C_{g,h}}
\newcommand{\Eg}{E_{g}}
\newcommand{\LW}{{\rm LW}}
\newcommand{\conv}{{\rm conv}}
\newcommand{\Id}{{\textrm{Id}}}
\newcommand{\bmflux}[1]{{\mathbf{\color{blue!50!cyan!80!black}#1}}}
\newcommand{\bmcharge}[1]{{\mathbf{\color{red!50!orange!80!black}#1}}}
\newcommand{\bmstate}[1]{{\mathbbm{#1}}}
\newcommand{\rAngle}[1]{\rangle \hspace{-2pt} \rangle }
\begin{document}

\title{Topological Mixed States:  Phases of Matter from Axiomatic Approaches}

\author{Tai-Hsuan Yang}
\affiliation{Department of Physics, Illinois Quantum Information Science and Technology Center, and Institute of Condensed Matter Theory, The Grainger College of Engineering, University of Illinois at Urbana-Champaign, Urbana, Illinois 61801, USA}
\author{Bowen Shi}
\affiliation{Department of Physics, Illinois Quantum Information Science and Technology Center, and Institute of Condensed Matter Theory, The Grainger College of Engineering, University of Illinois at Urbana-Champaign, Urbana, Illinois 61801, USA}
\affiliation{Department of Computer Science, University of California, Davis, CA 95616, USA}
\author{Jong Yeon Lee}
\affiliation{Department of Physics, Illinois Quantum Information Science and Technology Center, and Institute of Condensed Matter Theory, The Grainger College of Engineering, University of Illinois at Urbana-Champaign, Urbana, Illinois 61801, USA}
\affiliation{Korea Institute for Advanced Study, Seoul 02455, South Korea}

\date{\today}
\begin{abstract}
For closed quantum systems, topological orders are understood through the equivalence classes of ground states of gapped local Hamiltonians. The generalization of this conceptual paradigm to open quantum systems, however, remains elusive, often relying on operational definitions without fundamental principles. Here, we fill this gap by proposing an approach based on three axioms: ($i$) local recoverability, ($ii$) absence of long-range correlations, and ($iii$) spatial uniformity. 
States that satisfy these axioms are fixed points; requiring the axioms only after coarse-graining promotes each fixed point to an equivalence class, i.e. a phase, presenting the first step towards the axiomatic classification of mixed-state phases of matter: \emph{mixed-state bootstrap program}.  

From these axioms, a rich set of topological data naturally emerges; importantly, these data are \emph{robust} under relaxation of axioms.  
For example, each topological mixed state supports locally indistinguishable classical and/or quantum logical memories with distinct responses to topological operations.
These data label distinct mixed-state phases, allowing one to distinguish them. 
We further uncover a hierarchy of secret-sharing constraints: in non-Abelian phases, reliable recovery---even of information that looks purely classical---demands a specific coordination among spatial subregions, a requirement different across non-Abelian classes. This originates from non-Abelian fusion rules that can stay robust under decoherence.
Finally, we performed large-scale numerical simulations to corroborate stability: weakly decohered fixed points respect the axioms once coarse-grained.
These results lay the foundation for a systematic classification of topological states in open quantum systems.
\end{abstract} 
\maketitle

\tableofcontents

\section{Introduction}

Entanglement is one of the fundamental principles underpinning the classification of phases of matter. Building on this concept, it has been proposed that two pure states belong to the same phase if they can be continuously deformed into one another via finite-depth local unitary operations (FDLU)~\cite{Hasting_quasiadiabaticcontinuation_2005,xiechen_localunitary2010, PhysRevB.84.165139, Chen_2011}. Extending this idea to mixed states, however, is nontrivial. A straightforward generalization--substituting finite-depth local unitary operations with finite-depth local quantum channels (FDLC)--overlooks a crucial requirement: 
the smooth deformation of the underlying entanglement structure. While finite-depth local unitary operations modify entanglement only gradually, local quantum channels can induce abrupt changes, potentially destroying the topological correlation across the system. 
This subtle oversight has contributed to some ambiguity in the study of mixed-state topological phases.
This paper aims to clarify this point and provide a more robust and consistent framework.

The classification of phases of matter has been a central theme in quantum many-body physics, traditionally focused on identifying distinct ground states of local gapped Hamiltonians. Over the past few decades, entanglement has emerged as a pivotal tool in this endeavor, offering a perspective that goes beyond Landau's spontaneous symmetry breaking. In particular, quantum phases can be broadly categorized by their entanglement structure into short-range entangled (SRE) or long-range entangled (LRE) states~\cite{xiechen_localunitary2010,Zeng2015book}, which can be further enriched under the presence of symmetries. LRE states that are robust against local perturbations are called \emph{topological}, and in two dimensions, these states are described by anyon theories \cite{Kitaev2005,Levin2005,Rowell2007}, which characterize the emergent quasiparticles and their braiding statistics.

A foundational principle in defining and classifying gapped phases is the existence of an energy gap in the parent local Hamiltonian \cite{Hasting_quasiadiabaticcontinuation_2005,Levin2005,Kitaev2005,Fidkowski2010}. Traversing from one phase to another typically necessitates gap closing that marks the phase boundary as in \figref{fig:1}(a). The entanglement bootstrap program \cite{Shi2019fusion,Shi2020domainwall,Huang2021,Braiding2023,Lin2023,confrmal2025,2024strict-A1-string-net,strict-J-2024} offers a complementary perspective. 
It begins with a set of axioms designed to strip away all short-range structures, retaining only long-range entanglement. 
Two fixed point wavefunctions that satisfy these axioms but cannot be connected by an FDLU circuit define distinct topological phases, see \figref{fig:1}(b).
Conversely, relaxing the axioms around a chosen fixed point and allowing FDLU deformations generates the entire phase anchored to that point.  
This construction yields a robust definition of topological order, independent of explicit Hamiltonian constructions~\footnote{For chiral topological orders, we refer to \cite{strict-J-2024} for more in-depth discussions. }. Despite these advances in the pure-state regime, much less is known about how to classify mixed states in a similarly rigorous manner, particularly when the guiding notion of a spectral gap is unavailable.

\begin{figure}[!t]
     \centering 
     \includegraphics[width=0.96\linewidth]{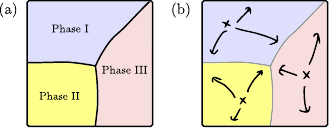}
     \caption{ {\bf Phases of matter} defined through {\bf (a)} equivalence relations (adiabatic connection without closing gaps of a local Hamiltonian), which specify boundaries as well, and {\bf (b)} fixed points and allowed deformations, which are finite-depth (quasi)-local operations.}
     \label{fig:1}
\end{figure}

From the information-theoretic perspective, understanding which aspects of topological order survive after decoherence is both conceptually intriguing and practically important.
Recent studies have shown that decohered topological states--intrinsic or symmetry-protected can still encode nontrivial quantum or classical logical information across various regimes~\cite{lee2022symmetry, fan2023diagnostics, Lee_2023, bao2023mixedstate, Wang_intrinsicmixedstateTO_2023, Zhao_2024, lee2024exact, Sang2024, Niwa_2025}. These findings support the growing consensus that one can meaningfully define mixed-state generalizations of topological phases~\cite{lee2022symmetry, fan2023diagnostics, Lee_2023, bao2023mixedstate, Wang_intrinsicmixedstateTO_2023, Zhao_2024, lee2024exact, Sang2024, Niwa_2025, Ma2023, sang2023mixedstate, lyons2024understanding, 
Coser_2019,
xue2024tensornetworkformulationsymmetry,
ma2024symmetryprotectedtopologicalphases, Sohal2024,Ellison2025,
PhysRevB.111.115141,
chen2023separability,
chen2023symmetryenforced,
guo2023twodimensional,
su2024tapestry,
Peter-Lu2024,
li2024entanglementneededemergentanyons,
PhysRevB.111.125128,
li2024replicatopologicalorderquantum,
chen2024unconventionaltopologicalmixedstatetransition,
PRXQuantum.6.010347, 
Lu2024,
Negari2024, Ogata2025}. 
Substantial progress has been made along these lines. The doubled space formalism offers a qualitative organizing framework for decoherence-driven transitions~\cite{lee2022symmetry, bao2023mixedstate}, while generalizations of strange correlators~\cite{lee2022symmetry, PhysRevB.111.115141}, information-theoretic quantities~\cite{fan2023diagnostics}, and separability conditions~\cite{chen2023separability} have emerged as promising order parameters.
In parallel, the divergence of the Markov length has been proposed as a hallmark of transitions between phases defined via local Lindbladian reversibility~\cite{Sang2024}. 

However, a comprehensive framework is still missing. The main hurdle is to specify principled constraints on mixed states that allow a meaningful classification. Equally uncertain is which properties of pure-state topological order would survive under decoherence. 
Do anyonic braiding statistics persist, and can generalized fusion rules--or viable analogues of the modular $S$ and $T$ matrices in two dimensions--still be defined? Finally, a complete theory must explain how information storage capacity--whether classical, quantum, or hybrid--naturally emerges across different mixed-state phases.

In this work, we propose a novel systematic framework for topological mixed states.
By generalizing the ideas from the entanglement bootstrap~\cite{Shi2019fusion}, we state a set of axioms that identify fixed points of topological mixed states, and explore invariants to label and classify them. 
Furthermore, by relaxing the axioms, we define mixed-state phases of matter where the equivalence class is defined by axioms under coarse-graining. This new approach naturally provides the notion of topological channel connectivity, which resolves issues in previous attempts to define the equivalence class using two-way connectivity. 
Our framework not only delineates the fundamental constraints that topologically ordered mixed states must satisfy but also clarifies how such states can encode and preserve information--whether quantum, classical, or both. We thus offer the first coherent step toward a classification of topological mixed states.

The paper is organized as follows. Sec.~\ref{sec:mixed-bootstrap} formulates the mixed state bootstrap, including the axioms and definition of fixed point. We introduce the information convex set and its connection to memory capacity, relating its bound to topological entropy. Sec.~\ref{sec:examples} presents explicit fixed point examples and analyzes their classical and quantum memories. For a class of classical fixed points, we discuss the method to prepare them through a local quantum channel. Sec.~\ref{sec:MS-phases} defines mixed state topological phases through coarse-graining. We provide extensive numerical results demonstrating that axioms hold away from fixed points under coarse-graining. We also introduce a refinement of a previous approach based on two-way channel connectivity.  
Sec.~\ref{sec:Secret-sharing} discusses a multipartite secret-sharing property unique to certain non-Abelian mixed-state fixed points, exposing a hierarchy of topological \emph{classical} memories.   
Sec.~\ref{sec:3D} extends the framework to higher dimensions, and Sec.~\ref{sec:conclusion} summarizes with future directions.

\section{Mixed state bootstrap}\label{sec:mixed-bootstrap}

\begin{figure}[!t]
     \centering   \includegraphics[width=1\linewidth]{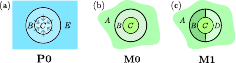}
     \caption{ {\bf Bootstrap axioms}. Given a reference state $\sigma$, we impose axioms (a) {\bf P0}, (b) {\bf M0}, and (c) {\bf M1} on it. The axiom {\bf P0} requires the entire system, while keeping $BC$ local. The axioms {\bf P0} and {\bf P1} are stated on local patches indicated by green regions.}
     \label{fig:axioms}    
\end{figure}

\subsection{Axioms}

We study bosonic lattice systems whose total Hilbert space factorizes into a tensor product of finite-dimensional local Hilbert spaces. To isolate intrinsically topological features, we assume the absence of any global symmetry.

Our goal is to formulate mixed-state bootstrap axioms---the analog of the pure-state bootstrap for wavefunctions~\cite{Shi2019fusion}---for fixed points of topological mixed states.

The relevant information-theoretic measures are the mutual information $I(A{\,:\,}C)$ and conditional mutual information $I(A{\,:\,}C|B)$ defined as
\begin{align}
    I(A{\,:\,}C)_\rho &= S(\rho_A) + S(\rho_C) - S(\rho_{AC}) \\
    I(A{\,:\,}C|B)_\rho &= I(A{\,:\,}BC)_\rho - I(A{\,:\,}B)_\rho,
\end{align}
where $S(\rho):=-\Tr(\rho \log \rho)$ is the von Neumann entropy of density matrix $\rho$.  
Given a physical system, consider a large ball partitioned in a way described in \figref{fig:axioms}. A density matrix $\sigma$ is called a fixed point of topological mixed states if it satisfies the following axioms:

\begin{axiombox}
\begin{axiom}[{\bf P0}: Local recoverability] 
    $I(E{\,:\,}C|B)_\sigma =0$ for \figref{fig:axioms}(a), where $E$ is the complement of $BC$. The condition is the saturation of strong subadditivity~\cite{lieb1973proof}, which is the condition for the existence of a map that only acts on $B$ to recover $C$ when $C$ is corrupted or removed~\cite{PETZ_2003,Fawzi2014,Junge2015}, i.e.,  $\rho_{BCE}= \calE_{B\to BC}(\rho_{BE}$) for some quantum channel $\calE_{B\to BC}$.
    The condition enforces the state to have a zero  Markov length mixed state.  
\end{axiom}    
\end{axiombox}

{\bf P0} implies an important aspect of a topological mixed state: the non-topological (local) structure of the underlying mixed state can always be recovered locally, and thus local behavior should not contribute toward the properties we care about. 
This requirement is consistent with the fact that $I(E:C|B)$ diverges at a critical point separating two mixed state phases~\cite{Sang2024}. By imposing the axiom, we focus our attention on fixed points located far away from the phase boundary.

Furthermore, this condition enforces the absence of the novel type of spontaneous symmetry breaking that can occur in the mixed state systems~\cite{Lee_2023, StW_U(1)}, which was recently dubbed as strong to weak spontaneous symmetry breaking (SWSSB)~\cite{sala2024spontaneousstrongsymmetrybreaking, lessa2024strongtoweakspontaneoussymmetrybreaking, gu2024spontaneoussymmetrybreakingopen, kim2024errorthresholdsykcodes, zhang2025strongtoweakspontaneousbreaking1form}. Analogous to an ordinary spontaneous symmetry breaking, 
in SWSSB a quantum channel that breaks a strong symmetry to a weak one \cite{Bu_a_2012, PhysRevA.89.022118}---whether applied either infinitesimally across the entire system or finitely on a local patch---thermalizes the conserved-charge sector and induces a global change in the density matrix state, behavior incompatible with {\bf P0}. More details for the motivation of this axiom is explained in Appendix~\ref{app:info-robustness}.

Formally, if we consider a channel $\calE_C$ acting on a local patch $C$, the condition together with the monotonicity of the conditional mutual information implies that $0 = I(E:C|B)_{\rho} \geq I(E:C|B)_{\calE_C[\rho]} = 0$. This, in turn, implies that the difference between the original and perturbed density matrices of the entire system, captured by the relative entropy~\cite{PETZ_2003}, is given as
\begin{align}
    D(\rho_{EBC} \Vert \calE_C[\rho_{EBC}]) = D(\rho_{BC} \Vert \calE_C[\rho_{BC}]),
\end{align}
which is only a function of the compact region $BC$.
Accordingly, with {\bf P0}, the local channel acting on $C$ cannot induce a change that is intrinsically global in the density matrix. More details can be found in Appendix~\ref{app:relative-S}.

The next two axioms can be stated on local patches of the density matrix. 
\begin{axiombox}
    \begin{axiom}[{\bf M0}: No long-range order] 
    $I(A{\,:\,}C)_\sigma =0$ for \figref{fig:axioms}(b), where $A$ is the complement of $BC$ on the local patch. This condition implies that there is neither classical nor quantum correlation between separated regions.  
    \end{axiom} 
\end{axiombox}

{\bf M0} is physically natural: it eliminates states that exhibit long-range order, such as those with spontaneous symmetry breaking.
Furthermore, the condition precludes some pathological situations that arises in a mixed state setup, such as the classical mixture $\rho = \sum_i p_i \rho_i$ with $p_i > 0$, where each $\rho_i$ realizes a different topological phase. Therefore, through {\bf P0} and {\bf M0}, we can exclude pathological density matrices.

\begin{axiombox}
    \begin{axiom}[{\bf M1}: Uniformity and Smoothness] 
    $I(A{\,:\,}C|B)_\sigma \,{=}\,0$ for \figref{fig:axioms}(c), where $A$ is the complement of $BCD$ on the local patch. The condition excludes domain walls between two different phases, which can satisfy both {\bf P0} and {\bf M0}. Thus, it should be included if we wish to say the given state is uniform enough to represent a single mixed-state phase. Without it, one risks including pathological cases where two different phases are simply juxtaposed.  
\end{axiom}
\end{axiombox}

{\bf M1} is quite subtle, yet extremely important. 
For pure states, it implies the absence of defects in the wavefunction, which allows one to discuss the physics from the topology of the manifold rather than local details. In other words, the axiom requires a \emph{smoothness} of the fabric of many-body correlation or entanglement, the quintessential premise when it comes to the discussion of topological states.  
On the other hand, a finite-depth local channel can violate {\bf M1} since channels can break the entanglement. As we shall see, the violation of {\bf M1} also implies that the topological entropy (TE) can change abruptly.

One potential concern is that highly fine-tuned finite-depth local unitaries (FDLUs) can generate \emph{spurious} entanglement that violates {\bf M1} on specially chosen partitions~\cite{Zou2016,Santos2018,Williamson2019,Stephen2019,Kato2020}. Two points address this. $(i)$ A generic FDLU does not violate {\bf M1} on sufficiently large local patches. $(ii)$ Even when such violations occur, they do not obstruct phase identification: we pass to a \emph{clean} representative obtained by a shallow-depth local unitary that removes the axiom violations and then evaluate phase diagnostics on the clean state. A soft-invariant theorem~\cite{Bowen2025} ensures that, within a given phase, all clean representative states related by FDLUs yield identical diagnostic values. In particular, the Levin-Wen topological entropy $\gamma_{\textrm{LW}}$ of any clean representative is a global minimum over its FDLU orbit~\cite{Kim2023-TEE-bound}. Hence, for any two states $\rho_0$ and $\rho_1$ in the same phase, their clean counterparts $\rho'_0$ and $\rho'_1$ have the same topological entropy $\gamma_{\textrm{LW}}(\rho'_0)\,{=}\,\gamma_{\textrm{LW}}(\rho'_1)$; for relevant discussions, see \secref{Sec:robust-TE}. Therefore, spurious contributions do not impede a well-defined notion of mixed-state phases.

\vspace{5pt} It is worth comparing these mixed state axioms with those of the ground-state entanglement bootstrap~\cite{Shi2019fusion}, where we have two axioms {\bf A0}: $S(BC) + S(C) - S(B)\,{=}\,0$ and {\bf A1}: $S(BC) + S(CD) - S(B) - S(D)\,{=}\,0$, for the partition of a ball into $BC$ or $BCD$ as shown in Fig.~\ref{fig:axioms}(b) and (c) respectively. The first two axioms {\bf P0} and {\bf M0} follow from {\bf A0}~\footnote{Using purification and strong subadditivity, one can prove that $S(BC) + S(C) - S(B) \geq I(A\,{:}\,C), I(A\,{:}\,C|B)$.}, and {\bf M1} follows from {\bf A1}. However, these mixed-state conditions are strictly weaker because they may hold on volume law states, whereas the pure-state bootstrap axioms imply an area law.

Furthermore, these three axioms are independent: 
$(i)$ {\bf P0} and {\bf M0} together do not imply {\bf M1}, e.g. a state with non-Abelian anyons located at $C$. 
This is because adding an anyon $a$ to a disk will increase the von Neumann entropy of the disk by $\log d_a$ \cite{KitaevPreskill_2006}, where $d_a$ is the quantum dimension of $a$. Suppose we create an anyon pair with a unitary string operator, one anyon in $A$ and another in $C$. Then the violation of {\bf M1} will be $2 \log d_a$ because $S_{AB}$ and $S_{BC}$ are increased by $\log d_a$ each without affecting $S_B$ and $S_{ABC}=S_D$. Note that only non-Abelian anyons give a violation, because $d_a=1$ for Abelian anyons.
$(ii)$ {\bf P0} and {\bf M1} together do not imply {\bf M0} by the example of an incoherent mixture of distinct SSB states, such as $\rho = \frac{1}{2}( |\bar{0} \rangle \langle \bar{0} | + |\bar{1} \rangle \langle \bar{1} |)$, where $|\bar{a} \rangle = |a \rangle^{\otimes N}$. $(iii)$ {\bf M0} and {\bf M1} together do not imply {\bf P0} by the example of is the state $(1+\prod_{i=1}^N X_i)/2^{N}$. This state reduces to the maximal mixed state after tracing out one site.

These mixed-state axioms are naturally satisfied by pure states satisfying pure-state axioms, and thus pure-state fixed points are naturally incorporated into our framework. Furthermore, volume-law entangled decohered states can satisfy mixed-state axioms, as we will see in the example section (Sec.~\ref{sec:examples}). More physically, we allow small errors to the axioms as long as the errors decay as we zoom out to larger length scales. For a more careful discussion, see Sec.~\ref{sec:MS-phases}.

\subsection{Information convex set}

Given two density matrices $\rho_1$ and $\rho_2$ satisfying mixed-state axioms, how can we distinguish them?
The first step is to characterize their information-theoretic properties, which can be achieved by investigating the structure of an information convex set, a set of mixed states locally equivalent to each density matrix satisfying the axioms. 
Information convex sets can be defined on the entire system as well as on pieces~\cite{Shi2019fusion,Shi2020domainwall,Huang2021knots} (including embedded and locally embedded regions \cite{Shi2024-figure8}). In this section, we focus on closed manifolds.
For pure states, the information convex set of a closed manifold simply corresponds to the ground state manifold for the corresponding parent Hamiltonian~\cite{Bravyi2010,Shi2019ICS,strict-H-2024}. For mixed states, the information convex set corresponds to the set of mixed states that can be prepared in a similar manner yet encode different information.

\begin{figure}[!t]
     \centering   \includegraphics[width=1\linewidth]{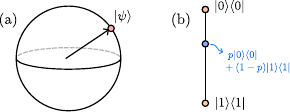} 
     \caption{ {\bf Information convex set}. For topological mixed states encoding (a) a single qubit (quantum memory) and (b) a single bit (classical memory). While quantum memories reside on the surface of the Bloch sphere, classical memories reside in the linear combination of extreme points. Mathematically, the convex sets in (a) is the convex hull $ \conv (\{|\psi\rangle \langle \psi|\, |\, |\psi\rangle \in \mathcal{H}\})$, and the convex set in (b) is $\conv(\{|0\rangle \langle 0|, |1\rangle \langle 1|\})$.} 
     \label{fig:Bloch_Polytope}    
\end{figure}

\begin{definition}[Information convex set, closed manifold] \label{def:ICS-mixed-closed-M}
    Given a reference state $\sigma$ satisfying mixed-state bootstrap axioms, its information convex set on the closed manifold $M$ is defined as
 \begin{equation} \label{eq:ICS}
     \Sigma(M|\sigma) = \{ \rho_M \,| \, (1), (2) \}, \quad \text{where}
 \end{equation}
 \begin{itemize}[leftmargin=15pt]
     \item [(1)] \emph{Local indistinguishability}. For any bounded radius ball $b\subset M$, we require $\rho_{b}\,{=}\,\sigma_b$.  
     \item [(2)] \emph{Local recoverability}. {\bf P0} is satisfied for $\rho_M$.
 \end{itemize}
These two conditions imply that distinct elements of the information convex set cannot be converted into each other through local operations. Furthermore,  
$\rho \in \Sigma$ must satisfy {\bf M0} and {\bf M1} because the state is identical to the reference state locally.
This indicates the topological nature of the information that we can encode using states in the information convex set.  
 The proof that $\Sigma(M|\sigma)$ is convex can be found in appendix~\ref{app:convex}. An explanation of information robustness can be found in appendix~\ref{app:info-robustness}.  
 \end{definition}

In the study of mixed states, the generalization of local perturbations or unitary transformations is a local quantum channel.
\begin{definition}[Finite depth local channel]
    A quantum channel $\mathcal{E}$ acting on a lattice system is called a finite-depth local channel if it can be written as a finite composition of layers of local quantum channels, each consisting of commuting, spatially disjoint operations. More precisely, $\mathcal{E}$ is said to be of depth $D$ if \begin{align} 
        \mathcal{E} = \mathcal{E}^{(D)} \circ \mathcal{E}^{(D-1)} \circ \cdots \circ \mathcal{E}^{(1)}, 
    \end{align} 
    where each $\mathcal{E}^{(j)} = \bigotimes_{x \in \Lambda_j} \mathcal{E}_{x}^{(j)}$ is a layer of local channels supported on disjoint regions (e.g., finite balls around lattice sites), and $\Lambda_j$ is a subset of lattice sites such that the supports of $\mathcal{E}_{x}^{(j)}$ and $\mathcal{E}_{x'}^{(j)}$ are disjoint for $x \neq x'$. See \figref{fig:channel} for illustration.
\end{definition}

Finite-depth local channels (FDLCs) provide a physically motivated model for local quantum processes, such as interactions with an environment, local noise, and decoherence protocols. Since local channels can \emph{break} entanglement~\cite{Horodecki_2003}, the action of an FDLC can, though need not, drive a state toward classicality. From the perspective of state preparation, FDLCs correspond to a restricted set of operations: they allow only local operations and local classical communication. This is more stringent than the standard LOCC (local operations and classical communication) framework, which permits unrestricted classical communication and plays a central role in entanglement theory. In this regard, we introduce the following notion:

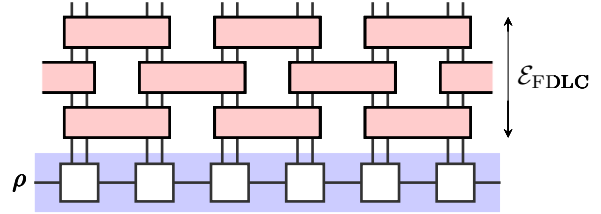
\begin{figure}[!t]
\centering
\begin{tikzpicture}
\definecolor{myred}{RGB}{240,83,90};
\definecolor{myblue}{RGB}{73,103,189};
\definecolor{myturquoise}{RGB}{83,195,189};
\definecolor{mydo}{RGB}{204, 204, 255};
\definecolor{mychannel}{RGB}{255, 204, 204};
%panel b
\filldraw[fill=mydo, draw=white] (-0.6, -3.1) rectangle (5.6, -2.3);
\draw[black!80, line width=1.0] (-0.6,-2.7) -- (5.6,-2.7);
\foreach \x in {0,1,...,5}{
\draw[black!80, line width=1.0] (\x+0.1,-2.5) -- (\x+0.1, -0.3);
\draw[black!80, line width=1.0] (\x-0.1,-2.5) -- (\x-0.1, -0.3);
\filldraw[fill=white, draw=black!80, line width=1.0] (\x-0.25,-2.95) rectangle (\x+0.25,-2.45);
%channels
\foreach \x in {0,2,4}{
\filldraw[fill=white] (\x-0.2,-1.7) rectangle (\x+1.2,-2.1);
\filldraw[fill=mychannel, draw=black, line width=1] (\x-0.2,-1.7) rectangle (\x+1.2,-2.1);
\filldraw[fill=white] (\x-0.2,-0.5) rectangle (\x+1.2,-0.9);
\filldraw[fill=mychannel, draw=black, line width=1] (\x-0.2,-0.5) rectangle (\x+1.2,-0.9);
}
\foreach \x in {-1,1,3,5}{
\filldraw[fill=white] (\x-0.2,-1.5) rectangle (\x+1.2,-1.1);
\filldraw[fill=mychannel, draw=black, line width=1] (\x-0.2,-1.5) rectangle (\x+1.2,-1.1);
}
\filldraw[fill=white, draw=white] (-1.3, -1.6) rectangle (-0.5, -0.7);
\filldraw[fill=white, draw=white] (5.5, -1.6) rectangle (6.3, -0.7);
\node[black] at (-0.8, -2.7) {$\rho$};
\draw[<->,>=stealth] (5.7,-2.1) -- (5.7,-1.3) node[right]{${\cal E}_{\textrm{FDLC}}$} -- (5.7,-0.5);
}
\end{tikzpicture}
\caption{{\bf Finite depth local channel (FDLC)}. 
For each site, we have two physical indices corresponding to bra and ket. The red block represents a local quantum channel acting on two sites. The entire channel has a finite depth of three. }
\label{fig:channel}
\end{figure}

\begin{definition}[Classically preparable]
Consider a state $\sigma$ such that  ${\calE}(\rho_0) = \sigma$ where ${\calE}$ is a FDLC. If $\rho_0$ is a classical mixture of product states, i.e., 
\begin{align}
    \rho_0 = \sum_a p_a (\otimes_i \rho_i^a),
\end{align}
then $\sigma$ is classically preparable. Here the probability $\{ p_a \}$ mediates classical correlations.  \end{definition}

Any product state $\rho = \otimes_i \rho_i$ is a classically preparable state, where the channel $\calE$ is given as the tensor product of an on-site channel $\calE_i$ that resets to a specific local density matrix $\rho_i$. 
On the other hand, the toric code ground state is not classically preparable because starting from a product state, one has to perform local measurements and non-local communication for error corrections.

\subsection{Memory capacity and its bound by the topological entropy}

How much classical or quantum information can be stored in a closed manifold $M$? In the context of pure state topological order, this is measured by the dimension of the degenerate ground state.
To quantify the ability to store (classical or quantum) topological information of the system $M$, we define the \emph{memory capacity} of a mixed state $\sigma$ as

\begin{equation}\label{memory_cap}
    \calQ(M|\sigma) \equiv \hspace{-5pt} \max_{\rho,\lambda \in \Sigma(M|\sigma)} \hspace{-5pt} S(\rho) - S(\lambda).
\end{equation}
Since two density matrices from the same information convex set are locally indistinguishable, the entropy difference can only arise from non-local topological information. The memory capacity of the sphere $S^2$ is always zero, as we show in Appendix~\ref{app:S2}. Thus, we take the torus $M=\mathbb{T}^2$, as the minimal example.

To better connect this idea with a well-known quantity, we consider the topological entropy defined by
\begin{equation} \label{eq:TE}
  \gamma_{\textrm{LW}} \equiv  \frac{1}{2} I(A\,{:}\,C|B)_\sigma \quad \text{for} \quad \vcenter{\hbox{\includegraphics[width=0.08\textwidth]{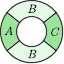}}} 
\end{equation}
where the annulus $ABC$ is embedded in a disk. Since this topological contribution can be non-zero even for entirely separable mixed states, we call this more general quantity topological entropy (TE). 
For pure topological states, $\gamma_{\textrm{LW}}$ is exactly the Levin-Wen topological entanglement entropy~\cite{LevinWen2006}~\footnote{In the context of the mixed-state phase, the Levin-Wen topological entropy is well-defined due to axiom {\bf M1}. On the other hand, the Kitaev-Preskill topological entanglement entropy~\cite{KitaevPreskill_2006} stops being a topological invairant in the mixed-state context; seeAppendix~\ref{app:non-topo-invariant}}.  

\begin{Proposition}[torus capacity]\label{prop:torus-capacity}
    The capacity of a torus $M=\mathbb{T}^2$ is upper bounded by four times the Levin-Wen topological entropy as
    \begin{equation}\label{eq:bound-mixed}
        \calQ(\mathbb{T}^2) \le 4 \gamma_{\textrm{LW}}.
    \end{equation}
\end{Proposition}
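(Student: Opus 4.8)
The plan is to bound the entropy difference $S(\rho)-S(\lambda)$ for any two states $\rho,\lambda\in\Sigma(\mathbb{T}^2|\sigma)$ by cutting the torus into a small number of simple pieces on which the two states are locally indistinguishable, and then controlling the error terms in subadditivity by topological entropies. Concretely, I would first choose a covering of $\mathbb{T}^2$ by regions such that each region, and each overlap, is a disk or an annulus on which {\bf M1} (and hence the Levin-Wen TE formula) applies. The natural choice is to write the torus as the union of two ``thickened cycles'' $X$ and $Y$ (one around each noncontractible loop) whose intersection $X\cap Y$ is a disjoint union of two disks, and whose union is all of $\mathbb{T}^2$; alternatively one can use a handle decomposition with a single handle. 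On such a decomposition, inclusion–exclusion for entropy gives $S(\mathbb{T}^2)=S(X)+S(Y)-S(X\cap Y)+(\text{correction})$, where the correction is a sum of conditional mutual informations that vanish exactly when the relevant regions satisfy {\bf P0}/{\bf M0}/{\bf M1}, and in general is bounded by a fixed multiple of $\gamma_{\textrm{LW}}$.

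The key point is that $X$, $Y$, and $X\cap Y$ are all \emph{proper} subregions (disks or annuli), so by \emph{local indistinguishability} (condition (1) in Definition~\ref{def:ICS-mixed-closed-M}) we have $\rho_X=\lambda_X$, $\rho_Y=\lambda_Y$, $\rho_{X\cap Y}=\lambda_{X\cap Y}$, provided the decomposition is chosen at a fixed bounded length scale independent of the global state. Hence $S(\rho_X)=S(\lambda_X)$ and likewise for $Y$ and $X\cap Y$. The only place where $\rho$ and $\lambda$ can differ is in the correction terms, i.e. in the conditional mutual informations across the ``seams'' of the decomposition. Each such seam term is of the Levin-Wen type $I(A{\,:\,}C|B)$ on an annulus embedded in a disk — this is where axiom {\bf M1} and local indistinguishability let me identify it with $2\gamma_{\textrm{LW}}$ for \emph{both} $\rho$ and $\lambda$ — except for the finitely many seams that wrap a noncontractible cycle, where the CMI can take a state-dependent value between $0$ and its maximum $2\gamma_{\textrm{LW}}$ (the maximum being the statement that a noncontractible annulus carries at most twice the TE, cf.\ the merging/extension lemmas of the entanglement-bootstrap formalism). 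Counting: the torus decomposition has two noncontractible seams, each contributing at most $2\gamma_{\textrm{LW}}$ to the spread of $S(\rho)-S(\lambda)$, giving the factor $4$.

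Assembling the estimate: $S(\rho)-S(\lambda) = \big[S(\rho_X)+S(\rho_Y)-S(\rho_{X\cap Y}) - S(\lambda_X)-S(\lambda_Y)+S(\lambda_{X\cap Y})\big] + \big[\text{corr}(\rho)-\text{corr}(\lambda)\big]$; the first bracket is zero by local indistinguishability, and the second is bounded in absolute value by $4\gamma_{\textrm{LW}}$ since each of the two noncontractible-seam CMIs lies in $[0,2\gamma_{\textrm{LW}}]$ and the contractible ones are equal for $\rho$ and $\lambda$. Maximizing over $\rho,\lambda$ yields $\calQ(\mathbb{T}^2)\le 4\gamma_{\textrm{LW}}$. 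I expect the main obstacle to be the second half of step two: proving cleanly that a noncontractible annulus (or the conditional mutual information across a noncontractible seam) is bounded above by $2\gamma_{\textrm{LW}}$ in the mixed-state setting. In the pure-state bootstrap this is an ``annulus information convex set'' statement that follows from merging lemmas and the structure of $\Sigma(\text{annulus})$; here one must redo it using only {\bf P0}–{\bf M1} and the fact that these axioms are stable under the coarse-grained setting, and in particular verify that the recoverability map from {\bf P0} does not generate extra entropy across the seam. A secondary technical point is ensuring the chosen torus decomposition has all pieces and overlaps simultaneously at a bounded scale and of the right topology (disks/annuli embedded in disks) so that {\bf M1} and local indistinguishability both apply; this is routine but must be stated carefully.
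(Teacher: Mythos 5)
There is a genuine gap, and it sits exactly where your argument does its real work. You assert that for the pieces $X$, $Y$ of your decomposition (thickened noncontractible cycles, i.e.\ noncontractible annuli) local indistinguishability gives $\rho_X=\lambda_X$ for all $\rho,\lambda\in\Sigma(\mathbb{T}^2|\sigma)$. This is false: Definition~\ref{def:ICS-mixed-closed-M} together with {\bf M1} forces agreement only on \emph{disk-like} regions (Lemma~\ref{lemma:Sigma(Disk)}); on a noncontractible annulus the reduced states of distinct elements of $\Sigma(\mathbb{T}^2|\sigma)$ genuinely differ --- that is precisely where the topological memory lives (e.g.\ for the dephased toric code the logical $\bar X$ string sits inside such an annulus and distinguishes the extreme points of Prop.~\ref{prop:convexset}). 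A quick sanity check exposes the problem: with $A=X\setminus Y$, $B=X\cap Y$, $C=Y\setminus X$ the exact identity is $S_{ABC}=S_{AB}+S_{BC}-S_B-I(A{:}C|B)$, so if all of $S_{AB},S_{BC},S_B$ were state-independent and the single correction $I(A{:}C|B)$ lay in $[0,2\gamma_{\textrm{LW}}]$, you would conclude $\calQ(\mathbb{T}^2)\le 2\gamma_{\textrm{LW}}$, which contradicts the $\mathbb{Z}_n$ example saturating $4\gamma_{\textrm{LW}}$. (Your count of ``two noncontractible seams'' is not consistent with your own inclusion--exclusion, which produces one seam term; and as a side remark, two annuli covering $\mathbb{T}^2$ cannot intersect in two disks, by Euler characteristic.) So the factor of $4$ in your accounting is not actually derived. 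You also correctly flag, but do not supply, the other essential ingredient: an upper bound of $2\gamma_{\textrm{LW}}$ on the relevant conditional mutual information for states in $\Sigma$.

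The paper's proof (Appendix~\ref{app:bound-C-TEE}) repairs both issues by never claiming state-independence on noncontractible regions. It tracks the entropy \emph{difference} $S(\rho_\Omega)-S(\lambda_\Omega)$ as $\Omega$ is built up from a disk (where the difference vanishes by Lemma~\ref{lemma:Sigma(Disk)}) by reattaching $n(\mathbb{T}^2)=2$ handles plus a ball; the ball step preserves the difference by {\bf P0}, and each handle step changes it by $I(A{:}C|B)_\lambda-I(A{:}C|B)_\rho\le I(A{:}C|B)_\lambda\le 2\gamma_{\textrm{LW}}$, where the last inequality is Lemma~\ref{lemma:LW}: for a ball $BCD$ with complement $A$ in $M$, $I(A{:}C|B)_\rho=2\gamma_{\textrm{LW}}$ exactly for every $\rho\in\Sigma(M|\sigma)$, proved by sandwiching with strong subadditivity from below and the chain rule plus axiom {\bf M1} from above. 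Two handles times $2\gamma_{\textrm{LW}}$ gives the factor $4$. If you want to rescue your version, you should replace ``the pieces have equal reduced states'' by ``the entropy differences on the pieces telescope,'' and prove the seam bound via the Lemma~\ref{lemma:LW} mechanism rather than importing the pure-state annulus ICS machinery.
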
 
The proof of this (and a more powerful one on any closed manifold) will be given in Appendix~\ref{app:bound-C-TEE}. We will see examples of mixed state fixed points saturating this bound in Sec.~\ref{sec:examples}. The bound \eqref{eq:bound-mixed} should be contrasted with a bound by Kim~\cite{Kim2013storage} for pure state topological order, which says $\calQ(M) \le 2\gamma_{\textrm{LW}}$. The violation of the pure-state bound for some mixed-state fixed points originates due to the absence of the non-degeneracy braiding statistics~\cite{Braiding2023}, which is a quantum mechanical effect. 

\subsection{Comparison to pure-state bootstrap}

The bootstrap axioms developed in this section encompass topological mixed and pure states. While three axioms {\bf P0}, {\bf M0}, and {\bf M1} are strictly weaker than {\bf A0} and {\bf A1} introduced in \cite{Shi2019fusion}, many of the intuitions and machinery carry over. 
\begin{itemize}[leftmargin=13pt]
    \item {\bf Fixed point interpretation}. These axioms, defined on (or centered at) local balls, imply that the same conditions hold true on larger length scales by strong subadditivity~\cite{lieb1973proof,PETZ_2003}. Think about coarse-graining as the renormalization group (RG) flowing to large length scales. This intuition agrees with the fixed point interpretation of such states.
    
    \item {\bf Uniformity}. The axiom {\bf M1} (which is an analog of {\bf A1}) implies an emergent sense of uniformity, which is, in the deep level related to the isomorphism property of the information convex set under a smooth deformation of region, as we review in Appendix~\ref{app:ICS-Omega}.

    \item {\bf Topological entropy}. For both cases, the Levin-Wen topological entropy can be shown to be invariant under the deformation of regions. 
\end{itemize}

However, there are several distinctions.
\begin{itemize}[leftmargin=13pt]
\item The pure state axioms \emph{guarantee} the information convex set on a closed manifold to host pure quantum states on some Hilbert space.  On the other hand, the mixed state axioms do not; in particular, the mixed state axioms allow the information convex set to host classical states as illustrated in \figref{fig:Bloch_Polytope}(b), implying that there exists \emph{classical topological information}. See Prop.~\ref{prop:convexset} for a concrete example.

\item When the pure state axioms are satisfied, both the Levin-Wen and Kitaev-Preskill topological entanglement entropy can be shown to be quantized \cite{Shi2019fusion} according to the fusion rules.  
In contrast, we do not yet know if Levin-Wen TE is quantized following the mixed state axioms. Furthermore, the Kitaev-Preskill topological entropy is no longer a topological invariant for mixed states fixed points; see Appendix~\ref{app:non-topo-invariant}.  

\item The braiding non-degeneracy~\footnote{It means that the only anyon that has trivial braiding statistics (`transparent') with all non-trivial anyons is the vacuum.} of anyon systems can be derived under the pure state axioms \cite{Braiding2023,Shi2019Verlinde}. However, when the mixed state axioms are satisfied, braiding may become degenerate, as happens in examples; see Ref.~\cite{Ellison2025}.
\end{itemize}

\section{Examples of fixed points}\label{sec:examples}

In this section, we identify various solvable reference states satisfying axioms and calculate the corresponding information convex set $\Sigma$, which demonstrates their classical/quantum memories.  
 A useful way to show whether the state satisfies axioms is by showing the existence of recovery channel ${\cal R}: B({\cal H}_B) \mapsto B({\cal H}_{BC})$ such that $(\textrm{Id}_A \otimes {\cal R})[\rho_{AB}] = \rho_{ABC}$, which is equivalent to the condition $I(A\,{:}\,C|B)\,{=}\,0$~\cite{PETZ_2003}. Here $B(\calH)$ is a set of bounded operators acting on $\calH$.

In general, showing the existence of a Petz recovery map can be cumbersome. But for stabilizer states, we have a simple argument. Consider a stabilizer state $\rho_B = \Tr_C \rho_{BC}$ on some region $B$:
\begin{equation} 
    \rho_B = \prod_{S_i \in \calS_B} \frac{1+S_i}{2}, 
\end{equation} 
where $\calS_B\,{=}\,\{S_i \}$ is a set of stabilizer generators supported on $B$. Assume $\rho_{BC}$ supports additional stabilizer generators $\{ \bar{S}_j \}$ that commute with $\{ S_i \}$. We can have a recovery map as follows. For simplicity, assume we have a single extra stabilizer $\bar{S}$ on $BC$, where its eigenvalue is $1$. Let $\calP_{\bar{S}}^\pm$ be the projector onto the eigenspace of $\bar{S}$ with eigenvalue $\pm 1$. Then the recovery channel $\calR$ is given as: 
\begin{align} \label{eq:recoverychannel}
    \calR[\rho_B] &= {\cal P}_{\bar{S}}^+(\rho_B \otimes \Id_C) {\cal P}_{\bar{S}}^+ + U^\vdagger {\cal P}_{\bar{S}}^- (\rho_B\otimes\Id_C) {\cal P}_{\bar{S}}^- U^{\dagger} \nonumber \\ 
    & = \frac{1+{\bar{S}}}{2}(\rho_B \otimes \Id_C ) = \rho_{BC},
\end{align}
where $U^\vdagger$ is some unitary acting on extra qubits such that $U^\vdagger {\bar{S}} U^\dagger = -{\bar{S}}$. It is straightforward that its Kraus representation satisfies the normalization condition. Even if we have multiple commuting stabilizers on $BC$, it should always be possible to apply a certain unitary that makes their signs correct since they are independent.
Finally, if $\rho_{ABC}$ does not contain any stabilizer that extends across $A$ and $C$, then this recovery map $\calR$ would recover $\rho_{ABC}$ from $\rho_{BC}$.

\begin{exmp}[Product state]
Consider a generic product state $\rho = \otimes_i \rho_i$; here $\rho_i$ can be either a pure or mixed local state. In this case, the subadditivity is saturated and we have $S(\rho_{AB})\,{=}\,S(\rho_A) + S(\rho_B)$. Accordingly, all mutual information vanish and the three axioms are satisfied. Maximally mixed states with volume-law entanglement belong to this scenario. Because mixed states have zero topological entropy $\gamma_{\textrm{LW}}=0$, the bound in Eq.~\eqref{eq:bound-mixed} saturates, and the memory capacity $\calQ(M|\rho)=0$ for any closed manifold $M$. This further implies that the information convex set $\Sigma(M|\rho)=\{\rho\}$, i.e. no information can be stored.  
\end{exmp}

\begin{exmp}[Ground states of topological order]
It is well-known that the ground state of nonchiral topological orders has fixed points that exactly satisfy the axioms. Such representative wave functions can be the string-net wave functions \cite{Levin2005,LevinWen2006}.
In particular, on the torus, $\Sigma(\mathbb{T}^2|\sigma)$ is the state space of a finite-dimensional Hilbert space, where the Hilbert space dimension equals the number of anyons of the system. \begin{equation}\label{eq:gs-ICS}
    \Sigma(\mathbb{T}^2|\sigma) = \conv \big\{ |\psi\rangle\langle \psi| \,| \,|\psi\rangle  = \sum_a c_a |\psi_a\rangle \big\},
\end{equation}
where $|\psi_a\rangle $ are minimal entangled states \cite{Zhang2012} on the torus, in one-to-one correspondence with anyon types, and $\sum_a |c_a|^2 =1$. One such model is the ground state of the toric code \eqref{eq:t.c.}, for which $\Sigma(\mathbb{T}^2|\sigma)$ is identical to the 4-dimensional Hilbert space. Ground state of chiral topological orders can be treated as fixed points in the same manner, if we are allowed to neglect finite size errors of the area law that decays with subsystem sizes~\cite{confrmal2025,strict-J-2024,2024chiral-Vira}.
\end{exmp}

\begin{exmp}[Dephased toric code]\label{exmp:toric-code-fixed-point}
The discussion here immediately generalizes to $\mathbb{Z}_n$ toric code. 
For the toric code, the density matrix for maximally mixed logical states is given as
\begin{align}\label{eq:t.c.}
    A^{\bm{1}}_v &:= (1+\prod_{e \ni v} X_e)/2 \\
    B_p^{\bm{1}} &:= (1+\prod_{e \in p} Z_e)/2 \\
    \rho^{\rm tc} &= \prod_v A^{\bm{1}}_v \prod_p B^{\bm{1}}_p.
\end{align}
Consider a mixed state dephased under maximal $Z$ error (note that the other option is mathematically equivalent due to $e\leftrightarrow m$ exchange duality):
\begin{equation}\label{eq:rho^e-T.C.}
    \rho^{e} = \prod_v A^{\bm{1}}_v.
\end{equation}
This model has been considered in the literature of mixed state phases~\cite{fan2023diagnostics, Lee_2023, bao2023mixedstate, Wang_intrinsicmixedstateTO_2023, chen2023separability, Sohal2024}. We show that the state satisfies {\bf P0}, {\bf M0}, and {\bf M1} and thus it is a valid fixed-point under the bootstrap axioms.
The regions $A,B,C$ (and $D$) of interest are recalled in Fig.~\ref{fig:toric-code-cases}.  

\begin{figure}[!h]
    \centering
    \includegraphics[width=0.99\linewidth]{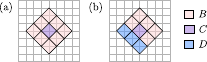}
    \caption{Cases to check for (a) {\bf M0} and {\bf P0}, and (b) {\bf M1}. In both cases, a local region $C$ is separated from the rest by an annulus ($B$ or $BD$).} 
    \label{fig:toric-code-cases}
\end{figure}

\vspace{5pt} \noindent (1) Proof of {\bf M0}: Consider tripartite region $ABC$ such that $B$ separates $A$ and $C$. Here a region $B$ can be denoted using a collection of vertices and qubits supported on the edges connecting within this collection. Then, we have
\begin{equation} \label{eq:traceout}
    \rho_{AC} = \hspace{-13pt} \prod_{\{v|{N(v)\notin B}\}} \hspace{-13pt}  A^{\bm{1}}_v\left(\frac{1+\bm{X}_{\partial B_A} }{2}\right) 
    \left(\frac{1+\bm{X}_{\partial B_C} }{2}\right), 
\end{equation}
where $N(v) = \left\{e\,|\, e \ni v \right\}$, and $\partial B$ denotes a set of edge qubits whose one end belongs to $B$ but not the other end; note that $\partial B$ decomposes into two disjoint sets $\partial B_A$ and $\partial B_C$ each of which is supported on $A$ and $C$, separately.
Note that the last factor on $\bm{X}_{\partial B_C}$ is redundant as it can be absorbed into $\prod A_v^{\bm{1}}$. $\rho_{AC}$ decomposes as $\rho_{AC} = \rho_A\otimes \rho_C$, implying $I(A\,{:}\,C)\,{=}\,0$.  

\vspace{5pt} \noindent (2) Proof of {\bf P0}: Note that if the region $C$ in \figref{fig:toric-code-cases}(a) is traced out, we lose stabilizers that share links with $C$ but keep only the product of them which is entirely supported on $B$. Now, using the aforementioned strategy in \eqnref{eq:recoverychannel}, we can construct local channels just acting on $B$, which recover individual stabilizers. More precisely, we prepare ancillas, measure lost stabilizers, and perform error correction on outcomes. Because $\prod_{e \in \partial A} X_e\,{=}\,1$, we obtain only the desired outcomes.

\vspace{5pt} \noindent (3) Proof of {\bf M1}: 

If we trace out the region $CD$ we get
\begin{equation}  \label{eq:traceout-2}
    \rho_{AB} = \hspace{-13pt} \prod_{\{v|{N(v)\notin B}\}} \hspace{-13pt}  A^{\bm{1}}_v\left(\frac{1+\bm{X}_{ \partial (CD)} }{2}\right)
\end{equation}

By using the protocol explained in {\bf P0}, we can recover $\rho_{ABC}$ by applying a channel that only acts on the support of $B$, which proves $I(A\,{:}\,C|B)\,{=}\,0$. 
\end{exmp}
 
\begin{exmp}[Fermionic decohered toric code]
    Consider maximally $Z_r X_{r+\delta}$-decohered toric code~\cite{Wang_intrinsicmixedstateTO_2023} in the square lattice, where $\delta=(1/2,1/2)$ be a shift vector. It is written as 
    \begin{align}
        \rho_f \propto \prod_p (1+ W_p), \quad W_p:=  \hspace{-8pt}\prod_{e \ni (p-\delta)} \hspace{-8pt} X_e \prod_{e \in p} Z_e.
    \end{align}
    The proof of {\bf M0} is the same as the toric code case; we can always choose a large enough $B$ so that there is no stabilizer supported on both $A$ and $C$, which leads to $I(A\,{:}\,C)\,{=}\,0$. The proof of {\bf P0} is also similar. After tracing out $C$, we will have an additional stabilizer of $ZX$-string at the boundary of $C$, which enforces the measurement outcome of $W_p$ stabilizer supported on $C$ to obey the desired condition $\prod_{p \in C} W_p = 1$. Accordingly, we can perform error correction on $BC$ to recover the original $\rho_{ABC}$ always. The proof of {\bf M1} is also analogous.

\end{exmp}

\begin{exmp}[Stacked layers of fixed points]
    We can get more fixed-point examples by stacking layers of known fixed-point examples. One can immediately check that stacking does not violate any of the axioms. In particular,  stacking a decohered classical state with a quantum ground state is allowed.  
\end{exmp}

\subsection{Abelian quantum double}

Next, we examine some canonical examples of decohered abelian quantum double models, which generalize the toric code example above. 
To proceed, we use the notation from \cite{Bombin2008}, where $A_v^q$ and $B_p^f$ are commuting  \emph{projectors} on a gauge charge $q$ on $v$ and flux $f$ on $p$, respectively~\footnote{Note that when we generalize this to non-Abelian cases, charge and flux labels correspond to irreducible representation and conjugacy classes, respectively.}. For example, in the case of a  $\mathbb{Z}_n$ toric code state, 
\begin{align} \label{eq:DFT}
    A_v^q &:= \frac{1}{n} \sum_{m=0}^{n-1} q^m \prod_{e \ni v} X_e^{m \epsilon_{e,v}} = \calP_{q, \prod_{e \ni v} X_e^{\epsilon_{e,v}}}  
    \nonumber \\
    B_p^f &:= \frac{1}{n} \sum_{m=0}^{n-1} f^m \prod_{e \in p} Z_e^{m \epsilon_{e,p}} = \calP_{f, \prod_{e \in p} Z_e^{\epsilon_{e,p}}}
\end{align}
where we use $\mathbb{Z}_n$ generalization of Pauli operators and $q,f \in  \{ \omega^m\,|\,m \in \mathbb{Z}_n \}$ label local gauge charge and flux, where $\omega = e^{2\pi i/n}$. The result immediately follows from the discrete Fourier transformation.
The exponent $\epsilon_{e,v}$ and $\epsilon_{e,p}$ are defined in such a way that $\partial e = \sum_{v \in e} \epsilon_{e,v}\, v$ and $\partial p = \sum_{e \in p} \epsilon_{e,p}\, e$.
With these notations, the ground state of an Abelian quantum double model is written as $\prod_v A_v^{\bm{1}} \prod_p B_p^{\bm{1}}$. For mixed states, we define

\begin{definition}[Canonically decohered Abelian quantum double] \label{def:canonical}
    We introduce the canonically decohered states of the $\mathbb{Z}_n$ quantum double by applying the maximal Pauli-$X$  (flux) or Pauli-$Z$ (charge) decoherence channel:
    \begin{align} \label{eq:cdqd}
        \rho^e = \prod_v A_v^{\bm{1}}, \quad \rho^m = \prod_p B_p^{\bm{1}}.
    \end{align}
    Their forms follow from the observation that under the maximal $Z$ or $X$ decoherence, any stabilizer that fails to commute with error operators is removed from the density matrix. 
    Due to $e \leftrightarrow m$ duality in $\mathbb{Z}_n$ quantum double, the studies of $\rho^e$ and $\rho^m$ are equivalent. Furthermore, since any finite abelian group decomposes into a direct product of cyclic groups~\cite{dummit2004abstract}, it is sufficient to study this specific example to understand the generic behavior of the abelian quantum double. As we shall see in \secref{sec:MS-phases}, $\rho^e$ and $\rho^m$ belong to the same phase as well. 
\end{definition} 

Following the arguments similar to Example~\ref{exmp:toric-code-fixed-point}, it is straightforward to show that $\rho^{e/m}$ satisfies mixed-state axioms. With this, the next step is to understand the information convex set $\Sigma(\mathbb{T}^2|\rho^e)$.

\begin{Proposition} \label{prop:convexset}
    The information convex set $\Sigma(\mathbb{T}^2|\sigma)$ for the canonically decohered $\mathbb{Z}_n$ quantum double $\rho^e$ on the torus is a simplex with $n^2$ mutually orthogonal extreme points with the same entanglement entropy.

    \begin{proof}
    Consider a closed manifold $M=\mathbb{T}^2$. First, we consider a reference state $\sigma$ reduced to a ball $\bullet \subset M$:
    \begin{align}\label{eq:ball-condition-TC}
         X_e\sigma_{\bullet} & = \sigma_{\bullet} X_e, \nonumber \\
            A_v^{\bm{1}} \sigma_{\bullet} &=  \sigma_{\bullet} A_v^{\bm{1}}  = \sigma_{\bullet}, \\
            B_p^f \sigma_{\bullet}  &= \sigma_{\bullet} B_p^f.\nonumber
    \end{align} 
    For any $\rho \in \Sigma(M|\sigma)$, its reduced density matrix is indistinguishable from the reference state on any ball $\bullet \subset M$, i.e., $\rho_\bullet = \sigma_\bullet$. Thus, we can replace $\sigma_\bullet$ with $\rho_\bullet$ in Eq.~\eqref{eq:ball-condition-TC}. 
    Does the same hold for any \emph{un-reduced} global state $\rho$? Without loss of generality, we partition $\bullet$ into $BC$ (as in {\bf P0}) such that the operator in question (either $A^{\bm{1}}_v$, $X_e$ or $B^f_p$) is supported on $C$,  and  $A = M \setminus BC$. For this partition, $I(A\,{:}\,C|B)_\rho =0$ by  condition {\bf P0} on $\rho$. This implies the existence of a quantum channel $\Phi_{B\to AB}$ such that
    \begin{equation}
        \rho = \Phi_{B\to AB} (\rho_{\bullet}) = \Phi_{B\to AB} (\sigma_{\bullet}).
    \end{equation}
    Applying the channel $\Phi_{B\to AB} $ to both sides of each equation in \eqref{eq:ball-condition-TC}, we show that the global density matrix $\rho$ commutes with $B_p^f$ and $X_e$ while absorbing $A_v^{\bm{1}}$.
    
    The rest of the proof is done by expanding $\rho$ in the generalized Pauli basis. Write 
    \begin{align}
        \rho = \sum_{P \in \calP_n^{\otimes N}} c_P P 
    \end{align}
    where $\calP^{\otimes N}_n$ is a set of $n$-generalized Pauli operators on $N$ sites.
    The condition $[X_e, \rho_M]\,{=}\,0$ and $[B_p^f, \rho_M]\,{=}\,0$ for any $i$ and $p$ implies that $\rho$ commutes with an operator algebra generated by $\{ X_e, B_p^f \}$. Therefore, $c_P =0$ for Pauli strings involving $Y$ and $Z$, and even for Pauli strings in terms of $I$ and $X$, they should commute with $B_p^f$s--this implies that they must be \emph{closed} $X$-strings on the dual lattice. 
    The condition $A^1_v  \rho \,{=}\, \rho$ implies that $\rho$ can be written as a product of $A^1_v$ and additional terms. 
    Since any contractible closed $X$-strings can be absorbed into $\prod_v A_v^{\bm{1}}$, only remaining degrees of freedom are non-contractible $X$-strings. For $\calM = \mathbb{T}^2$ with two non-contractible strings, we get
    \begin{align}\label{eq:logical-states-ab}
        \rho = \bigg[ \sum_{a,b=0}^{n-1} c_{ab} \bar{X}_1^a \bar{X}_2^b \bigg] \cdot  \prod_v A_v^{\bm{1}}.
    \end{align}
    where $\bar{X}_{1,2}$ are generators of logical $X$-operators along the $x$ and $y$ direction. The condition $\Tr \rho = 1$ and positive-semidefinite property of the density matrix imply that $\sum_{ij} c_{ij} = 1$ and $c_{ij} \geq 0$. Furthermore, the following $n^2$ density matrices are orthogonal extreme points whose convex combination represents any density matrix of the above form:
    \begin{align} \label{eq:extreme_points}
         \rho^e_{ab} = \prod_v A_v^{\bm{1}} \cdot \prod_{j=0}^{n-1}  (\omega^a \bar{X}_1)^j \prod_{k=0}^{n-1} (\omega^b \bar{X}_2)^k,
    \end{align}
    where factors next to $\rho^e = \prod_v A_v^{\bm{1}}$ project $\rho^e$ onto an eigenspace of $\bar{X}_{1,2}$ with eigenvalues $\omega^a, \omega^b$.  
    Therefore, $\Sigma(\mathbb{T}^2|\sigma)$ is forming a simplex with $n^2$ orthogonal extreme points.  
    \end{proof}
\end{Proposition}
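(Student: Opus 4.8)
The plan is to (i) upgrade the local structure of $\rho^e=\prod_v A_v^{\bm{1}}$ to global operator identities satisfied by every $\rho\in\Sigma(\mathbb{T}^2|\sigma)$, (ii) solve those identities in the generalized Pauli basis, and (iii) extract the simplex structure together with the equality of entropies.

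\emph{From local to global.} By local indistinguishability (condition (1) of Definition~\ref{def:ICS-mixed-closed-M}) and the explicit form of $\rho^e$, on any ball $\bullet\subset\mathbb{T}^2$ one reads off $A_v^{\bm{1}}\rho_\bullet=\rho_\bullet$, $[B_p^f,\rho_\bullet]=0$, and $[X_e,\rho_\bullet]=0$ for every star, plaquette, and edge contained in $\bullet$. To globalize a given one of these relations, place the relevant operator (supported on a small disk) deep inside the region $C$ of a {\bf P0} partition $\bullet=BC$, with $A=\mathbb{T}^2\setminus BC$. Axiom {\bf P0} gives $I(A{\,:\,}C|B)_\rho=0$, hence a recovery channel $\Phi_{B\to AB}$ with $\rho=\Phi_{B\to AB}(\rho_\bullet)$ that acts as the identity on $C$; applying it to both sides of the local relation (whose operator lives in $C$) reproduces that relation for the global $\rho$. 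Ranging over all stars, plaquettes, and edges---each of which can be isolated inside some such $C$ on a large enough, coarse-grained torus---shows that $\rho$ commutes with the algebra generated by $\{X_e\}\cup\{B_p^f\}$ and is stabilized by $\prod_v A_v^{\bm{1}}$.

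\emph{Solving the identities.} Expand $\rho=\sum_P c_P P$ over generalized Paulis. Commuting with every $X_e$ forces $c_P=0$ unless $P$ is a pure $X$-string, and such strings commute with $\prod_v A_v^{\bm{1}}$; commuting with every $B_p^f$---equivalently with every plaquette operator $\prod_{e\in p}Z_e^{\epsilon_{e,p}}$---forces each $X$-string to be closed; finally $\prod_v A_v^{\bm{1}}\rho=\rho$ absorbs every contractible closed $X$-string into $\prod_v A_v^{\bm{1}}$, so only the homology class of each loop matters. On $\mathbb{T}^2$ this leaves $\rho=\big[\sum_{a,b=0}^{n-1}c_{ab}\bar{X}_1^a\bar{X}_2^b\big]\prod_v A_v^{\bm{1}}$, with $\bar{X}_{1,2}$ the two noncontractible logical $X$-loops.

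\emph{Convex structure and entropies.} On $\mathrm{supp}(\prod_v A_v^{\bm{1}})$ the commuting order-$n$ unitaries $\bar{X}_1,\bar{X}_2$ carry a $\mathbb{Z}_n\times\mathbb{Z}_n$ action that is a multiple of the regular representation: the dual logicals $\bar{Z}_1,\bar{Z}_2$---closed $Z$-loops, which commute with all stars and hence preserve this support---are unitaries cyclically permuting the joint $\bar{X}$-eigenvalues, so the $n^2$ joint eigenspaces have equal dimension. Writing $\Pi_{jk}$ for the projector onto the $(\omega^j,\omega^k)$-eigenspace of $(\bar{X}_1,\bar{X}_2)$ inside this support and $\rho^e_{jk}$ for its suitably normalized product with $\prod_v A_v^{\bm{1}}$, the expansion above diagonalizes as $\rho=\sum_{j,k}p_{jk}\,\rho^e_{jk}$ with $p_{jk}=\tfrac{1}{n^2}\sum_{a,b}c_{ab}\,\omega^{ja+kb}$; the trace-one and positivity constraints become exactly $\sum_{jk}p_{jk}=1$ and $p_{jk}\ge 0$. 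Hence $\Sigma(\mathbb{T}^2|\sigma)$ is precisely the simplex on the $n^2$ mutually orthogonal extreme points $\rho^e_{jk}$, and since $\bar{Z}_1^{-j}\bar{Z}_2^{-k}$ conjugates $\rho^e_{00}$ into $\rho^e_{jk}$---a unitary, hence entropy-preserving---all extreme points have the same entropy.

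\emph{Main obstacle.} The delicate point is the globalization step: one must check that the {\bf P0} recovery channel genuinely intertwines the local relations with the global state for every star, plaquette, and edge simultaneously, which is where the coarse-grained/large-torus hypothesis enters, and one must justify the ``multiple of the regular representation'' claim, since this is what both normalizes the $\rho^e_{jk}$ correctly (pinning down the $n^2$ extreme points) and yields the equal-entropy statement.
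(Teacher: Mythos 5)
Your proposal is correct and follows essentially the same route as the paper: globalize the local stabilizer relations via the {\bf P0} recovery channel, solve them in the generalized Pauli basis to reduce $\rho$ to a combination of non-contractible $X$-loops times $\prod_v A_v^{\bm{1}}$, and diagonalize in the joint $\bar{X}_{1,2}$ eigenbasis to obtain the $n^2$-vertex simplex. One welcome addition on your part: the paper asserts but does not explicitly argue the equal-entropy claim, whereas your observation that the dual logical loops $\bar{Z}_1,\bar{Z}_2$ commute with all stars and cyclically permute the joint $\bar{X}$-eigenspaces both pins down the equal dimensions of those eigenspaces (hence the normalization of the extreme points) and gives the entropy equality by unitary conjugation.
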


\begin{corollary}[memory capacity] The memory capacity $\calQ(\mathbb{T}^2)$ of the state $\rho^e$ ($\rho^m$) is
\begin{equation}
    \calQ(M) = 2 \log n.
\end{equation}
\end{corollary}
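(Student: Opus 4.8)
The plan is to read the memory capacity directly off Proposition~\ref{prop:convexset}, using the additivity of the von Neumann entropy on ensembles of states with pairwise orthogonal supports.

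First I would invoke Proposition~\ref{prop:convexset}: $\Sigma(\mathbb{T}^2|\rho^e)$ is a simplex whose $n^2$ extreme points $\{\rho^e_{ab}\}_{a,b\in\mathbb{Z}_n}$ are mutually orthogonal and all carry the same entropy $S_0:=S(\rho^e_{ab})$. Hence every $\rho\in\Sigma(\mathbb{T}^2|\rho^e)$ is a mixture $\rho=\sum_{a,b}p_{ab}\,\rho^e_{ab}$ over a probability vector $\{p_{ab}\}$, and because the $\rho^e_{ab}$ have pairwise orthogonal supports the entropy splits as
\begin{equation}
    S(\rho)=H(\{p_{ab}\})+\sum_{a,b}p_{ab}\,S(\rho^e_{ab})=H(\{p_{ab}\})+S_0,
\end{equation}
with $H$ the Shannon entropy. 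Thus, on the information convex set, $S$ is just a rigid shift of $H(\{p_{ab}\})$.

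Next I would extremize over the simplex. The Shannon entropy $H(\{p_{ab}\})$ attains its minimum value $0$ at every vertex $p_{ab}=\delta_{(a,b),(a_0,b_0)}$, i.e. at $\rho=\rho^e_{a_0b_0}\in\Sigma$, and its maximum value $\log(n^2)=2\log n$ at the barycenter $p_{ab}\equiv 1/n^2$, which is itself in $\Sigma$ by convexity (and in fact equals the reference state $\rho^e=\prod_v A_v^{\bm{1}}$, since the $\bar X_{1,2}$-eigenprojectors entering $\rho^e_{ab}$ sum to the identity). Therefore $\max_{\rho}S(\rho)=2\log n+S_0$ and $\min_{\lambda}S(\lambda)=S_0$, so $\calQ(\mathbb{T}^2)=(2\log n+S_0)-S_0=2\log n$. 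The $e\leftrightarrow m$ duality noted in Definition~\ref{def:canonical} gives the same value for $\rho^m$.

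Everything here is routine once Proposition~\ref{prop:convexset} is granted; the one ingredient that does real work is the entropy identity, which relies precisely on the ``mutually orthogonal'' clause of the proposition and on the common value $S_0$ (so that it cancels in the difference defining $\calQ$). The only other thing worth a line is that the entropy-maximizing and entropy-minimizing states both lie in $\Sigma(\mathbb{T}^2|\rho^e)$, which is immediate since the vertices and the barycenter of a simplex are in the set. I do not expect any obstacle beyond correctly citing the structure established in Proposition~\ref{prop:convexset}.
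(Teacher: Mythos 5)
Your proposal is correct and follows essentially the same route as the paper, which states that the corollary "follows immediately from the counting of the extreme points, the orthogonality, and the fact that all the extreme points have identical entropies." You have simply made explicit the entropy decomposition $S(\rho)=H(\{p_{ab}\})+S_0$ on orthogonally supported mixtures and the extremization over the simplex, which is exactly the intended argument.
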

This follows immediately from the counting of the extreme points, the orthogonality, and the fact that all the extreme points have identical entropies. It is also easy to compute that $\gamma_{\textrm{LW}}= \frac{1}{2} \log n$. Thus, the class of abelian models saturates the bound \eqref{eq:bound-mixed}.

\begin{lemma} \label{lemma:fdp} The extreme points of $\Sigma(\mathbb{T}^2|\rho^{e,m})$  can be prepared by a finite-depth local channel from the product state.

    \begin{proof}
        Without loss of generality, we consider $\rho^e$ only. The extreme points of $\Sigma(\mathbb{T}^2|\rho^{e})$ are labeled by two integers $a,b \in \mathbb{Z}_n$.
        Define the product state $\lambda_{a,b}$ as 
        \begin{align}
        \lambda_{ab} &:= L^Z_{a,b} \, (|+ \rangle \langle + |)^{\otimes N} (L^Z_{a,b})^\dagger
        \end{align}
        where $L^Z_{a,b} = \prod_{e \in C_1} Z_e^a \prod_{e \in C_2} Z_e^b$ is a logical $Z$-string along noncontractible loops $C_1$ (horizontal) and $C_2$ (vertical) of the torus as shown in Fig.~\ref{fig:preparation_abelian}. By performing generalized measurements (POVM) using Kraus operators $\{ B_p^f\}$ (which commute with $L^Z$), we obtain 
        \begin{align}\label{eq:B-fp}
            \rho^e_{ab}  &:= \sum_{ \{ f_p \} } \prod_p B_p^{f_p} \lambda_{ab} B_p^{f_p},
        \end{align}
        which is the extreme point labeled by $(a,b)$.  
        To see this, note that the Pauli expansion of the density matrix $\lambda_{ab}$ contains all possible dual $X$-string operators $X_{\bm{l}^\perp}$ with coefficients given by $L^Z_{a,b} X_{\bm{l}^\perp} (L^Z_{a,b})^{\dagger} X^{\dagger}_{\bm{l}^\perp}$.
        Next, under the POVM by $B_p^f$, the only surviving terms are dual \emph{closed} strings. Defining $\omega = e^{2\pi i/n}$, the final density matrix can be rewritten in the form in \eqnref{eq:extreme_points}. 
    \end{proof}
\end{lemma}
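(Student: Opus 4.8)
The plan is to produce, for each $(a,b)\in\mathbb{Z}_n\times\mathbb{Z}_n$, an explicit FDLC and a product-state input whose image is the extreme point $\rho^e_{ab}$ of \eqref{eq:extreme_points}; by the $e\leftrightarrow m$ duality of Definition~\ref{def:canonical} it is enough to handle $\rho^e$. The construction uses three ingredients. First, the input product state $(|+\rangle\langle+|)^{\otimes N}$. Second, a depth-one unitary layer implementing the $Z$-power string $L^Z_{a,b}=\prod_{e\in C_1}Z_e^{a}\prod_{e\in C_2}Z_e^{b}$ along the two non-contractible loops $C_1,C_2$ of $\mathbb{T}^2$; being a tensor product of single-site gates, it preserves the product form, so $\lambda_{ab}$ is still a product state. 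Third, a channel $\mathcal{E}$ that performs the projective flux measurement $\{B_p^{f}\}_{f}$ on every plaquette $p$ and forgets all outcomes, $\mathcal{E}(\cdot)=\sum_{\{f_p\}}\big(\prod_p B_p^{f_p}\big)(\cdot)\big(\prod_p B_p^{f_p}\big)$, which is exactly the operation appearing in \eqnref{eq:B-fp}. One then shows $\mathcal{E}(\lambda_{ab})=\rho^e_{ab}$, so that $\mathcal{E}\circ(\,L^Z_{a,b}\text{-conjugation}\,)$ is the claimed FDLC.

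First I would check that $\mathcal{E}$ is a legitimate finite-depth local channel. Each single-plaquette map $\mathcal{E}_p(\cdot)=\sum_f B_p^{f}(\cdot)B_p^{f}$ is a channel because the $B_p^{f}$ are orthogonal projectors with $\sum_f B_p^{f}=\Id$, and it acts only on the $O(1)$ edges of $p$. The plaquettes of the square lattice split into a constant number of color classes so that same-colored plaquettes share no edge (a checkerboard pattern already does this, since two faces of one color meet at most at a vertex); within a class the maps $\mathcal{E}_p$ act on disjoint supports, so $\mathcal{E}^{(c)}=\bigotimes_{p\in c}\mathcal{E}_p$ is a valid layer and $\mathcal{E}$ is their composition over colors. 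That composing the layers really realizes the joint flux measurement in \eqnref{eq:B-fp} uses the fact that all flux projectors $\{B_p^{f}\}_{p,f}$ commute, so sequentially measuring them in any grouping equals measuring them simultaneously.

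It then remains to evaluate $\mathcal{E}(\lambda_{ab})$. Expand $(|+\rangle\langle+|)^{\otimes N}=n^{-N}\sum_{\bm{l}^\perp}X_{\bm{l}^\perp}$ as the uniform sum over all dual $X$-power strings, open and closed. Conjugation by $L^Z_{a,b}$ multiplies each term by $L^Z_{a,b}X_{\bm{l}^\perp}(L^Z_{a,b})^{\dagger}X_{\bm{l}^\perp}^{\dagger}$, a phase counting the signed intersections of $\bm{l}^\perp$ with $C_1$ and $C_2$ weighted by $a$ and $b$. Applying $\mathcal{E}$, the flux dephasing kills any $X_{\bm{l}^\perp}$ that fails to commute with some $B_p^{f}$ --- exactly the open strings, whose endpoints carry nontrivial flux --- and leaves the closed ones untouched. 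Every closed dual $X$-string factors into a contractible part times a winding sector $\bar{X}_1^{s}\bar{X}_2^{t}$: the contractible strings carry trivial phase and resum (after normalization) into the projector $\prod_v A_v^{\bm{1}}$, while the winding sectors come weighted by $\omega$-phases that assemble into the eigenvalue projectors $\prod_{j}(\omega^{a}\bar{X}_1)^{j}\prod_{k}(\omega^{b}\bar{X}_2)^{k}$; matching this against \eqref{eq:extreme_points} finishes the argument. I expect the main obstacle to be purely bookkeeping: one must fix conventions for the exponents $\epsilon_{e,p}$, the orientations of $C_1,C_2$, and the normalizations of $\bar{X}_1,\bar{X}_2$ and $\prod_v A_v^{\bm 1}$ so that the intersection phases assemble into precisely the projectors of \eqref{eq:extreme_points} rather than a permuted labeling; the finite-depth claim itself is routine once the coloring and commutativity are noted.
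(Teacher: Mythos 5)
Your proposal is correct and follows essentially the same route as the paper: prepare the product state $\lambda_{ab}$ by conjugating $(|+\rangle\langle+|)^{\otimes N}$ with $L^Z_{a,b}$, apply the forgetful flux measurement $\{B_p^f\}$, and track the Pauli expansion to see that only closed dual $X$-strings survive with the $\omega$-phases assembling into the projectors of Eq.~\eqref{eq:extreme_points}. The only addition is your explicit check that the plaquette dephasing maps can be arranged into constantly many disjoint layers via a checkerboard coloring, a point the paper leaves implicit.
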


\begin{figure}[!t]
    \centering   
    \includegraphics[width=0.97\linewidth]{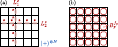}
    \caption{ {\bf Preparation of dephased Toric code}. Consider a square lattice on a torus. (a) First, we prepare a product state and apply logical-$Z$ operators $L^Z_{a,b}$. (b) Second, we perform $Z$-stabilizer measurements by POVM $\{B_p^{f_p}\}$, and throw away measurement outcomes. The final state has an eigenvalue $(a,b)$ under $X$-logical operator $(\bar{X}_1,\bar{X}_2)$.}
    \label{fig:preparation_abelian}
\end{figure}

Let us comment on the separable nature of the states. While we formulated as applying a channel with Kraus operators being projective measurement, this is equivalent to applying randomly unitary operators $B_p^m = \prod_{e \in p} Z_e^{m \epsilon_{e,p}}$ obtained by inverse Fourier transformation (c.f. \eqnref{eq:DFT}). 
Thus, the state in Eq.~\eqref{eq:B-fp} can be re-written as
\begin{equation}
    \rho_{ab}^e= \sum_{m_p} \prod_p B_p^{m_p} \lambda_{ab} B_p^{-m_p},
\end{equation}
which makes it manifest that $\rho_{ab}^e$ is a classical separable state. The fact that such a state has $\gamma_{\textrm{LW}} =\frac{1}{2} \log n$ means the source of topological entropy is \emph{not} from the entanglement. This is why we call $\gamma_{\textrm{LW}}$ the topological entropy rather than topological entanglement entropy in the mixed states.

\section{Mixed state phases of matter}\label{sec:MS-phases}

Thus far, we have proposed an axiomatic starting point for understanding topological mixed states. 
By focusing on axioms that isolate topological features, we characterized nontrivial behaviors of fixed points without being distracted by non-universal details. However, a fundamental question remains: how does this framework contribute to the broader understanding of mixed-state phases of matter? To define the equivalence of two topological mixed states satisfying the bootstrap axioms, one natural approach is to propose that two mixed states belong to the same phase if they are connected by finite-depth local channels (FDLC) in both directions~\cite{Coser_2019, Sang2024, xue2024tensornetworkformulationsymmetry}. However, our axiomatic analysis reveals a tension: for example, the trivial product state and the maximally dephased toric code represent distinct fixed points, with different classical topological memory capacities $\calQ$. Yet under the FDLC criterion, they appear to belong to the same phase. This discrepancy highlights the need for a more refined phase definition in the mixed-state setting.

In this section, we show that our approach provides a systematic framework to define mixed-state topological phases. 
First, we define the equivalence relation between two reference states satisfying the axioms. 
Through this definition, we propose a new way to define an equivalence relation between generic topological mixed states.  
Second, we discuss how to relax axioms to explore \emph{phases} away from the fixed points, and show numerical results to justify our relaxation conditions. 
Finally, we investigate novel physics that arise at the domain wall between distinct topological mixed states.

\subsection{Phase equivalence}

The bootstrap axioms imply that two reference states $\sigma^\alpha$ and $\sigma^\beta$ that can be sewn together along a boundary satisfying those axioms necessarily belong to the same phase. This is the consequence of the isomorphism theorem (see Appendix \ref{app:ICS-Omega}), which implies that whenever a topological invariant (e.g. $\gamma_{\textrm{LW}}$ in \eqnref{eq:LW-TE-G}) is evaluated on any sufficiently large local patch, its value should be the same everywhere. This indicates that the topological correlations can be smoothly extended from a local patch to further neighborhood. Therefore, one can use the absence of an interpolating boundary satisfying axioms as a probe to distinguish two topological mixed states.
This result is the conceptual foundation for the bootstrap approach, even for the pure state setup~\cite{Shi2019fusion}.

To extend this approach to explore the mixed-state \emph{phase diagram}, we may relax the axioms to hold only under \emph{coarse-graining}. 
In practice, this means that a given axiom may be violated on short length scales, provided that such violations vanish under successive rescaling of the region and partition size. 
These considerations motivate the following formal definition. 

\begin{axiombox}
    \begin{definition}[Mixed state equivalence]\label{def:p.f.-equivalence}
Two representative mixed states $\sigma^\alpha$ and $\sigma^\beta$ are equivalent (i.e., in the same phase) if there exists a state of bubble of $\beta$ in the background of $\alpha$, such that {\bf P0}, {\bf M0} and {\bf M1} are preserved everywhere including the domain wall of the bubble, up to an error that decays under coarse-graining. See Fig.~\ref{fig:domain-wall} for an illustration.  
\end{definition}
\end{axiombox}

In other words, the domain wall becomes \emph{transparent} under coarse-graining. 
The transparent domain wall can be formally defined as:
\begin{definition}[Transparent domain wall]\label{def:transparent-wall}
    If a domain wall of a many-body system cannot be detected by checking any {\bf P0}, {\bf M0}, and {\bf M1} centered in its neighborhood (under coarse-graining), we call it a transparent domain wall. See Fig.~\ref{fig:domain-wall} for an illustration.
\end{definition}

\begin{figure}[!t]
    \centering
    \includegraphics[width=0.95\linewidth]{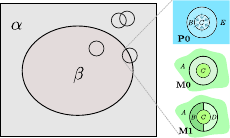}
    \caption{ {\bf Mixed state phase equivalence.} A bubble of phase $\beta$ in the background of $\alpha$. We say the two phases are equivalent, i.e., $\alpha \sim \beta$ if the axioms are preserved everywhere, including the domain wall. The associated domain wall is called a transparent domain wall. 
    }\label{fig:domain-wall}
\end{figure}

As an example, this equivalence implies that the phases described by $\rho^e$ and $\rho^m$—two canonically decohered $\mathbb{Z}_2$ quantum doubles—are in the same phase. While a naive juxtaposition of the two results in a boundary that violates axiom {\bf M1}, one can construct a domain wall that preserves all axioms, as illustrated in \figref{fig:domain_em}(a).

\begin{figure}[!t]
    \centering   
    \includegraphics[width=0.99\linewidth]{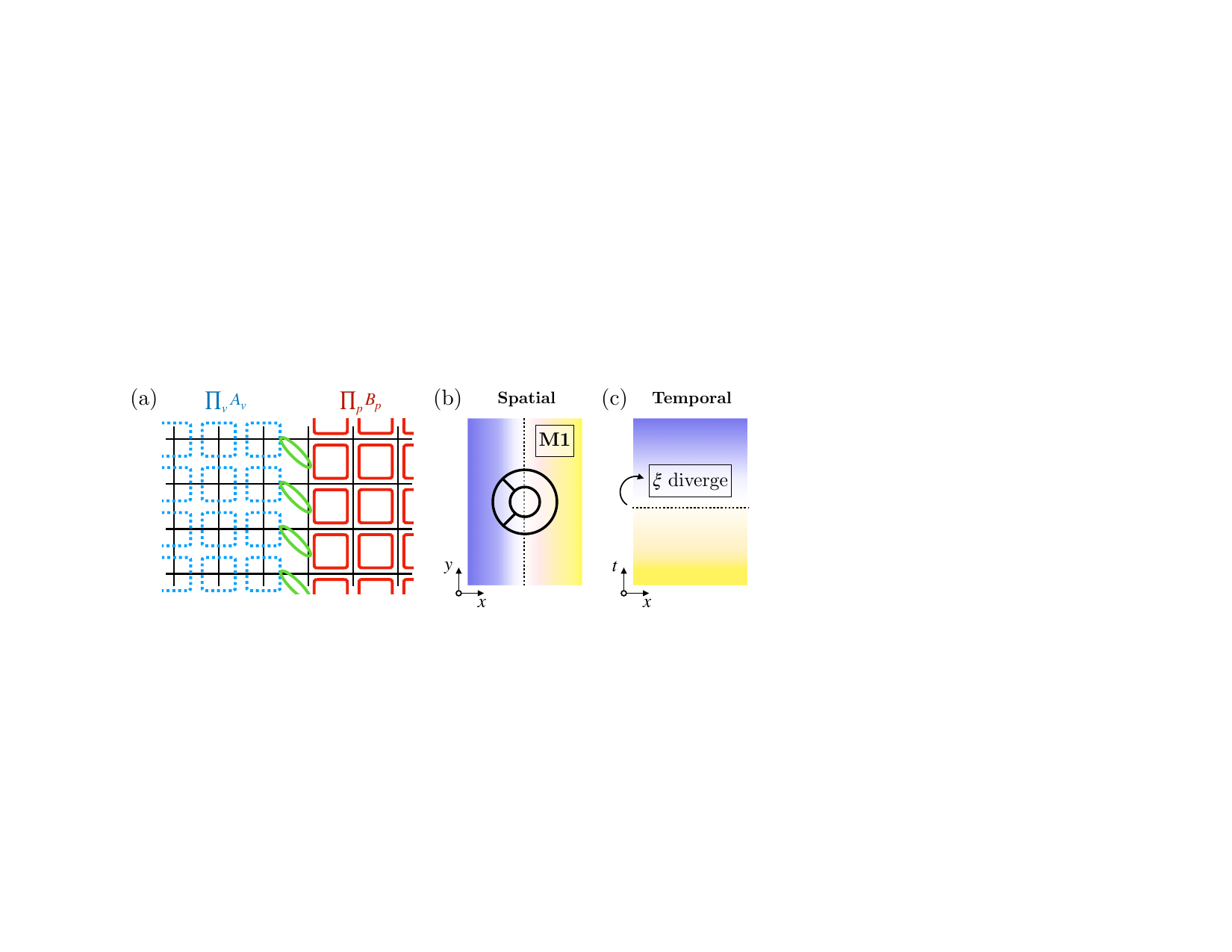}
    \caption{ {\bf Domain walls}. (a) Smooth sewing of $\rho^e$ and $\rho^m$, which satisfy all axioms. Red and blue squares represent the $Z$ and $X$ stabilizers $\prod_{e \in p} Z_e$ and $\prod_{e \ni v} X_e$, respectively. Green diagonal circles correspond to the stabilizer $Z_r X_{r + \delta}$, where $\delta=(1/2,-1/2)$. (b) Spatial domain wall and the violation of {\bf M1}. (c) Temporal domain wall and the divergence of Markov length at a certain timeslice. }
    \label{fig:domain_em}
\end{figure}

Accordingly, to show $\sigma^\alpha$ and $\sigma^\beta$ are in different phases, it is \emph{not} enough to identify a domain wall that violates {\bf M1}. As studied in the pure state case~\cite{levin2024physicalprooftopologicalentanglement,Kim2024-A11,Kim2023-TEE-bound}, the violations may be due to spurious contributions. It is necessary to show that it is impossible to construct another domain wall of the bubble, which preserves {\bf M1}. The task may sound challenging. Luckily, there are simpler sufficient conditions. The two phases are distinct if two states have different topological entropy:
\begin{equation}\label{eq:LW-difference}
    \gamma_{\textrm{LW}}(\alpha) \ne \gamma_{\textrm{LW}}(\beta).
\end{equation}
Indeed, if two states can be connected by a transparent domain wall, then $\gamma_{\textrm{LW}}(\alpha) = \gamma_{\textrm{LW}}(\beta) $ because we can smoothly deform the Levin-Wen partition through the domain wall. We demonstrate how this works in Sec.~\ref{Sec:robust-TE}, including the context that the axioms has errors decaying with the subsystem sizes. Since $\gamma_\LW(\rho^{e,m}) = \log \sqrt{2}$ while $\gamma_\LW(\otimes_i \rho_i) = 0$, it immediately follows that the product state and $\rho^{e,m}$ are not in the same phase according to our definition.

More generally, our axiomatic approach implies that two mixed states belong to different phases if they possess different information convex sets (ICS). This is because of an \emph{isomorphism theorem} we explain in Appendix~\ref{app:ICS-Omega}. See Appendix~\ref{app:robust-iso} for a discussion of the robustness of such isomorphism under decaying errors.
Of course, this is only a necessary condition--toric code and doubled semion topological phase, which have the same information convex sets, cannot be converted into each other even using FDLC~\cite{Ranard2025}.  Understanding what additional diagnostics can be \emph{derived} from the mixed state bootstrap approach would be an important next step, which will be explored in the follow-up paper.

\begin{figure*}[!t]
  \begin{minipage}{\linewidth}
  \centering
\includegraphics[width=\linewidth]{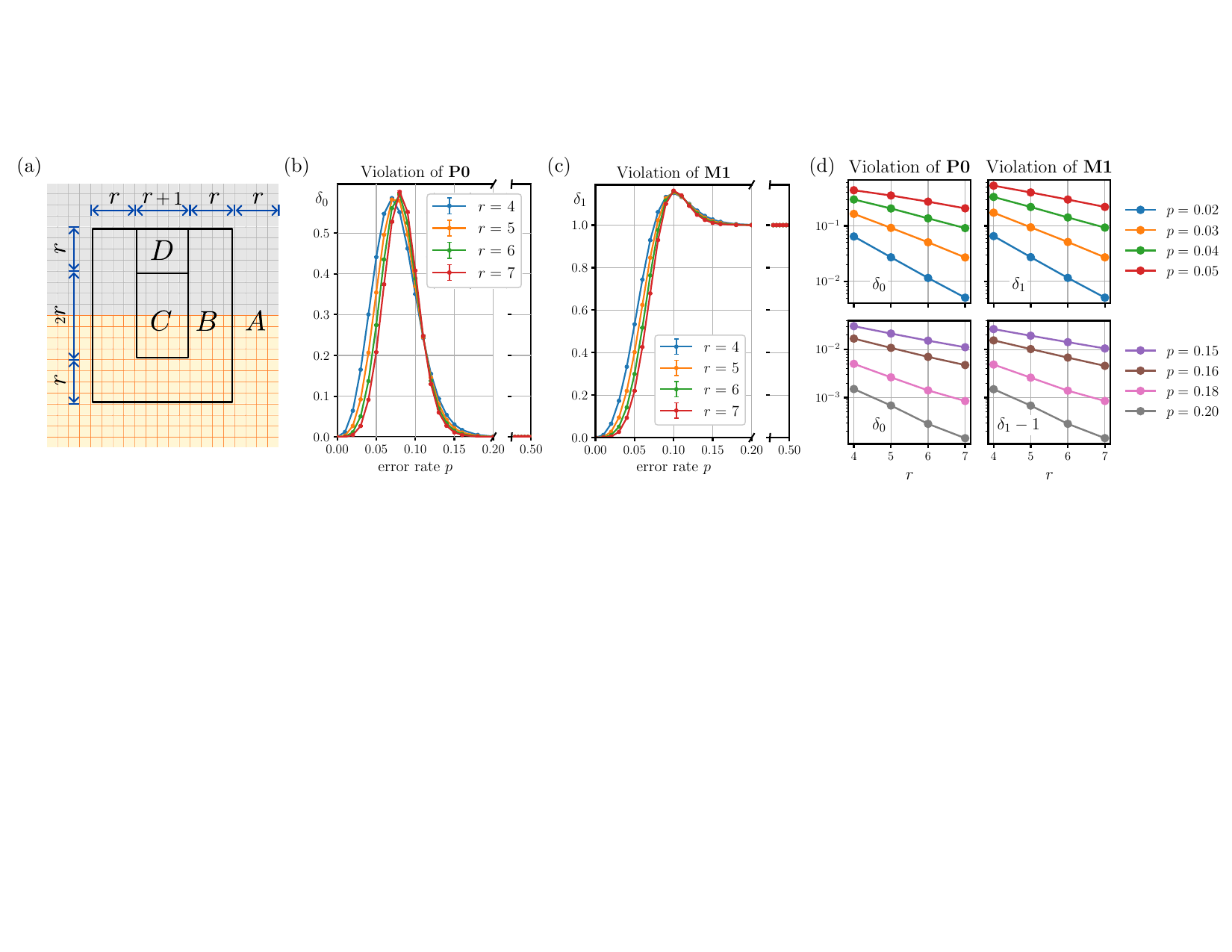}
  \caption{{\bf Violation of P0 and M1.} 
    (a) The partitions near a domain wall. The decohered links are colored orange, located in the lower half plane. $r$ is the length of the subsystems.
    $A$ is an annulus, $BD$ is the annulus that surrounds the disk $C$. 
    (b) The (local) violation of {\bf P0}~\footnote{Using monotonicity of conditional mutual information under partial traces, we can establish that $I(E:C|B) \geq I(A\,{:}\,C|B)$, $\forall A \subset E$, i.e., the violation of {\bf P0} is lower bounded by the violation of its local version. } 
    $\delta_0 := S_{ABD} + S_{BCD} - S_{BD} - S_{ABCD}$ as a function of error rate $p$ and subsystem sizes $r$. 
    (c) The violation of {\bf M1}, $\delta_1 := S_{AB} + S_{BC} - S_B - S_{ABC}$, as a function of error rate $p$ and subsystem sizes $r$. 
    (d) At $p=[0.02, 0.05]$ and $p=[0.15, 0.20]$ the error of {\bf M1} and {\bf P0} decays exponentially with $r$. All data shown are shown in unit of $\log 2$.}
  \label{fig:violations}
  \end{minipage}
\end{figure*}

Finally, let us contrast our definition of mixed state phases with the idea using the divergence of Markov length. The divergence of Markov length means that the violation of {\bf P0} acquires a correction on all length scales; more precisely, if $I(A\,{:}\,C|B) \sim e^{-\textrm{dist}(A,C)/\xi}$ where $\textrm{dist}(A,C)$ is the minimum distance between $A$ and $C$ and $\xi$ is the Markov length, $\xi \rightarrow \infty$~\cite{Sang2024}. One can interpret the divergence of the Markov length at some critical tuning parameter (error strength or Lindbladian evolution time) as the \emph{temporal} domain wall between two distinct mixed-state phases across time. 
On the other hand, our definition highlights the breaking of the smoothness (which is required to discuss topological correlations) at the \emph{spatial} domain wall, detected by the violation of {\bf M1}. The contrast between spatial and temporal is illustrated in Fig.~\ref{fig:domain_em}(b) and (c).
Such non-trivial behaviors of the temporal and spatial domain walls already exist in the example of the toric code under decoherence. 
We emphasize, however, that for the state to be a valid representative of a certain topological mixed state phase, the smoothness in space is a necessary ingredient. 
In this sense, our definition that explicitly enforces this spatial smoothness is more fundamental.

\subsection{Coarse-graining numerics}\label{subsec:coarse-graining}

For a generic topological mixed state, we do not expect the three axioms to hold exactly, but only approximately. Given any local patch and associated partition considered in the axioms, one can rescale all the regions with respect to a reference point by a multiplicative factor $r$.

In the case of critical states described by conformal field theory, it is known that the conditional mutual information  
for the partition in {\bf P0} is invariant under such scaling~\cite{Casini2011}. In contrast, for gapped phases of matter, this condition is expected to decay as a function of $r$. Motivated by this behavior, we propose the following operational definition: if the violations of the three axioms  decay towards zero with $r$ throughout the state, then we say the state \emph{coarse-grains} into a particular topological mixed state fixed point. Furthermore, if we construct a state that interpolates two states $\sigma^\alpha$ and $\sigma^\beta$, where the violation of axioms decays towards zero with $r$ at the domain wall (Fig.~\ref{fig:domain-wall}), we \emph{identify} two states as the same mixed state topological phase.

To illustrate this point more concretely,  we discuss two types of boundaries~\footnote{A boundary is a domain wall between a phase and a trivial phase represented by the product state.} in what follows.

\vspace{5pt} \noindent \emph{From $\rho^e$ to product state.} 
Consider a canonically decohered state $\rho^e$, where the lower half-plane is decohered under on-site $Z$ errors with a tunable strength $p$:
\begin{align}
    \calE := \prod_i \calE_i, \,\,\,\, \calE_i: \, \rho \mapsto (1-p) \rho + p Z_i \rho Z_i,
\end{align}
see \figref{fig:violations}(a).  At $p\,{=}\,1/2$, the lower half-plane becomes a maximally mixed product state, which has a topological entropy $\gamma_\LW$ different from the upper half-plane.  

Now, let us investigate the violation of the three axioms for different values of $p$. We focus on {\bf P0} and {\bf M1} since the violation of {\bf M0} is always zero for any $p$ because the errors are onsite and independent. At the ``fixed‐point'' domain wall between  $\rho^e$ and $\otimes_i \rho_i$, one can analytically find the violation of {\bf M1} for the partition $I(A\,{:}\,C|B)$ in Fig.~\ref{fig:violations}(a) to be $\log 2$, while the violation of {\bf P0} is $0$, see \appref{app:violation}.

For $p < p_c$, we expect the violation of both {\bf P0} and {\bf M1} to coarse-grain to zero. On the other hand, for $p > p_c$, the violation of {\bf M1} shall coarse-grain to $\log 2$ while the violation of {\bf P0} to zero. This is what we observe in Fig.~\ref{fig:violations}(d) with both violations decay exponentially to the fixed point values. At the critical point $p_c$, we find, quite remarkably, that the violations of both {\bf P0} and {\bf M1} exhibit scale invariance with respect to the region size $r$. 
The behavior is compatible with an underlying conformal symmetry, possibly within the framework of a disordered CFT~\cite{LUDWIG1990639,Cardy2001,Gurarie_2002}, which opens avenues for further investigations.

In Fig.~\ref{fig:violations}(b), we show the violation of \textbf{P0} in the domain-wall geometry while scaling \emph{all} regions $A$, $BD$, and $C$ (and $A$, $B$, $C$ for {\bf M1} as shown in Fig.~\ref{fig:violations}(c)). This bares similarity with~\cite{Sang2024}, but our analysis should be contrasted with the Markov-length analysis there [Ref.~\cite{Sang2024} Fig.~3], which keeps the size of the local region $C$ fixed. Crucially, our numerics locate the critical point directly from the crossings of curves at different $r$---we observe the crossing is exactly at the known critical point $p_c\simeq 0.109$~\cite{fan2023diagnostics, Lee_2023, lee2024exact, Zhao_2024}. This provides a new empirical way to obtain $p_c$ with high precision.
In Ref.~\cite{Sang2024}, one had to rescale each curve based on the peak height as the peak position shifts with $r$, adding an extra complication. 
We detailed the numerical calculation in Appendix~\ref{app:num}.

\vspace{5pt} \noindent \emph{From product state to  $\rho^e$.}
We can also consider the opposite channel, which prepares $\rho^e$ starting from the product state. This is the tunable version of the preparation protocol described in \figref{fig:preparation_abelian}, where we perform the decoherence using $B_p^f$ with strength $p$. Applying this channel on the product state $|+\rangle^{\otimes n}$ gives the following density matrix:
\begin{align} \label{eq:dm_fragile}
    \rho (p) &= \frac{1}{2^n} \sum_{\bm{l}_\perp \in {\cal S}_d }  (1-2p)^{|\partial \bm{l}_\perp|} X^{\bm{l}_\perp}, \quad X^{\bm{l}_\perp} := \prod_{e \in \bm{l}_\perp} X_e.   
\end{align}
where ${\cal S}_d$ is the set of all possible strings on the dual lattice and $| \partial \bm{l}_\perp|$ is the number of boundary points of $\bm{l}_\perp$.

To proceed, we start with $|+\rangle$ state every site and apply this protocol with strength $p$ on the upper half-plane of a given system. For the fixed point wave function at $p=1/2$, we have $I(A\,{:}\,C|B)=0$ but $I(A\,{:}\,C|D)=\log 2$, near the domain wall Fig.~\ref{fig:violations}(a), which indicates the violation of {\bf M1}. 
However, unlike the previous case, this FDLC is \emph{fragile}, which means that the violation of axioms coarse-grains into zero at $p<1/2$ and two half-planes belong to the same trivial phase, similar fragile behaviors in the bulk are observed in. We show the numerical and theoretical details in Appendix \ref{app:fragile}.

We can infer that the state in \eqnref{eq:dm_fragile} with $p<1/2$ exhibits vanishing topological entropy for sufficiently large partitions. 
At an intuitive level, the contribution of independent closed‐loop operators to the density matrix becomes negligible under coarse-graining. This is because a string operator
$X_{\bm{l}_\perp} = \prod_{e \in \bm{l}_\perp} X_e$
maintains a finite expectation value $(1-2p)^2$ regardless of loop length $|\bm{l}_\perp|$---behavior that cannot persist in a genuinely topological mixed state. 
From the statistical mechanics perspective, the physics maps to that of the two-dimensional $\mathbb{Z}_2$ gauge theory. In particular, Renyi-2 entropy would map to the free energy of the 2d Ising gauge theory at the inverse temperature $\beta = -2 \ln (1-2p)$. Since 2d Ising gauge theory is disordered at any finite $\beta$, the (Renyi-2) topological entropy vanishes for $p < 1/2$. 
See \appref{app:fragile} for numerical data.
Following numerous observations~\cite{fan2023diagnostics, Lee_2023, lee2024exact, Zhao_2024, Niwa_2025}, we expect the von Neumann topological entropy would vanish, i.e., coarse-grain to zero for $p<1/2$ since its behavior would follow random-plaquette $\mathbb{Z}_2$ gauge theory, which has no finite‐temperature threshold.

\subsection{Channel and boundary}\label{subsec:channel-wet-dry}

We have proposed an equivalence of mixed state fixed points based on their density matrices (Def.~\ref{def:p.f.-equivalence}). How is this approach different from the existing approach based on the FDLC~\cite{Coser_2019, Sang2024, xue2024tensornetworkformulationsymmetry}? 
In this section, we discuss the connection and distinction between these two approaches. In particular, we explain why our definition naturally leads to a refinement of the FDLC approach. 

The previous approach based on FDLC with which we make the comparison can be phrased as~\footnote{There is another approach closely related to FDLC but uses local Lindbladian~\cite{Sang2024}. It may not be completely equivalent to the local channel condition.}:

\begin{definition}[Channel connectivity]\label{def:CC}
Two mixed states fixed points $\sigma^\alpha$ and $\sigma^\beta$ are channel connected if there exists a pair of FDLCs ($\Phi_{\alpha\to \beta}$ and $\Phi_{\beta\to \alpha}$), which can convert between $\sigma^\alpha$ and $\sigma^\beta$. Namely,
\begin{equation}
   \Phi_{\alpha\to \beta}(\sigma^{\alpha}) = \sigma^\beta,\quad   \Phi_{\beta\to \alpha}(\sigma^{\beta}) = \sigma^\alpha.
\end{equation}   
\end{definition}

Suppose $\alpha$ and $\beta$ are two fixed points that are channel connected as in Def.~\ref{def:CC}. The finite depth of quantum channels $ \Phi_{\alpha\to \beta} $ and $ \Phi_{\beta\to \alpha} $ can be truncated onto any (sufficiently large) local region. Such channels can create bubbles of $\alpha$ $(\beta)$ over the background of $\beta$ $(\alpha)$.  Our idea of refining Def.~\ref{def:CC} is that it is possible to find out more by looking at the correlation near the domain wall of the bubble. Based on this, we propose the \emph{topological channel connectivity} below.

\begin{axiombox}
\begin{definition}[Topological channel connectivity]\label{def:TCC}
Two representative mixed states $\sigma^\alpha$ and $\sigma^\beta$ are topologically channel connected if a FDLC can create a disk-like bubble of $\beta$ ($\alpha$) in the background of $\alpha$ ($\beta$), such that {\bf P0}, {\bf M0} and {\bf M1} are preserved everywhere including the domain wall of the bubble, up to decaying error.   
\end{definition}
\end{axiombox}

It is clear that Def.~\ref{def:TCC} is a refinement of channel connectivity by requiring that the wall of the bubble can be made transparent (Def.~\ref{def:transparent-wall}). 
Let us make a few remarks.
\begin{enumerate}[leftmargin=13pt]
    \item Both channel connectivity (Def.~\ref{def:CC}) and topological channel connectivity (Def.~\ref{def:TCC}) are equivalence relations. The latter is strictly finer, as we will see in examples below.
    \item The idea of having extra properties near the spatial boundary of the FDLC may look unfamiliar (or even strange). Nonetheless, we emphasize that similar considerations already occurred in the study of symmetry-protected topological phases (SPT)~\cite{xiechen_localunitary2010, PhysRevB.84.165139, Chen_2011}. Suppose we want to deform an SPT state to another by finite-depth local unitaries, then only deformations that respect the symmetry \emph{locally} are permitted. For instance, in the 1D cluster state, although the global application of controlled-$Z$ gates (a depth-2 circuit) respects the $\mathbb{Z}_2 \times \mathbb{Z}_2$ symmetry as a whole, a local collection of these gates on local regions breaks the symmetry and is therefore disallowed. This makes the 1D cluster state a nontrivial SPT, which enjoys anomalous behavior at its spatial boundary.
    Analogously, when dealing with topological mixed states \emph{protected} by long-range topological correlations or entanglements, we must restrict to deformations that preserve the underlying topological structure.  
    \item The topological channel connectivity (Def.~\ref{def:TCC}) is seemingly stronger than our definition of equivalent fixed point (Def.~\ref{def:p.f.-equivalence}). However, we do not know any pair of fixed point density matrices that are identical by Def.~\ref{def:p.f.-equivalence}, which do not have topological channel connectivity between them. We thus conjecture that two mixed-state fixed points represent the same phase if and only if these two states are topologically channel-connected.
\end{enumerate}

To make the contrast between the channel connectivity and topological channel connectivity clearer, we consider a fixed point $\alpha$ and a trivial product state $\beta=\bmstate{1}$. Two types of boundaries are particularly natural to consider.

\begin{definition}[Dry and Wet boundaries]\label{def:dry-wet}
Consider a fixed point $\alpha$ and a trivial product state $\beta=\bmstate{1}$.
We call the domain wall of bubble of $\bmstate{1}$ created on top of $\alpha$ as the dry boundary (created by truncated channel $\Phi_{\alpha \to \bmstate{1}} $). Suppose there exist a local channel $\Phi_{\bmstate{1} \to \alpha} $ that convert $\bmstate{1}$ to $\alpha$; we call domain wall of bubble of $\alpha$ created on top of $\bmstate{1}$ as a wet boundary (created by truncated channel $\Phi_{\bmstate{1} \to \alpha} $).   
\end{definition}

Note that any mixed state fixed point $\alpha$ allows a dry boundary with the trivial product state $\bmstate{1}$. This is because we can always find the local reset or maximal depolarization channel that converts $\alpha$ to the trivial state $\bmstate{1}$. On the other hand, the existence of the wet boundary is possible only through channel connectivity (i.e. when $\Phi_{\bmstate{1} \to \alpha}$ exists and can be truncated to create a local bubble), which may or may not satisfy the stronger topological channel connectivity.

To be more explicit, let $\alpha\,{=}\,\rho^e = \prod_v A_v^{\bm{1}}$ be the dephased toric code fixed point~\eqref{eq:rho^e-T.C.} and ${\bmstate{1}}\,{=}\, |+\rangle^{\otimes N}$ be the product state. 
As we have discussed in the previous subsection, $\alpha$ and $\bmstate{1}$ are channel connected. The two channels are
\begin{itemize}[leftmargin=13pt]
    \item $\Phi_{\alpha\to \bmstate{1}}$ is the product of an on-site channel that resets each site to $|+\rangle$. The reset channel in this process is constructed by swapping a system qubit with an ancilla qubit $|+\rangle$ and tracing out the ancilla. 
    \item $\Phi_{\bmstate{1} \to \alpha}$ is the measurement of $\prod_{e \in p} Z_e$ and throw away the outcome on each plaquette.  
\end{itemize}
By restricting these channels to part of the plane, we create bubbles of one phase on the background of another phase. 
We illustrate the ``dry boundary'' and the ``wet boundary'' of the dephased toric code in Fig.~\ref{fig:dry-wet-bdy}.
 Before doing any computation, we know that {\bf M1} must be violated in the neighborhood of (any part of) the boundary. This is because $\gamma_{\textrm{LW}}(\alpha) = \log \sqrt{2}$, whereas  $\gamma_{\textrm{LW}}(\beta)=0$. If {\bf M1} were satisfied on any portion of the boundary, these two values would be the same. (See the discussion around Eq.~\eqref{eq:LW-difference}.)

\begin{figure}[h]
    \centering
    \includegraphics[width=0.85\linewidth]{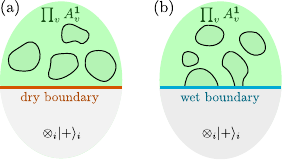}
    \caption{{\bf Dry and wet boundaries (Def.~\ref{def:dry-wet}):} 
    Two types of boundary between $\alpha=\prod_v  A_v^{\bm{1}}$ and $\bmstate{1}=\prod_i |+\rangle^{\otimes N}$  (a) The ``dry" boundary obtained by applying $\Phi_{\alpha \to \bmstate{1}}$ on the lower half-plane of $\alpha$. (b) The ``wet'' boundary  obtained by applying $\Phi_{\bmstate{1} \to \alpha}$ on the upper half-plane of $\bmstate{1}$. 
    These two boundary types differ from each other on whether there exist Pauli-$X$ string operators $\calO_S$ terminating on the boundary, such that $\calO_S \sigma = \sigma $. Such a distinction has physical implications as we explain in Fig.~\ref{fig:bdy-TE}.}  
    \label{fig:dry-wet-bdy}
\end{figure}

Explicit computation tells us which of the axioms  {\bf P0}, {\bf M0}, and {\bf M1} are preserved or violated on these boundaries; see Appendix~\ref{app:violation}. We state the result and comment on the physical interpretation.

\vspace{5pt} \noindent  \emph{Preserved conditions:} Both the dry and wet boundaries have a set of {\bf P0}, {\bf M0}, and {\bf M1}-like conditions preserved:  
    \begin{enumerate}[leftmargin=13pt]
        \item {\bf P0} and {\bf M0} are satisfied on the boundary, namely
        $I(A\,{:}\,C|B)=0 $ and $I(A\,{:}\,C)=0$ for Fig.~\ref{fig:bdy-axiom}(a).
         
    \item {\bf M1} is preserved on the boundary for some orientations. $I(A\,{:}\,C|B)= I(A\,{:}\,C|D) =0$ for Fig.~\ref{fig:bdy-axiom}(b).
    \end{enumerate}
    These conditions guarantee that the boundary is uniform in a sense that generalizes Ref.~\cite{Shi2020domainwall}.  There is no obstruction to transport information along the boundary.  If, instead, the boundary consists of segments jointed at point-like defects, we should expect the violation of some of these conditions. 

    \begin{figure}[h]
    \centering
    \includegraphics[width=0.72\linewidth]{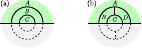}
    \caption{Axioms are preserved for these classes of configurations. (a) for {\bf P0} and {\bf M0}, and (b) for certain classes of {\bf M1}.}
    \label{fig:bdy-axiom}
\end{figure}

\vspace{5pt} \noindent  \emph{Violated conditions:} The dry boundary and the wet boundary both violate a certain type of {\bf M1}. The relevant partitions are shown in Fig.~\ref{fig:bdy-TE}(a). In fact, the violations give a pair of nontrivial topological entropies, and the two boundaries can be distinguished.
    We first define the pair of boundary topological entropies 
\begin{eqnarray}
    \gamma_{\rm dry} &:=& \frac{1}{2} I(A\,{:}\,C|D), \quad \text{for Fig.~\ref{fig:bdy-TE}(a)}. \\
    \gamma_{\rm wet} &:=& \frac{1}{2} I(A\,{:}\,C|B), \quad \text{for Fig.~\ref{fig:bdy-TE}(a)}. 
\end{eqnarray}
The computation result summarized in Table~\ref{tab:TE-dry-wet} justifies the name. Namely, $\gamma_{\rm dry}$ detects a dry boundary and $\gamma_{\rm wet}$ detects a wet boundary.  Note that $\gamma_{\rm dry}$ and $\gamma_{\rm wet}$ are the only forms of violation of {\bf M1} near the boundaries, as $\beta$ is trivial.
\begin{table}[h]
    \centering
    \begin{tabular}{|c|c|c|}
    \hline
    boundary type & $\gamma_{\rm dry}$ & $\gamma_{\rm wet}$\\
    \hline
     dry boundary    & $\gamma_{\textrm{LW}}$   & $0$  \\
     \hline
      wet boundary    & $0$   & $\gamma_{\textrm{LW}}$  \\
      \hline
    \end{tabular}
    \caption{Here $\gamma_{\textrm{LW}}$ is that for the nontrivial phase $\alpha$ being  the $\prod_v A_v^{\bm{1}}$ fixed point of the dephased toric code. Here $\gamma_{\textrm{LW}}= \log \sqrt{2}$ is the Levin-Wen topological entropy of $\alpha$.}
    \label{tab:TE-dry-wet}
\end{table}

\begin{figure}[h]
    \centering
    \includegraphics[width=0.99\linewidth]{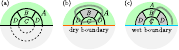}
    \caption{(a) Configurations to check the violation of the {\bf M1}. The resulting values are topological entropies $\gamma_{\rm dry}$ and $\gamma_{\rm wet}$. These regions can be restricted to the upper half plane, as the state is a product in the lower half plane. (b) The intuition why the dry boundary has $I(A\,{:}\,C|D) =\log 2$  and (c) the wet boundary has nontrivial $I(A\,{:}\,C|B) =\log 2$. The gray closed (open) string is the support of a nontrivial string operator $\calO_S$ for the dry (wet) boundary, such that its action on the state with the given boundary condition is $\calO_S \sigma\,{=}\, \sigma$. }
    \label{fig:bdy-TE}
\end{figure}
 
Let us comment on the intuition why the dry boundary gives $I(A\,{:}\,C|D)= \log 2$. 
There exists a $\mathbb{Z}_2$-valued nontrivial operator $\calO_S\,{=}\,\prod_{i \in S} X_i$ supported on a closed string $S\,{\subset}\,ACD$ that stabilizes the state, i.e., $\calO_S \sigma \,{=}\, \sigma$. However, its eigenvalue cannot be determined by either marginal state $\sigma_{CD}$ or $\sigma_{AD}$. 
Therefore, conditioning on $D$ leaves a single $\mathbb{Z}_2$ variable shared between $A$ and $C$, contributing exactly one bit to the conditional mutual information, see Fig.~\ref{fig:bdy-TE}(b).

Similarly, for the wet boundary, there exists a nontrivial operator $\calO_S$ supported on an open string $S\,{\subset}\, ABC$ that stabilizes the state. However, its eigenvalue cannot be determined by the marginal states $\sigma_{AB}$ and $\sigma_{BC}$, as shown in  Fig.~\ref{fig:bdy-TE}(c). Again, conditioning on $B$ leaves a relevant degrees of freedom shared between $A$ and $C$. See \cite{Levin2013boundary,Kim2014-low-energy,Shi2018seeing} for closely related ideas.

 It is amusing to contrast the pure state context.
For a gapped boundary between a pure state topological order and the vacuum, we always have $\gamma_{\rm dry}=\gamma_{\rm wet} = \frac{1}{2} \gamma_{\textrm{LW}}$. Thus, we cannot distinguish any pairs of pure state gapped boundaries with such topological entropies. In pure states, more interesting topological entropies only exist for domain walls between a pair of nontrivial phases \cite{Lan2015-domain-wall,Shi2021-DWTEE}.
For mixed state phases, we have richer physics.

    \begin{figure*}[!t]
  \centering
  \includegraphics[width=0.98\linewidth]{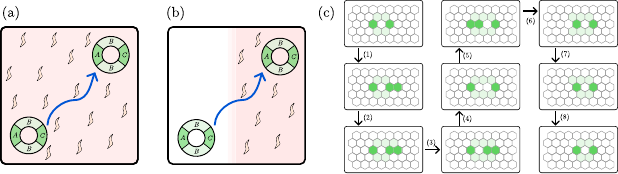}
        \caption{ {\bf Stability of topological invariant.} {\bf (a) Decohered state.} When the three axioms are approximately satisfied due to being away from the fixed point, still topological entropy remains invariant up to an exponentially small correction, no matter where we evaluate.
        {\bf (b) Connection to fixed-points.}  A similar logic can be applied if we have a smooth interpolation of a fixed-point density matrix satisfying three axioms and a decohered density matrix with approximate axioms. Two states must show the same topological entropy up to small corrections that are exponentially suppressed under coarse-graining.
        {\bf (c) Sequence of moves}. On a honeycomb coarse grained lattice at length scale $r$, we consider two Levin-Wen partitions, relatively shifted by 1 lattice, showing in the 1st and the last figure. Their associated values of topological entropies are $\gamma_{\textrm{LW}}^{\rm(left)}$ and $\gamma_{\textrm{LW}}^{\rm(right)}$. The eight steps represent an explicit sequence of deformation of the partition.
        }
  \label{fig:deformations-explicit}
\end{figure*}

The distinction between the wet and dry boundaries in Table \ref{tab:TE-dry-wet} encodes a genuine memory effect: each boundary ``remembers'' the procedure by which it was created, much like hysteresis in a magnetic system.
At a deeper level, one can ask how difficult it is to transform one boundary type into the other via FDLCs acting near the interface. 
This question parallels the stability analysis of the FDLC that prepares the bulk phase discussed in Sec.~\ref{subsec:coarse-graining}: namely, if noise is added to the preparation channel, does the resulting state remain in the same phase? If not, would it become a phase outside the bubble? 
In both cases, there is an apparent asymmetry. The intriguing connection between the two problems will be explored in future work.

\section{Approximate Axioms and Stability of Topological Invariants}\label{Sec:robust-TE}

Our axioms provide a rigorous foundation for discussing fixed points of non-equilibrium phases of matter under decoherence. To compare states beyond these fixed points, we permit controlled violations of the axioms that decay exponentially with the size of the partition and define phase equivalence accordingly (Definitions~\ref{def:p.f.-equivalence}); see \figref{fig:1}(b). However, for this definition of phase equivalence to be meaningful, the relaxation of axioms should preserve the topological character of invariants---including the information convex set, topological entropy, and related quantities elaborated in \appref{app:ICS-Omega}.

In this section, we establish the \emph{stability} of these topological invariants under such relaxed conditions: if the axioms hold up to exponentially decaying corrections, the invariants are unchanged up to likewise corrections exponentially small in partition size. 
Concretely, we assume the violations of {\bf P0}, {\bf M0}, and {\bf M1} obey the common bound
\begin{equation}\label{eq:violation_approx}
\delta(r) \le \mu \, e^{-r/\xi}
\end{equation}
for length scales $r \gg \xi$, where $\xi:=\max\{\xi_{\mathrm{corr}},\xi_{\mathrm{Markov}}\}$ is the larger of the correlation length and the Markov lengths, and $\mu$ is an ${\cal O}(1)$ constant. This scaling is observed in our data for realistic decoherence (Fig.~\ref{fig:violations}; see also Appendix~\ref{app:violation}).

We consider the 
regime that the length scales $r\gg \xi$, and describe the system by a coarse-grained honeycomb lattice; see Fig.~\ref{fig:deformations-explicit}(c). 
The Levin-Wen partition is applied to a single coarse-grained unit cell. We then ask how the associated topological entropy varies with a length scale $r$. Concretely, we compare two adjacent placements of the partition that differ by a one-cell translation on the coarse-grained lattice (denoted ``left'' and ``right'') and seek an $r$-dependent bound  $|\gamma_{\textrm{LW}}^{\rm(left)} - \gamma_{\textrm{LW}}^{\rm(right)}| \le f(r)$, where $f(r) \rightarrow 0$ as $r \rightarrow \infty$.

\begin{Proposition}\label{prop:deformation}
The difference between the Levin-Wen TE on the left $\gamma_{{\rm LW}}^{\rm(left)}$ and the partition on the right $\gamma_{{\rm LW}}^{\rm(right)}$ is bounded as 
\begin{equation}
    |\gamma_{\rm{LW}}^{\rm(left)} - \gamma_{\rm{LW}}^{\rm(right)}| \le 8 \,  e^{-r/\xi}. 
\end{equation}
In particular, the error becomes exponentially smaller at larger length scales even though the two partitions are separated by a larger distance.
\end{Proposition}

\begin{proof}
We need to estimate the entropy difference of each pair of nearby configurations; there are 8 such pairs, as in Fig.~\ref{fig:deformations-explicit}(c). 

Now let us show that 
the deformation in the 1st step gives a change at most $O(1) \, e^{-r / \xi}$. The difference of Levin-Wen topological entropy for the 1st step is the difference for the partition $A,B,C$ and $A,B,CC'$ in
\begin{equation}
    \includegraphics[width=0.54\linewidth]{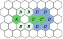}
\end{equation}
which is computed as
\begin{equation}
\begin{aligned}
    |\delta \gamma_{\textrm{LW}}^{\rm step-1} | &:= I(A:CC'|B) - I(A\,{:}\,C|B) \\
    & = (S_{AB} + S_{BCC'} - S_B - S_{ABCC'}) \\
    &~\,\,\,\, - (S_{AB} + S_{BC} - S_B - S_{ABC}) \\
    &= S_{ABC} + S_{BCC'} - S_{BC} - S_{A BC C'} \\
    &= I(A:C'|BC)\\
    & \le I(AB:C'|C) \\
    & \le \delta(r).
\end{aligned} 
\end{equation}
The last two lines use the strong subadditivity, and the last line uses the error of {\bf M1} is given by $\delta(r)$. The same analysis applies to steps 2-8, with $|\delta \gamma_{\textrm{LW}}^{\rm step-k} | \le  \delta(r)$ for all $k=2,...,8$. 
While we omit the details, we list the equalities that are used,
\begin{equation}
\begin{aligned}
     I(A:CC'|B)- I(A\,{:}\,C|B)  
      = I(A:C'|BC), % useful in verifying step 1,4,5,6.
\end{aligned}
\end{equation}
and
\begin{equation}
\begin{aligned}
     &I(A\,{:}\,C|B)- I(A:C|BB')  \\
     & \quad = I(A:B'|B) - I (A:B'|BC). % useful in verifying step 2,3,7,8.
\end{aligned}
\end{equation}
With these, we 
concludes the proof.
\end{proof}

\begin{corollary}[Stability of topological entropy] \label{Coro:invariant}
Consider a fixed point density matrix $\rho_\textrm{fix}$ that exactly satisfies three axioms. Assume $\rho$ belongs to the same phase in that one can construct a smooth spatial interpolation between $\rho$ and $\rho_\textrm{fix}$ where the violation of axioms is exponentially suppressed with scale $r$ as in \eqnref{eq:violation_approx}, see \figref{fig:deformations-explicit}(b). Then the topological entropies of $\rho$ and $\rho_\textrm{fix}$, computed on two Levin-Wen partitions separated by a constant number of deformation steps, are identical up to an exponentially small error at most $\calO(1) e^{-r/\xi}$.
\end{corollary}

Therefore, our axioms indeed play the role of gap---as long as two states are connected while maintaining these axioms either exactly or approximately, topological entropies remain invariant up to exponentially small corrections. In fact, all other topological invariants described in the isomorphism theorem can be shown to exhibit only exponentially small corrections. We provide a physical sketch in Appendix \ref{app:robust-iso}; the details will appear in a future dedicated work.

\section{Secret Sharing: Hierarchy of Topological Memory}\label{sec:Secret-sharing}

In the broad family of topological mixed states, both classical
 and quantum logical information can be encoded in a topological manner~\cite{Dennis_2002, Castelnovo_2007}. This encoding implies that the logical information cannot be retrieved through local operations; rather, one must access a topologically non-trivial subregion of the system to recover it. A familiar example is the toric code~\cite{Dennis_2002}, where logical operators are supported on nontrivial cycles of the torus. Put differently, if we partition the system as in \figref{fig:secret-sharing}(a) among four parties, recovery of the logical information requires collaboration between at least two parties. 

Remarkably, even at the level of classical information, different topological mixed states exhibit distinct capacities for secret protection under this partition of the torus. For instance, in the canonically decohered Abelian quantum double defined in Definition~\ref{def:canonical}, access to any two pairs (e.g., $AB$ and $AC$, $AD$ and $BC$, etc.) suffices to fully reconstruct the logical information stored along the two nontrivial cycles of the torus. However, as we shall see, for decohered non-Abelian topological mixed states, this level of access is insufficient. Even more intriguingly, a hierarchy of secret-sharing capacities emerges among different non-Abelian topological mixed states. We will illustrate this phenomenon through a detailed analysis of the $S_3$ and $A_4$ quantum doubles.

\subsection{Non-Abelian quantum double}

Before going into details, we define the decohered states of non-Abelian quantum doubles~\cite{Kitaev1997,Bombin2008}. 
For a non-Abelian group $G$, one can define the following projectors in a way analogous to the Abelian case in \eqnref{eq:DFT}:
\begin{itemize}[leftmargin=13pt]
    \item $A_v^R$ on the charge $R$ (irreducible rep.) at vertex $v$
    \item $B_p^C$ on the flux $C$ (conjugacy class) at plaquette $p$
\end{itemize} 
The notation of quantum double models is reviewed in Appendix~\ref{app:QD}. For both the trivial irreducible representation and the conjugacy class of identity element, which correspond to trivial charge and flux, respectively, we use the label $\bm{1}$. The quantum double ground state with maximally mixed logical space is then given as $\prod_v A_v^{\bmcharge{1}} \prod_p B_p^{\bmflux{1}}$.

In the Abelian case, we can start from the quantum double state and decohere it maximally in charge or flux, respectively, to obtain different topological mixed states (Def.~\ref{def:canonical}). 
However, it turns out that the situation is different in non-Abelian cases.

\vspace{5pt} \noindent \emph{Decohering into $\rho^m$.} We apply the channel described by Kraus operators $\{T_e^g|g\in G\}$ where $T^g_e$ is a projector onto $|g \rangle$ on edge $e$; therefore this channel corresponds to decohering non-Abelian charge. 
As $[B_p^{\bmflux{1}}, T^g_e] = 0 \quad\forall h, g\in G$, $B_p^{\bmflux{1}}$ survives under this channel, while only the identity part survives in the expansion of $\prod_v A_v^{\bmcharge{1}}$.
Thus, we eliminate $\prod_v A_v^{\bmcharge{1}}$ from the pure state and obtain $\rho^m= \prod_p B_p^{\bmflux{1}}$.

As one might expect, this charge-decohered non-Abelian state $\rho^m$ satisfies three axioms {\bf P0}, {\bf M0}, and {\bf M1}.  The argument to prove the axioms is similar to that of the Abelian case. The only subtlety is that the stabilizer measurements can give non-Abelian flux excitations when we prove {\bf P0} and {\bf M1}. However, partial trace always gives the constraint (c.f. \eqnref{eq:traceout} and \eqref{eq:traceout-2}) that these fluxes must fuse into a trivial total flux as in the Abelian case. Therefore, the original stabilizers can always be recovered.

The topological mixed state $\rho^m$ represents a classical mixture of loops subject to non-Abelian conservation laws, specifically the zero flux condition:
\begin{equation}
  \rho^m = \prod_p B_p^{\bmflux{1}} \propto \sum_c |c\rangle \langle c|,
\end{equation}
where $|c\rangle$ is a product state in the group element basis satisfying $B_p^{\bmflux{1}} |c\rangle =|c\rangle$. Since the state $|1 \rangle^{\otimes N}$ ($1\in G$) satisfies this condition, all other loop configurations can be generated by applying products of unitary operators $A_v^g$. For a manifold like $\mathbb{T}^2$, we can further introduce two fluxes around non-contractible cycles, which are trivial for loops generated in this way. 
To create a loop configuration with non-trivial flux through non-contractible cycles, a special operator should be applied along the cycle, as shown in \figref{fig:loopNA}(a). It is important to note that fluxes are only defined up to conjugacy class, meaning that two fluxes, $g$ and $h^{-1} g h$, are equivalent.

\begin{figure}[h]
    \centering
    \includegraphics[width=0.96\linewidth]{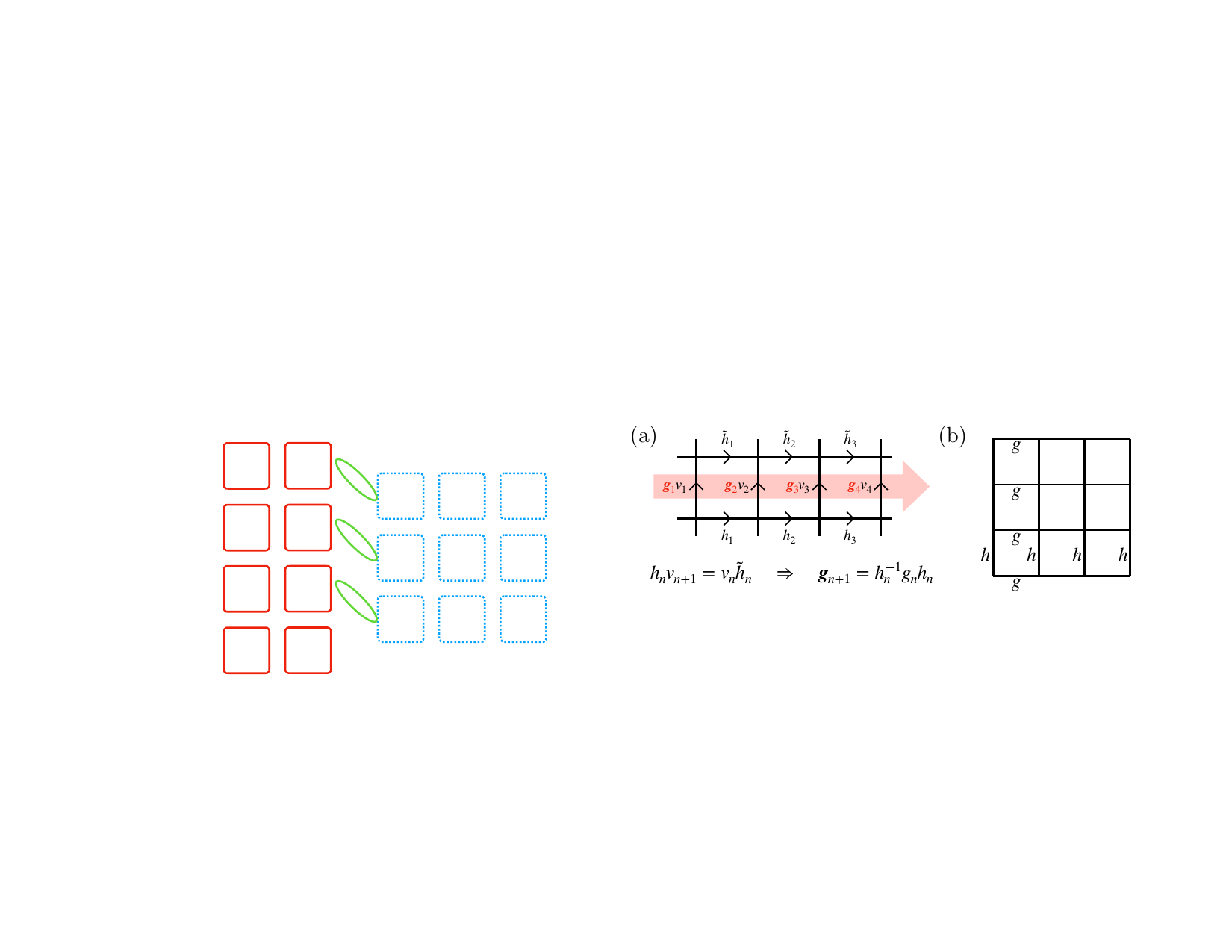}
    \caption{{\bf Decohered Non-Abelian state.} (a) The operator along the \emph{dual} horizontal non-contractible cycle $L$, $\prod_{e \in {\color{red} L}} L_e^{g_e}$ where $g_{1} = g$ and $g_{{n+1}} = h_n^{-1} g_n h_n$.  
    Applied on a loop state with trivial flux along the \emph{vertical} cycle, this generates a conserved loop state with non-trivial flux $C_g$ through the vertical non-contractible loop. (b) Initial product state $\lambda_{g,h}$ to prepare an extreme point of the information convex set $\Sigma(\mathbb{T}^2|\rho^m)$. The square lattice is on a torus with a periodic boundary condition. Unlabeled links are in the identity element.}
    \label{fig:loopNA}
\end{figure}

We are ready to study the information convex set. On a lattice of the torus, define $\lambda_{g,h}$ to be the zero flux configuration shown in Fig.~\ref{fig:loopNA}(b), for $gh=hg$.
We further define
\begin{equation} \label{eq:Cgh}
    C_{g,h} := \{ (r g r^{-1}, r h r^{-1}) \,|\, r \in G \},  
\end{equation}
which is the set of simultaneous conjugacy classes of pairs $(g,h)$ with $g,h \in G$. 

\begin{Proposition}
    The information convex set $\Sigma(\mathbb{T}^2| \rho^m)$ on the torus $\mathbb{T}^2$ is a simplex with extreme points in one-to-one correspondence with $C_{g,h} $ for $g h = hg$. These extreme points are mutually orthogonal and have the explicit form  
    \begin{align}\label{eq:all-extreme-pts}
    \rho^{C_{g,h}} := \frac{1}{|G|^{N_v}} \sum_{\{ g_v \} } \, \prod_v A^{g_v}_v \lambda_{g,h} A^{g_v \dagger}_v
\end{align} 
Furthermore, these extreme points can be prepared by a finite-depth local circuit from the product state.
\end{Proposition}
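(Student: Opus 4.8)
The plan is to mirror the structure of the proof of Proposition~\ref{prop:convexset} (the Abelian case), while tracking the extra bookkeeping forced by non-Abelian fusion. First I would establish the ``local data'' of $\rho^m$: for any ball $\bullet \subset \mathbb{T}^2$, the reduced state $\rho^m_\bullet$ is stabilized by $B_p^{\bmflux{1}}$ for plaquettes $p$ inside $\bullet$, is invariant under $A_v^{g}$ for all $g\in G$ (equivalently commutes with the gauge-transformation algebra), and commutes with every single-edge projector $T_e^g$. As in the Abelian proof, for any $\rho\in\Sigma(\mathbb{T}^2|\rho^m)$ we use axiom {\bf P0} and the resulting recovery channel $\Phi_{B\to AB}$ to promote these local relations to global ones: $\rho$ commutes with all $B_p^{\bmflux{1}}$, with all $T_e^g$, and is invariant under conjugation by every $A_v^{g}$.

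Next I would characterize the solution space of those global constraints. Commuting with all $T_e^g$ forces $\rho$ to be diagonal in the group-element basis $\{|c\rangle\}$, so $\rho=\sum_c q_c\,|c\rangle\langle c|$; the constraint $B_p^{\bmflux{1}}\rho = \rho$ restricts the sum to zero-flux configurations $c$, i.e. closed-loop states in the $G$-valued ``gauge field'' language; and invariance under $A_v^{g}$ for all $g$ says the weights $q_c$ are constant on orbits of the (on-site) gauge group acting on zero-flux configurations. The orbits of zero-flux configurations on $\mathbb{T}^2$ under global gauge transformations are classified by the simultaneous conjugacy class of the pair of holonomies around the two non-contractible cycles — precisely $C_{g,h}$ with $[g,h]=1$ — which is the standard fact that flat $G$-bundles on the torus are labeled by $\mathrm{Hom}(\mathbb{Z}^2,G)/\text{conj}$. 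This identifies $\Sigma(\mathbb{T}^2|\rho^m)$ with the simplex of probability distributions over the labels $C_{g,h}$, and shows the extreme points are the uniform averages over a single orbit, which is exactly the claimed formula \eqref{eq:all-extreme-pts}; mutual orthogonality follows because distinct holonomy classes are supported on disjoint sets of basis states $|c\rangle$, and equal entanglement entropy follows from local indistinguishability together with the global flux constraint.

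For the finite-depth preparation statement I would generalize Lemma~\ref{lemma:fdp}. Start from the product state $\lambda_{g,h}$ of Fig.~\ref{fig:loopNA}(b) — a zero-flux configuration with holonomies $(g,h)$ around the two cycles — which is literally a product state. Then apply the channel whose Kraus operators implement, plaquette by plaquette, a projective measurement of the flux $B_p$ (equivalently, a random application of the unitaries $A_v^{g}$ obtained by ``Fourier transform'' on $G$), discarding all outcomes. Because these operators commute with the non-contractible holonomy operators, the holonomy class $C_{g,h}$ is preserved, while the post-channel state becomes the uniform mixture over the gauge orbit — namely $\rho^{C_{g,h}}$. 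One must check that this is a genuine finite-depth local channel: the plaquette operations on disjoint plaquettes commute and can be organized into a constant number of layers, exactly as in the Abelian case.

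The main obstacle I anticipate is the orbit-counting step: cleanly proving that the joint action of the on-site gauge group on zero-flux configurations of $\mathbb{T}^2$ has orbits in bijection with $C_{g,h}$ for commuting $(g,h)$, and in particular that \emph{every} zero-flux configuration is gauge-equivalent to one of the standard $\lambda_{g,h}$. This is the lattice-gauge-theory fact that a flat connection on the torus is gauge-equivalent to a constant one determined by its two commuting holonomies; it is standard but needs a careful argument on the lattice (e.g. gauge-fixing along a spanning tree, then reading off the two independent holonomies, then conjugation). A secondary subtlety, already flagged in the text, is that in the recovery step of {\bf P0} one may produce intermediate non-Abelian flux excitations; one must invoke the constraint from partial trace (as in \eqref{eq:traceout}, \eqref{eq:traceout-2}) that these fuse to the trivial total flux so that the original stabilizer configuration is the unique consistent one and the recovery is exact — this is what makes the promotion of local to global constraints legitimate in the non-Abelian setting.
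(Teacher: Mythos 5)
Your proposal follows essentially the same route as the paper's proof: promote the local relations (commutation with $T_e^g$, absorption of $B_p^{\bmflux{1}}$, invariance under $A_v^g$) to global ones via {\bf P0}, conclude that any element of $\Sigma(\mathbb{T}^2|\rho^m)$ is diagonal in the group basis, supported on zero-flux configurations, and constant on gauge orbits, and then classify the orbits by the simultaneous conjugacy classes $C_{g,h}$ of commuting holonomies via spanning-tree gauge fixing --- which is exactly the content of the paper's Appendix on classical loop configurations.

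The one slip is in the preparation step. Starting from $\lambda_{g,h}$, which is already a zero-flux product state in the group-element basis, a plaquette-by-plaquette projective measurement of the flux $B_p$ acts trivially and would \emph{not} produce the gauge-orbit average; your parenthetical ``equivalently, a random application of the unitaries $A_v^g$'' is not an equivalence here but is the channel you actually need. The paper's protocol is precisely the mixed unitary channel $\calA(\cdot)=\frac{1}{|G|^{N_v}}\sum_{\{g_v\}}\prod_v A_v^{g_v}(\cdot)A_v^{g_v\dagger}$ applied vertex by vertex, which is manifestly a depth-one local channel and directly yields $\rho^{C_{g,h}}=\calA(\lambda_{g,h})$. (The flux-measurement protocol is the right one only for the Abelian Lemma on $\rho^e$, where the initial state is $|+\rangle^{\otimes N}$ dressed by logical $Z$ strings.) With that substitution your argument is complete and matches the paper's.
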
  
    
\begin{proof}
    The proof is analogous to the one in the Proposition~\ref{prop:convexset}. Following the similar logic, we can show that for any $\rho \in \Sigma(M|\rho^m)$, it satisfies 
    \begin{align}
        T^g_e \rho &= \rho T_e^g \nonumber \\
        B_p^{\bmflux{1}} \rho &=  \rho B_p^{\bmflux{1}} = \rho  \nonumber \\
        A_v^{g} \rho &= \rho A_v^{g}.
    \end{align}
    For any $\alpha \in \Sigma$, the first condition implies that $\alpha = \sum_c \lambda_c|c\rangle \langle c|$,
    where $|c\rangle$ is a configuration of group elements on each link. The second condition implies that every $|c \rangle$ must have zero flux locally, that is $B_p^{\bmflux{1}} |c\rangle =|c\rangle$. The third condition implies that $\alpha = \calA(\alpha)$ where
    \begin{equation}
        \calA(\calO) :=  \frac{1}{|G|^{N_v}} \sum_{\{ g_v \} } \, \prod_v A^{g_v}_v  \calO A^{g_v \dagger}_v.
    \end{equation}
    Therefore, for any $\alpha \in \Sigma$, we can write
\begin{equation}
    \alpha = \sum_C \lambda_c \, \calA(|c\rangle \langle c|).
\end{equation}
Next, $\calA(|c\rangle \langle c|) =\calA(|c'\rangle \langle c'|)$ if $|c \rangle$ and $|c'\rangle$ can be related by some {``gauge transformation''} $\prod_v A_v^{g_v}$, which we denote as $|c\rangle \sim |c'\rangle$. For any configuration $|c \rangle$ with zero local flux, one can always find a gauge transformation into $\lambda_{g,h}$ (the product state of the form in \figref{fig:loopNA}(b)), for some commuting pair $gh=hg$; this is a fact we explain in Appendix~\ref{app:non-Abelian-justify}.
Thus, any element in $\Sigma(M|\rho^m)$ is given as a convex combination of states of the following extreme points:
\begin{align} 
    \rho^{C_{g,h}} :=   \calA( \lambda_{g,h} ).
\end{align}
Since the conjugation by $A_v^r$ generates all possible configuration in the class $C_{g,h}$ in \figref{fig:loopNA}(b), the above state is indeed labeled by the class $C_{g,h}$ in \eqnref{eq:Cgh}. Thus, \eqref{eq:all-extreme-pts} describes every extreme point of $\Sigma(\mathbb{T}^2| \rho^m)$. The extreme points $\rho^{C_{g,h}}$ are a classical mixture of configurations, and they are mutually orthogonal. Furthermore, since $\lambda_{g,h}$ is a product state and the above operation is equivalent to applying a mixed unitary channel~\footnote{A mixed unitary channel is a quantum channel that can be expressed as a convex combination of unitary operations.} at every vertex, this provides a classical preparation protocol. 
\end{proof}

From the formula \eqref{eq:all-extreme-pts}, we see that extreme points are classical separable states. If the group $G$ is Abelian, then $C_{g,h}=\{(g,h)\}$, and the number of extreme points would be $|G|^2$. If the group $G$ is non-Abelian, the number of $C_{g,h}$ equals the number of anyons in the quantum double models, which is less than $|G|^2$~\cite{Bombin2008}. It reflects the fact that non-Abelian fluxes along two non-contractible cycles cannot be fully independent.  The entropies of the extreme points are 
\begin{equation}
    S(\rho^{C_{g,h}}) = S(\rho^{C_{1,1}}) + \log | C_{g,h}|.
\end{equation}
This formula is obtained by computing the sizes of equivalent classes of the configurations, as detailed in Appendix~\ref{app:non-Abelian-justify}.
While the entropy of each extreme point follows a volume law, the entropy difference $S(\rho^{C_{g,h}}) - S(\rho^{C_{1,1}}) = \log | C_{g,h}|$ has a topological origin.

This further implies that the memory capacity
\begin{equation}\label{eq:capacity}
    \calQ(\mathbb{T}^2|\rho^m) = \log |G| + \log |G_{\rm cj}|,  
\end{equation}
where $|G_{\rm cj}|$ is the number of conjugacy classes of $G$. A quick intuitive way to see this is noticing that the maximum entropy is achieved by mixing all extreme points with proper weights. To maximize $S(\rho)$, we can do it by uniformly mixing all the valid elements in the extreme points. To minimize $S(\lambda)$, we pick the extreme point corresponds to $ C_{1,1}$, where $C_{1,1}= \{ (1,1) \}$ contains only one element. Therefore, we arrive at $ \calQ(M|\sigma) = \log N$ where 
\begin{equation}
    N = \{\#(g, h)| gh = hg, \;g,h\in G\}.
\end{equation}
For every element $c$ in a conjugacy class $\calC$, the number of elements in group $G$ that commutes with $c$ is $\frac{|G|}{|\calC|}$. And the number of conjugacy class is $|G_{\rm cj}|$, therefore:
\begin{equation}
    N = \sum_{\calC\in G_{\rm cj}}\frac{|G|}{|\calC|}|\calC| = |G||G_{\rm cj}|.
\end{equation}
For more systematic derivations, see~\appref{app:derivation_memory}.

The Levin-Wen topological entropy for this model is
\begin{equation}\label{eq:LW-TE-G}
    \gamma_{\textrm{LW}} = \log \sqrt{|G|}.
\end{equation} 
The question reduces to computing the entanglement entropy of the fixed point density matrix for disks and annuli, as is done in Appendix~\ref{app:Levin-Wen}.  
Because $|G_{\rm cj}|< |G|$ for any non-Abelian model, the bound on memory capacity in Eq.~\eqref{eq:bound-mixed} is saturated if and only if $G$ is Abelian.

\vspace{5pt} \noindent \emph{Decohering into $\rho^e$.} Unlike $\rho^m$, obtaining $\prod_v A_v^{\bmcharge{1}}$ is not straightforward. Naively, we attempt to decohere fluxes in a fashion analogous to the Abelian case, applying mixed unitary channels in terms of $L^g_e$ where $L^g_e |h \rangle = |gh \rangle$ on edge $e$. Since this does not commute with $A^{R}_v$, we define $L^R_e := \frac{\dim R}{|G|}\sum_{g\in G } \chi_R(g) L^g_e$ for each irreducible representation $R$ of $G$, which gives $[L^R_e, A^{R}_v] = 0$. Therefore, we may use $\{ L^R_e \}$ as the local channel performing projective measurements. 
But applying this channel generically cannot remove $\prod_p B^{\bmflux{1}}_p$ unlike the Abelian case. The difference of charge decoherence and flux decoherence is also evident in the fusion rule of changes and fluxes. In non-Abelian quantum double models, it is no longer true that fusing two fluxes always results in a flux (e.g., in the $S_3$ quantum double \cite{Beigi2011}), whereas fusing two charges still always yields a charge.
We leave the study of flux-decohered non-Abelian states for future work.

\subsection{Secret sharing models}

What distinction between (charge) decohered Abelian and non-Abelian models can we make? One notable difference is the secret sharing~\cite{shamir1979share,cleve1999share,Kato2015,Fiedler2017}, which refers to the scheme to distribute information (or secret) among a group in such a way that only nontrivial collaboration of multiple parties can decode the information.
Let us partition the torus $\mathbb{T}^2$ into four disk-like patches $ABCD$ as shown in Fig.~\ref{fig:secret-sharing}. The question is, what can we learn given pieces of the state?

As a topological memory, two properties immediately follow: $(i)$ Given any three parties (e.g. $\rho_{ABC}$), we can recover the entire state, and the knowledge is complete. $(ii)$ No single party can recover any information. 
But what if we allow only the collaboration between two parties $(A,B)$, $(B,C)$ and $(C,A)$? The two parties can do any joint operations. Suppose we have enough copies of the state, then this corresponds to knowing the reduced density matrices $\rho_{AB}$, $\rho_{BC}$, and $\rho_{CA}$. To be concrete, we shall consider the torus partitioned into $A,B,C$, and $D$ in Fig.~\ref{fig:secret-sharing}.

\begin{figure}[!t]
    \centering
\includegraphics[width=0.75\linewidth]{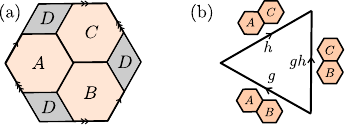} 
\caption{(a) Partition a torus into 4 disk-like pieces $A,B,C$ and $D$. The torus is visualized as a big hexagon with opposite edges identified, as in Fig.~1.2 of Ref.~\cite{thurston2000three}. (b) A schematic drawing of the conjugacy classes detected by the collaboration of two parties (thickened non-contractable loops of the torus), $(A,B) \to C_g$, $(A,C)\to C_h$ and $(B,C)\to C_{gh}$.}\label{fig:secret-sharing}
\end{figure}

This can be converted to a group-theoretic computation. Recall that 
\begin{align}
    C_g &:=\{rgr^{-1} \,|\,r \in G \} \nonumber \\
    C_{g,h} &:= \{ (r g r^{-1}, r h r^{-1}) \,|\, r \in G \}
\end{align}
where $C_{g,h}$ is defined for $gh\,{=}\,hg$.
Suppose we store the information in the extreme point $\rho^{C_{g,h}}$. $AB$ knows $C_g$, $AC$ knows $C_h$, and $BC$ knows $C_{gh}$. When $G$ is Abelain, $C_{g,h}$ is determined from $g,h$ and thus knowing $\rho_{AB}$ and $\rho_{BC}$ is enough.

However, for non-Abelian cases, the collaboration between any two parties may not determine the message. Finding an example is equivalent to solving the following problem. Consider a finite group $G$ which provides a triple $(g, h, h')$, $g, h, h' \in G$, such that
\begin{equation}
    gh = hg, \quad g h' = h' g,
\end{equation}
and $h' \in C_h$, $gh' \in C_{gh}$, which nonetheless, has $C_{g,h} \ne C_{g,h'}$.  

Finding such a group $G$ corresponds to finding a model with information not decodable with the collaboration of any two parties. 

  One such group is the 4-element alternating group $A_4$.  It has 12 group elements, and there are three elements of order 2:
    \begin{equation}
        \alpha :=(12)(34), \quad \beta :=(13)(24) ,\quad \gamma:= (14)(23).
    \end{equation}
    $\alpha^2=\beta^2=\gamma^2=1$,  $\{\alpha,\beta,\gamma\}$ are mutually commuting and \begin{equation}
        \alpha \beta = \gamma, \quad \gamma \alpha =\beta, \quad \beta \gamma =\alpha.
    \end{equation}
    $\{\alpha,\beta,\gamma\}$ forms a conjugation class of $A_4$.  
One can check:
$C_{\alpha,\beta}= \{ (\alpha,\beta), (\beta,\gamma), (\gamma,\alpha)\}$ and $C_{\alpha,\gamma}= \{ (\alpha,\gamma), (\beta,\alpha), (\gamma,\beta)\}$. 
Thus, if we let $g=\alpha$, $h = \beta$, and $h'=\gamma$, all the conditions that enable the nontrivial secret sharing are satisfied. This means that the two extreme points associated with $ C_{\alpha,\beta}$ and $C_{\alpha,\gamma}$ can be distinguished only by the collaboration of 3 parties.

\begin{table}[!t]
    \centering  
    \renewcommand{\arraystretch}{1.1}
    \begin{tabular}{|c|c|c|c|c|}
     \hline
        $G$ & $\rho_{AB}$  & $\rho_{AB}, \rho_{BC}$ & $\rho_{AB}, \rho_{BC}, \rho_{CA}$ & $\rho_{ABC}$ \\
        \hline  
        Abelian  & No & Yes  & Yes  & Yes \\ \hline
        $S_3$ & No  & No & Yes & Yes \\ \hline
         $A_4$  & No & No & No & Yes \\
         \hline
    \end{tabular}
    \caption{Secret sharing properties for models with different groups $G$. ``Yes'' and ``No'' refer to the answers to whether the complete logical information on the torus can be learned.}
    \label{tab:secret-sharing}
\end{table}

Do all non-Abelian quantum doubles have this property? In fact, the decohered $S_3$ quantum double provides another interesting example sitting between any Abelian group and $A_4$. In this model, we do not need $\rho_{ABC}$; instead, the complete logical information can be learned given $\{ \rho_{AB}, \rho_{BC}, \rho_{CA} \}$, but not from any two of the three. The result is summarized in Table~\ref{tab:secret-sharing}.

\section{Higher Dimension}\label{sec:3D}

Let us consider the 3D models based on an Abelian group $G=\mathbb{Z}_2$, adopted from the 3D toric code. 
We consider two reference states of the decohered 3D toric code
\begin{equation}
    \rho^e = \prod_v (1+ \prod_{e \ni v} X_e) , \quad \rho^m = \prod_p (1+ \prod_{e \in p} Z_e).
\end{equation}
These reference states satisfy the 3D version of the axioms. By the 3D version of the axioms, we mean the conditions centered on bounded radius 3D balls, which are the rotated versions of $BC$ or $BCD$ in the 2D axiom figure.

     \begin{figure}[!t]
         \centering
         \includegraphics[width=0.80\linewidth]{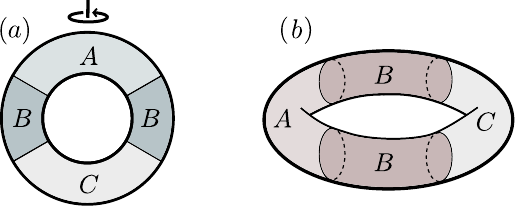}
         \caption{Definition of 3D TEs (a) A partition of a sphere shell that defines $\gamma_{\text{particle}}:= I(A\,{:}\,C|B)/2$ and (b) A partition of the solid torus that defines $\gamma_{\text{flux}}:= I(A\,{:}\,C|B)/2$.}
         \label{fig:TE-3D-definition}
     \end{figure}

     These two fixed points $\rho^e$ and $\rho^m$ must be in two different phases. As discussed in \secref{sec:MS-phases}, if the two states can be separated by a domain wall that preserves all the axioms {\bf P0}, {\bf M0}, and {\bf M1} (i.e. if the domain is transparent), the TE must match. However, the topological entropies summarized in Table~\ref{tab:TE-3D} show the mismatch, implying that two states belong to different topological mixed-state phases. 

     The physical meaning of the two types of topological entropy ($\gamma_{\text{particle}}$, $\gamma_{\text{flux}}$) in Fig.~\ref{fig:TE-3D-definition} is suggested by their names, and is studied in ground states of topological order \cite{Grover2011,Huang2021,Swingle2016}. $\gamma_{\text{particle}}$ measures the total quantum dimension of point particles, and $\gamma_{\text{flux}}$ measures the total quantum dimension of flux loop excitations. Importantly, for any pure state topological order in 3-dimensions, $\gamma_{\text{particle}}=\gamma_{\text{flux}}$. The values of TEs in Table~\ref{tab:TE-3D}, not only suggest that the two phases represented by the fixed points $\rho^e$ and $\rho^m$ are distinct mixed state phases, they also suggest that they are different from any pure state topological order in 3D, by our definition.

\begin{table}[h]
    \centering
    \renewcommand{\arraystretch}{1.3}
     \begin{tabular}{|c|c|c|c|c|}
    \hline
      & $\rho^{\textrm{t.c.}}$ & $\rho^e$ & $\rho^m$ & \, $\otimes_i \rho_i$ \, \\
     \hline
       $\gamma_{\text{particle}}$   & \, $\log \sqrt{2}$ \, & \, $\log \sqrt{2}$ \,  & $0$  & $0$ \\
       \hline
       $\gamma_{\text{flux}}$   & \, $\log \sqrt{2}$ \, & $0$ & \, $\log \sqrt{2}$ \, & $0$  \\
       \hline
    \end{tabular}
    \caption{Topological entropies for 3D models.}
    \label{tab:TE-3D}
\end{table}

   Now let us study the behavior of  $\rho^e$ and  $\rho^m$ as memory on closed manifolds. Consider the 3D manifold $M = S^1 \times S^2$. A simple computation shows 
    \begin{itemize}[leftmargin=13pt]
        \item $\Sigma(M|\rho^e)$  is a simplex with two extreme points
        \item $\Sigma(M|\rho^m)$ is a simplex with two extreme points
    \end{itemize} 
    In both sets, the two extreme points are orthogonal, and the entropy of each extreme point is identical. They both represent a classical memory.
  Can we tell the difference between the two classical memories? There is a simple answer: The secret-sharing schemes are different! 

    \begin{figure}[h]
        \centering
        \includegraphics[width=0.75\linewidth]{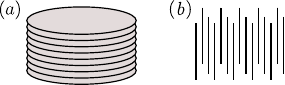}
        \caption{Partition $M=S^2\times S^1$ into pieces to create secret sharing schemes.  (a) Partition $M$ into thickened $S^2$ by dividing $S^1$ into pieces, and (b) Partition $M$ into thickened $S^1$ by dividing $S^2$ into small disks.}
        \label{fig:3D-fibers}
    \end{figure}

We can divide the manifold $S^1 \times S^2$ in two different ways, as illustrated in Fig.~\ref{fig:3D-fibers}.
    For the classical memory $\Sigma(M|\rho^e)$, the information can be decoded on each thickened $S^2$, but no information can be decoded in thickened $S^1$. For the classical memory $\Sigma(M|\rho^m)$, the information can be decoded on each thickened $S^1$, but no information can be decoded in thickened $S^2$.
    
It is further possible to consider fracton models with decoherence. However, we expect such states to violate axiom {\bf M1}. The intuition behind this is that {\bf M1} guarantees a topological sense of smoothness (the information of excitation can be transported in a topological manner), but for fractons, the excitations are free to move only in submanifolds.

\section{Conclusion}\label{sec:conclusion}

In this work, we developed a systematic framework for understanding topological mixed states by formulating a set of fixed-point axioms. These axioms, grounded in the spirit of the entanglement bootstrap \cite{Shi2019fusion} and the recent study of Markov length \cite{Sang2024}, encapsulate fundamental features we expect of any mixed state exhibiting topological correlation or entanglements. Each axiom imposes a constraint reflecting the local recoverability, stability, and consistency of the topological mixed states.

From the bootstrap axioms, we could derive several topological invariants that partially characterize a given topological mixed state phase. The primary diagnosis is the information convex set $\Sigma(M|\sigma)$, a set of density matrices locally indistinguishable from the fixed point $\sigma$ on $M$. On a closed manifold $M$, $\Sigma$ tells us the classical or quantum nature of information encoded in the topological mixed states. 
We identified classes of solvable mixed-state fixed points, showing that they satisfy all the axioms. We can prepare classes of classical fixed points efficiently via finite-depth local channels.
The associated $\Sigma$ is shown to exhibit classical topological memory, marking a sharp contrast to the fully coherent structure of pure topological states.
Another important diagnosis is the topological entropy, which is a straightforward generalization of Levin-Wen topological entanglement entropy. Interestingly, the topological entropy gives a fundamental bound to the memory capacity.

Crucially, our axiomatic framework allows us to go beyond fixed points: by requiring the axioms to hold only under coarse-graining, we explored phases of topological mixed states and provided a notion of equivalence via transparent domain walls.
This naturally leads to the concept of topological channel connectivity, which generalizes traditional notions of phase equivalence. It successfully distinguishes between decohered Abelian topological orders and trivial product states—something the previous naive definition of two-way connectivity via FDLC failed to capture. Furthermore, as long as \emph{approximate} axioms hold, we proved that topological entropy remains invariant across the system upto exponentially small corrections; this further substantiates that our axioms play the role of a gap in mixed states.

We further investigated the richness of non-Abelian topological mixed states. While many of the classical mixed states are unentangled and separable, we demonstrated that they may nevertheless exhibit nontrivial features such as multi-party classical secret-sharing and nonlocal storage of classical information. These properties are absent in decohered Abelian phases, highlighting a new kind of nontriviality in the mixed-state setting. Our results also suggest that there may exist an interesting hierarchy among decohered topological mixed states, which is of a different kind proposed in \cite{Tantivasadakarn_2023} based on the preparation complexity using measurements and feedback.

This paper takes a first important step toward the systematic understanding of topological mixed states, which inspires numerous interesting directions. 
\begin{itemize}[leftmargin=13pt]
    \item Incorporating global symmetries: Def.~\ref{def:TCC} can be naturally extended to incorporate global symmetries: given a topological channel, we can additionally impose a symmetry to be conserved on its truncated boundary. This will open a route to study mixed-state versions of symmetry-enriched topological orders as well as symmetry-protected topological states. 
    \item Higher dimensions: While the axioms naturally extend to higher dimensions and we addressed some examples, a comprehensive classification of 3D and beyond remains an open challenge.
    \item RG flow and numerical study: More detailed numerical studies of RG behavior under the axioms could uncover critical points and universality classes in mixed-state settings. While path-integral formalism suggests that this criticality would be that of the same-dimensional classical models, a more systematic understanding of critical behavior is lacking, such as how finite-size scaling works. 
    \item Information robustness: 
    Inspired by recent experimental realizations of the $D_4$ topological order \cite{shortestrouteD4_2023,D4trappedion_2024}, which remain coherent under maximal $X$-decoherence \cite{sala2024D4decoherence}, we suggest that tools in our framework, e.g. information convex set and memory capacity, may offer a more direct way to understand such robustness. For instance, our method may show that the convex set becomes a Bloch sphere at maximal decoherence.

    \item Domain walls and mixed-pure interfaces: The study of domain walls between mixed and pure topological phases may yield insights into emergent quasiparticles and braiding, especially when viewed through the lens of information convex sets. This may explain what the right mixed state analog of anyon condensation \cite{Kitaev-Kong2012,Kong2014,Kong2024higher} is, by deriving the fundamental constraints on excitations tunneling through the domain walls.

    \item Chiral mixed state phases: Can there exist mixed-state analogs of chiral topological order? If so, they might necessarily violate the axioms by an amount that decays with system size, akin to how chiral fixed points cannot be exactly realized on finite-dimensional Hilbert spaces~\cite{strict-J-2024}. However, the modular commutator stops being a topological invariant in mixed state setups; see Appendix~\ref{app:non-topo-invariant} (and also~\cite{Kim2022c-minus,chen2023symmetryenforced}).  One may need another chirality measure~\cite{Vardhan2025}. Furthermore, should domains between chiral and nonchiral mixed state phases break more assumptions than {\bf M1}? If so, perhaps one will be able to identify new fixed point conditions~\cite{Lin2023}. 
    
    \item Generalized symmetries: Our approach offers a potential route to derive emergent generalized symmetries, such as higher-forms or non-invertible ones, directly from the axioms rather than postulating them externally.
    
    \item Classical preparability and braiding: In general, what obstructions exist for preparing certain mixed states using finite-depth local channels? We expect the answer lies in a clearer understanding of the emergent particles and their braiding. One possible route is to consider the information convex set for embedded subregions and locally embedded (i.e., immersed) regions. In parallel to the pure state entanglement bootstrap, we expect that insight into braiding can be gained by deforming the region back to its original position, passing through a nontrivial path, and observing the nontrivial isomorphism (Appendix~\ref{app:ICS-Omega}) of the information convex set. In contrast to the braiding non-degeneracy derived from area law axioms, we expect the volume law axioms to allow braiding degeneracy as the outcome. 
    \end{itemize}

\section*{Acknowledgments}
JYL dedicates this paper to his daughter Seol Lee, 
catalyst and companion to this work from conception to birth. 
We thank Soonwon Choi, Tyler Ellison, Haeum Kim, Isaac Kim, Hyukjoon Kwon, Michael Levin, Yizhou Ma and Shengqi Sang for discussions and feedback. 
BS thanks Tarun Grover, Meng Cheng, Timothy Hsieh and Chong Wang for inspiring discussions about mixed states.  
This research was supported in part by Perimeter Institute for Theoretical Physics. 
The computational work is conducted through the Illinois Campus Cluster Program at the University of Illinois, Urbana-Champaign.
THY is supported by the Taiwan-UIUC Scholarship program and Elite Dream Project Grant of Veterans and Dependents Foundation, R.O.C. 
BS acknowledges support from NSF under award number PHY-2337931 at UC Davis. JYL acknowledges support from KIAS through the Quantum Universe Center scholar program.
BS and JYL are supported by the IQUIST fellowship and faculty startup grant at the University of Illinois, Urbana-Champaign.

\appendix

\section{More on mixed-state bootstrap}

\subsection{Global version of the axioms}\label{app:axioms-global}

In the main text, we stated {\bf M0} and {\bf M1} in local patches, making {\bf P0} the only axiom that needs the global state. In fact, {\bf P0} implies a global version of the other two axioms. By the global version of the axioms {\bf M0} and {\bf M1} we mean the following:
\begin{itemize}[leftmargin=13pt]
    \item {\bf M0*}: $I(AA':C)=0$ for the partition in Fig.~\ref{fig:axiom-alternative}(a), where $AA'$ is the complement of $BC$ in the global system $M$.
    \item {\bf M1*}: $I(AA':C|B)=0$ for the partition in Fig.~\ref{fig:axiom-alternative}(b), where $AA'$ is the complement of $BCD$ in the global system $M$.
\end{itemize}

Here we explain the details.

\begin{figure}[h]
    \centering
    \includegraphics[width=0.85\linewidth]{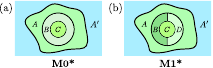}
    \caption{Alternative axioms of {\bf M0*} and {\bf M1*}. Here, $ABC$ and $ABCD$ are the local patches that appear in the definition of {\bf M0} and {\bf M1}. $AA'$ is the complement of the local region $BC$ or $BCD$ on the entire system.}
    \label{fig:axiom-alternative}
\end{figure}

\begin{Proposition}
     A reference state $\sigma_M$ satisfies {\bf P0}, {\bf M0} and {\bf M1} if and only if it satisfies {\bf P0}, {\bf M0*} and {\bf M1*}.
\end{Proposition}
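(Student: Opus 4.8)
The plan is to prove the two directions separately, with the forward implication essentially immediate and the converse resting on a single structural observation about {\bf P0}. For the implication from the starred to the unstarred axioms, {\bf P0} is common to both lists, while $I(A{\,:\,}C)_\sigma \le I(AA'{\,:\,}C)_\sigma$ and $I(A{\,:\,}C|B)_\sigma \le I(AA'{\,:\,}C|B)_\sigma$ by monotonicity of the (conditional) mutual information under tracing out $A'$---a consequence of strong subadditivity---so {\bf M0*} $\Rightarrow$ {\bf M0} and {\bf M1*} $\Rightarrow$ {\bf M1}. The content is in the converse.

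For the converse I would use the geometric fact that, in the local patch of {\bf M0} (resp.\ {\bf M1}), the region $A$ is the outermost annular shell, hence an annular collar of the disk $BC$ (resp.\ $BCD$) whose exterior is exactly $A' := M\setminus(\text{local patch})$; note $A\cup A'$ is then precisely the complement of $BC$ (resp.\ $BCD$) in $M$. Applying {\bf P0} to this configuration gives $I(A'{\,:\,}BC\,|\,A)_\sigma = 0$ in the {\bf M0} case and $I(A'{\,:\,}BCD\,|\,A)_\sigma = 0$ in the {\bf M1} case. In the first case, discarding $B$ gives $I(A'{\,:\,}C\,|\,A)_\sigma = 0$, and the chain rule $I(AA'{\,:\,}C) = I(A{\,:\,}C) + I(A'{\,:\,}C\,|\,A)$ together with {\bf M0} yields $I(AA'{\,:\,}C)_\sigma = 0$, i.e.\ {\bf M0*}. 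In the second case, discarding $D$ gives $I(A'{\,:\,}BC\,|\,A)_\sigma = 0$, so the chain rule $I(A'{\,:\,}BC\,|\,A) = I(A'{\,:\,}B\,|\,A) + I(A'{\,:\,}C\,|\,AB)$ and non-negativity of each term force $I(A'{\,:\,}C\,|\,AB)_\sigma = 0$; then $I(AA'{\,:\,}C\,|\,B) = I(A{\,:\,}C\,|\,B) + I(A'{\,:\,}C\,|\,AB)$ together with {\bf M1} yields $I(AA'{\,:\,}C\,|\,B)_\sigma = 0$, i.e.\ {\bf M1*}. This uses only {\bf P0} with {\bf M0} (resp.\ {\bf M1}) and is insensitive to the internal arrangement of $B$ and $D$, since it only ever invokes that $A$ is the outer collar.

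The step I expect to need the most care is justifying that {\bf P0} may be applied to the enlarged disk/collar pairs $(BC,A)$ and $(BCD,A)$ rather than the canonical one of \figref{fig:axioms}(a). Here I would lean on the fixed-point nature of the axioms emphasized in the main text: {\bf P0} on a thin collar propagates to any thicker concentric collar, because $I(E{\,:\,}C\,|\,B)_\sigma = 0$ together with the chain rule forces the corresponding condition after absorbing an adjacent annulus $B'\subset E$ into the conditioning region, and one may likewise inflate the enclosed disk when working at larger length scale, with spatial uniformity placing the configuration where it is needed. Everything else is bookkeeping with the chain rule and monotonicity of (conditional) mutual information---no quantitative estimates are required---so once the geometric picture and this scale argument are in place, the proof closes quickly.
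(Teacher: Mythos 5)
Your proof is correct and follows essentially the same route as the paper: the forward direction is monotonicity of (conditional) mutual information under tracing out $A'$, and the converse applies {\bf P0} to the configuration in which the outer annulus $A$ of the local patch separates $BC$ (resp.\ $BCD$) from $A'=M\setminus ABC(D)$, then combines the result with {\bf M0} (resp.\ {\bf M1}). The only cosmetic difference is mechanical: the paper converts $I(A'{\,:\,}BC(D)\,|\,A)_\sigma=0$ into a recovery channel $\calE_{A\to AA'}$ and invokes monotonicity of the (conditional) mutual information under a channel acting on an unconditioned party, whereas you reach the same conclusion via chain-rule identities such as $I(AA'{\,:\,}C|B)=I(A{\,:\,}C|B)+I(A'{\,:\,}C|AB)$ together with non-negativity of each term---equivalent bookkeeping resting on the same SSA input---and your concern about applying {\bf P0} to the enlarged collar is resolved exactly as the paper implicitly does, by SSA-propagation of the vanishing conditional mutual information to larger concentric regions.
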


\begin{proof} $(\Leftarrow)$
    % First, we prove the  ``if'' direction. 
    This follows immediately from the strong subadditivity, which says $I(AA':C) \ge I(A\,{:}\,C)$ and $ I(AA':C|B) \ge I(A\,{:}\,C|B).$

     % Second, we prove the  ``only if'' direction. 
     $(\Rightarrow)$ {\bf P0} implies the existence of a local channel on $BC$ to recover a missing disk $C$. (See Fig.~\ref{fig:axioms}(a).) It implies the vanishing of conditional mutual information $I(C:E|B)=0$. Noticing that the condition is symmetric under the exchange of $C$ and $E$, we immediately see that the state on region $E$ can be recovered by a channel on $BE$. Applying this to the configurations in Fig.~\ref{fig:axiom-alternative}, namely  
     for the partition of the system into $ABC A'$ where $A'= M\setminus ABC(D)$, we find the following.   {\bf P0} implies that, $\exists$ a quantum channel $\calE_{A\to AA'}$ such that
     \begin{equation}
         \calE_{A\to AA'} (\sigma_{ABC}) = \sigma_{AA'BC},
     \end{equation}
     for both Fig.~\ref{fig:axiom-alternative}(a) and (b).
     This implies that 
     \begin{itemize}[leftmargin=13pt]
         \item for the case in Fig.~\ref{fig:axiom-alternative}(a), $I(A\,{:}\,C)_\sigma \ge I(AA':C)_\sigma.$ This is because mutual information cannot increase when applying a quantum channel to one party. Then, {\bf M0} implies that $I(AA':C)_\sigma=0$. Thus, {\bf M0*} holds.
         \item for the case in Fig.~\ref{fig:axiom-alternative}(b), 
         $$ I(A\,{:}\,C|B)_\sigma \ge I(AA':C|B)_\sigma, $$  
         because quantum channels on an unconditioned party cannot increase the conditional mutual information. Then, {\bf M1} implies that $I(AA':C|B)_\sigma=0$. Thus, {\bf M1*} holds.
         \end{itemize}
        This completes the proof.
         \end{proof}

\subsection{Information convex set is convex}\label{app:convex}

We show that $\Sigma(M|\sigma)$ defined in Def.~\ref{def:ICS-mixed-closed-M} must be a compact convex set. Here $\sigma$ is a reference state on $M$, which satisfies all the axioms {\bf P0}, {\bf M0}, {\bf M1}.
We note that any state $\rho_M \in \Sigma(M|\sigma)$ has an identical reduced density matrix as the reference state on any disk.

\begin{lemma}\label{lemma:Sigma(Disk)}
For any $\rho_M \in \Sigma(M|\sigma)$ and any disk-like region
    $\mathfrak{D} \subset M$, 
    \begin{equation}
        \rho_{\mathfrak{D}} = \sigma_{\mathfrak{D}}.
    \end{equation}
\end{lemma}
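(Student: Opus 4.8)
The plan is to bootstrap the defining hypothesis---``$\rho_M$ and $\sigma_M$ agree on all bounded-radius balls, and $\rho_M$ satisfies \textbf{P0}''---up to agreement on an arbitrary disk $\mathfrak{D}$, which may be comparable in size to $M$. The workhorse is a \emph{merging lemma}: if $\rho_{ABC}$ and $\tau_{ABC}$ both satisfy $I(A:C|B)=0$ and have the same two marginals, $\rho_{AB}=\tau_{AB}$ and $\rho_{BC}=\tau_{BC}$, then $\rho_{ABC}=\tau_{ABC}$. The reason is that a state with $I(A:C|B)=0$ equals the image of its $AB$-marginal under the Petz recovery channel $\mathcal{R}_{B\to BC}$, and this channel is built solely from $\rho_{BC}$; equal marginals therefore produce equal global states. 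I would state and prove this first---it is the same structure already used around Eq.~\eqref{eq:recoverychannel} and in the proof of Prop.~\ref{prop:convexset}.

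Second, I would realize $\mathfrak{D}$ as a finite increasing union $G_0\subset G_1\subset\cdots\subset G_n=\mathfrak{D}$, where $G_0$ is a bounded-radius ball and each $G_{k}=G_{k-1}\cup c_{k}$ is obtained by adjoining a bounded-radius ball $c_k$ whose overlap $b_k:=c_k\cap G_{k-1}$ \emph{shields} the new piece $a_k:=c_k\setminus G_{k-1}$ from $G_{k-1}$, i.e.\ the triple $(a_k,b_k,M\setminus c_k)$ sits in the geometry of \textbf{P0}. With this choice, \textbf{P0}---in the form that $I(X:Y|Z)=0$ for any such shielding configuration, which follows from the stated axiom by strong subadditivity and the coarse-graining argument of the ``fixed point interpretation'' discussion---gives, after discarding part of the unconditioned system and using $I(G_{k-1}\setminus b_k:a_k\mid b_k)=I(G_{k-1}:a_k\mid b_k)$, that $I(G_{k-1}:a_k\mid b_k)=0$ for both $\rho_M$ and $\sigma_M$. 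Since $\rho_{c_k}=\sigma_{c_k}$ by local indistinguishability ($c_k$ being a bounded-radius ball) and $\rho_{G_{k-1}}=\sigma_{G_{k-1}}$ by induction, the merging lemma yields $\rho_{G_k}=\sigma_{G_k}$; after $n$ steps, $\rho_{\mathfrak{D}}=\sigma_{\mathfrak{D}}$.

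The step I expect to be the real obstacle is the geometric one: arranging the cover so that at \emph{every} stage the overlap $b_k$ genuinely surrounds the freshly adjoined core $a_k$ while the partial unions $G_k$ stay simply connected. This is precisely where contractibility of $\mathfrak{D}$ enters---because $\mathfrak{D}$ is a disk one can sweep it out without ever having to ``close a loop'', so the Markov screening supplied by \textbf{P0} is never obstructed; on a non-contractible region the same procedure breaks down, which is consistent with $\Sigma(M|\sigma)$ being able to carry several mutually distinguishable states on a topologically nontrivial $M$ (Prop.~\ref{prop:convexset}). The detailed bookkeeping for the sequence of merges is identical to that of the pure-state entanglement bootstrap, where the analogous fact---that $\Sigma$ of a disk is a single point---is established by the same merging technique~\cite{Shi2019fusion}; the only substitution is that the conditional-mutual-information input comes from \textbf{P0} rather than from \textbf{A0}.
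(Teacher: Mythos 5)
Your overall skeleton --- determine $\rho_{\mathfrak{D}}$ from its bounded-radius marginals by an inductive sequence of Petz-map merges --- is exactly the paper's strategy, and your merging lemma is correct as stated. The gap is in where you source the conditional independence that licenses each merge. You claim the triple $(a_k,b_k,M\setminus c_k)$ ``sits in the geometry of \textbf{P0}''; it does not, and cannot. In \textbf{P0} the conditioning region $B$ is an annulus that \emph{completely encloses} the disk $C$, separating it from the complement $E$. Your frontier cap $a_k=c_k\setminus G_{k-1}$ is shielded by $b_k=c_k\cap G_{k-1}$ only on the side facing $G_{k-1}$; on its other side it borders $M\setminus G_k$ directly. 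For $b_k$ to enclose $a_k$, the cap $a_k$ would have to be a hole already surrounded by $G_{k-1}$, which never happens while the $G_k$ are simply connected and growing outward --- the induction cannot take even its first step. Nor does strong subadditivity rescue this: starting from $I(E:C|B)=0$ with $B$ a full annulus, the only derivable conditions are of the form $I(E':C'|B')=0$ with $E'\subset E$, $C'\subset C$, $B'\supset B$, so the conditioning region always still contains a full annulus around the bounded piece. A partial shield is never obtainable from \textbf{P0} alone, and your closing assertion that contractibility of $\mathfrak{D}$ means ``the Markov screening supplied by \textbf{P0} is never obstructed'' is precisely the step that fails.

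The missing ingredient is \textbf{M1}. Because \textbf{M1} is a condition on a bounded local patch $ABCD$, any $\rho\in\Sigma(M|\sigma)$ inherits it from $\sigma$ by local indistinguishability, and it is exactly \textbf{M1} that asserts $I(A:C|B)=0$ when $B$ (together with the traced-out $D$) only \emph{partially} shields $C$ --- the configuration your growth step actually produces. The paper's proof runs your induction but feeds each merge with \textbf{M1}, upgraded to large $A$ via the globalization argument of Appendix~\ref{app:axioms-global} (which is where \textbf{P0} genuinely enters, as a tool for extending local conditions rather than as the source of the partial-shield Markov property). With that substitution your argument closes. This also relocates the role of topology correctly: on a non-contractible region the obstruction is not that \textbf{P0}'s screening gets blocked, but that the merge sequence eventually demands a conditional independence that neither \textbf{M1} nor \textbf{P0} supplies.
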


This follows from {\bf M1} and the definition of $\Sigma(M|\sigma)$. The proof is identical to that in Lemma~3.1 and Prop.~3.5 of \cite{Shi2019fusion}, and thus we only recall the essence.   
The axiom {\bf M1} and the monotonicity of conditional mutual information under partial trace imply
\begin{equation}
    I(A\,{:}\,C|B)_\rho =0 \quad \text{for}\quad  \vcenter{\hbox{\includegraphics[width=0.08\textwidth]{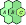}}} 
\end{equation}
where the $ABC$, $AB$, and $BC$ shown in the figure are disk regions in the bulk and $BC$ is contained in a bounded radius disk. Thus, $\rho_{ABC}$ is a quantum Markov state, and it is uniquely determined by $\rho_{AB}$ and $\rho_{BC}$. We use the same idea to divide the state into smaller pieces, each of which is contained in some bounded radius disk, as appeared in Def.~\ref{def:ICS-mixed-closed-M}. As both $\rho_\mathfrak{D}$ and $\sigma_\mathfrak{D}$ are determined (in the same way) by the same density matrices on the marginals, they must be identical.

\begin{Proposition}[convex set]
   The $\Sigma(M|\sigma)$ in \eqnref{eq:ICS} is a compact convex set.     
\end{Proposition}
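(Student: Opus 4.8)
The plan is to prove the two properties in turn. Convexity reduces to checking that the two defining conditions of Definition~\ref{def:ICS-mixed-closed-M} are stable under forming mixtures; compactness follows by exhibiting $\Sigma(M|\sigma)$ as a closed subset of the (finite-dimensional, hence compact) state space of $M$. Note that $\Sigma(M|\sigma)\neq\emptyset$ since $\sigma\in\Sigma(M|\sigma)$, so the statement is nonvacuous.

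\emph{Convexity.} I would fix $\rho_1,\rho_2\in\Sigma(M|\sigma)$ and $p\in[0,1]$, and set $\bar\rho=p\rho_1+(1-p)\rho_2$. Condition (1) is immediate from linearity of the partial trace: $\bar\rho_b=p(\rho_1)_b+(1-p)(\rho_2)_b=\sigma_b$ for any bounded-radius ball $b$. The substantive point is condition (2), namely $I(E:C|B)_{\bar\rho}=0$ for the {\bf P0} partition, which is not automatic because conditional mutual information is not linear in the state. The resolution I would use is that the recovery channel witnessing {\bf P0} can be chosen \emph{universally}: since $BC$ lies inside a bounded-radius disk, Lemma~\ref{lemma:Sigma(Disk)} gives $\rho_{BC}=\sigma_{BC}$ and $\rho_B=\sigma_B$ for every $\rho\in\Sigma(M|\sigma)$. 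Hence the Petz recovery map $\mathcal R^\sigma_{B\to BC}$ built from $\sigma_{BC}$ and $\mathrm{Tr}_C$ (explicitly $X_B\mapsto\sigma_{BC}^{1/2}\big(\sigma_B^{-1/2}X_B\sigma_B^{-1/2}\otimes\mathbb{1}_C\big)\sigma_{BC}^{1/2}$ on the support) is one and the same channel for all elements of $\Sigma(M|\sigma)$, and the equality case of strong subadditivity gives $(\mathrm{id}_E\otimes\mathcal R^\sigma_{B\to BC})\big((\rho_i)_{BE}\big)=(\rho_i)_{BCE}$ for $i=1,2$. Applying this same channel to $\bar\rho_{BE}$ and using linearity yields $(\mathrm{id}_E\otimes\mathcal R^\sigma_{B\to BC})(\bar\rho_{BE})=\bar\rho_{BCE}$, so $\bar\rho$ is a quantum Markov state for $E-B-C$, i.e. $I(E:C|B)_{\bar\rho}=0$, and therefore $\bar\rho\in\Sigma(M|\sigma)$.

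\emph{Compactness.} For a finite lattice on $M$ the set of density matrices on $M$ is a compact subset of a finite-dimensional space, so it suffices to check that $\Sigma(M|\sigma)$ is closed. It is an intersection of closed sets: (i) for each bounded-radius ball $b$, the affine slice $\{\rho:\rho_b=\sigma_b\}$ is closed, being the preimage of a point under the continuous map $\rho\mapsto\rho_b$; and (ii) $\{\rho:I(E:C|B)_\rho=0\}=\{\rho:I(E:C|B)_\rho\le 0\}$, using strong subadditivity, is closed because $\rho\mapsto I(E:C|B)_\rho$ is continuous, being a finite combination of von Neumann entropies, each continuous in finite dimension. A closed subset of a compact set is compact. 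One may add that, as a compact convex set, $\Sigma(M|\sigma)$ has extreme points by Krein--Milman, which is what makes the later distinction between ``classical'' and ``quantum'' topological memory in Fig.~\ref{fig:Bloch_Polytope} well posed.

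I expect the only real obstacle to be conceptual rather than technical: conditional mutual information does not mix linearly, so $I(E:C|B)_{\bar\rho}=0$ cannot be read off directly from the two summands. The key step that unlocks the argument is the universality of the Petz recovery map, which rests on Lemma~\ref{lemma:Sigma(Disk)} and hence ultimately on axiom {\bf M1}. This parallels the pure-state argument of \cite{Shi2019fusion}, where the analogous role is played by the universality of the local reduced density matrices.
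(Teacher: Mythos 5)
Your proposal is correct and follows essentially the same route as the paper's Appendix proof: the crux in both is that Lemma~\ref{lemma:Sigma(Disk)} forces $\rho_{BC}=\sigma_{BC}$ on the bounded disk, so the Petz map $\calE^{\sigma}_{B\to BC}$ is one universal, state-independent channel, turning the nonlinear condition $I(E:C|B)_\rho=0$ into a linear constraint preserved under mixtures. Your compactness argument via continuity of the conditional mutual information is a slight variant of the paper's "closed affine/linear constraints on a compact state space" phrasing, but the substance is identical.
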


\begin{proof}
    First, $M$ is a finite lattice, and thus the state space (i.e., the set of all density matrices) of a finite-dimensional Hilbert space is a finite-dimensional compact
convex set $S$. The set $\Sigma(M|\sigma)$ is compact because it is obtained by adding a set of constraints that are linear equalities to the set $S$. Let us analyze the conditions one by one:
\begin{itemize}[leftmargin=13pt]
    \item The first constraint is the local indistinguishability. For any bounded radius ball $b\subset M$, we require $\rho_{b}\,{=}\,\sigma_b$. These are clearly a set of linear constraints.
    \item This second constraint, local recoverability, is the requirement that {\bf P0} is satisfied for an $\rho_M \in \Sigma(M|\sigma)$. Noticing that $\rho_M$ reduced to an arbitrary disk must be identical to the reference state (Lemma~\ref{lemma:Sigma(Disk)}), we see that the constraint can be rewritten as
    \begin{equation}
        \calE^{\sigma}_{B\to BC} (\rho_{AB}) = \rho_{ABC} , \quad \forall \rho_{ABC} \in \Sigma(M|\sigma).
    \end{equation}
    where $\calE^{\sigma}_{B\to BC}(\cdot)$ is the Petz map, which is fixed by the reference state $\sigma_{BC}$ on local disk $BC$~\cite{PETZ_2003} and is independent of the state $\rho$.
    These are another set of linear constraints.
\end{itemize}
Thus, $ \Sigma(M|\sigma) $ is a compact convex set.
\end{proof}

\subsection{Information robustness: the physical reason for {\bf P0}}\label{app:info-robustness}

We proposed that the information convex set $\Sigma(M|\sigma)$ captures the notion of physically robust information storage. But why is this the appropriate choice?
In particular, one might ask: why not relax the local recoverability ({\bf P0}) in Def.~\ref{def:ICS-mixed-closed-M}? If we were to do so, we would arrive at an alternative set, but as we will show, this modified set fails to retain key features essential for topological robustness.

Consider the alternative set
\begin{equation}
    \tilde{\Sigma}(M|\sigma) = \{ \rho_M|\rho_b = \sigma_b,\,\forall b \subset M\},
\end{equation}
where $b$ is any bounded radius disk and $M$ is a closed manifold. $\tilde{\Sigma}$ is the set of states on $M$, which are locally indistinguishable with the reference state.
It is easy to show $\tilde{\Sigma}(M|\sigma)$ is a compact convex set that contains ${\Sigma}(M|\sigma)$. When the pure state bootstrap axioms ({\bf A0} and {\bf A1}) apply, these two sets are identical. This is because {\bf A0} on the local ball implies {\bf P0}. However, in mixed state setups, 
 $\tilde{\Sigma}(M|\sigma)$ can be strictly larger, as we explain in examples below.

\begin{exmp} 
Let $\sigma = \prod_{i\in \calS} 1_i$. Here $\calS$ is the set of all qubits of the system.
In this case, $\Sigma(M|\sigma)= \{ \sigma_M\}$. On the other hand, 
\begin{equation}\label{eq:lost1}
    \frac{1 \pm \prod_{i\in \calS} X_i}{2} \in \tilde{\Sigma}(M|\sigma).
\end{equation}
    This is because, as long as we trace out any single site, the state will be the maximum mixed state. Thus $\tilde{\Sigma}$ is strictly larger than $\Sigma$.
\end{exmp}

\begin{exmp} Let $M= \mathbb{T}^2$ and let $\sigma =  \prod_v A^{\bm{1}}_v.$ be the dephased toric code, as in Example~\ref{exmp:toric-code-fixed-point}. In this case, $\Sigma(M|\sigma)$ is a simplex with 4 extreme points (Prop.~\ref{prop:convexset}). These extreme points are
\begin{equation}
    \rho^e_{ab} = \prod_v A_v^{\bm{1}} (1 \pm \bar{X}_1)  (1 \pm \bar{X}_2), 
\end{equation}
where $\bar{X}_1$ and $\bar{X}_2$ are non-contractible $X$ string on the dual lattice along the noncontractible $x,y$ cycles of the torus respectively.
On the other hand, we have more states in $\tilde{\Sigma}(M|\sigma)$, e.g.
\begin{equation}\label{eq:state-Z}
   (1 + \bar{Z}_j)\prod_v A_v^{\bm{1}}   (1 + \bar{Z}_j) \in \tilde{\Sigma}(M|\sigma).
\end{equation}
where $\bar{Z}_j$, with $j=1,2$ are non-contractible $Z$ strings along the non-contractible $x,y$ cycles of the torus respectively. 
The reason is that the noncontractible loop $\bar{Z}_j$ disappears after the partial trace to a local region, and thus the state in Eq.~\eqref{eq:state-Z} is locally indistinguishable from $\sigma$. The distinction between $\bar{X}_j$ and $\bar{Z}_j$ is that the former can be smoothly deformed when acting on $\rho\in \Sigma(M|\sigma)$, whereas the latter cannot. 
\end{exmp}

We have seen from examples above that $\tilde{\Sigma}(M|\sigma)$ can be strictly larger than ${\Sigma}(M|\sigma)$. We are ready to explain why such a convex set does not represent the set of ``topologically protected'' information. Suppose that the environment is capable of removing any local ball and resetting the state to a local product state. Then the information represented by states in $\tilde{\Sigma}(M|\sigma)$ cannot be recovered. Explicitly, 
\begin{itemize}[leftmargin=13pt]
    \item For the state in Eq.~\eqref{eq:lost1}, tracing out any qubit will result in the loss of the information.
    \item For the state in Eq.~\eqref{eq:state-Z}, tracing out any qubit touching the (nondeformable) $\bar{Z}_j$ string will result in the loss of the information.
\end{itemize}
On the other hand, due to {\bf P0}, the information in ${\Sigma}(M|\sigma)$ is always preserved by such a process. In fact, the original states can be recovered by the local channel surrounding the hole. This is the main physical reason we prefer Def.~\ref{def:ICS-mixed-closed-M} as the definition of information convex set, which captures the physical information storage that is robust.

\subsection{Bound on relative entropy change}\label{app:relative-S}

Consider a system partition into $BCE$ where $BC$ is a bounded region and $E$ may be large, allowing $\dim \calH_{E}\to \infty$. In general, it is possible to induce a global change in the density matrix by applying a local quantum channel $\calE_C$, even if such a global change is inaccessible (negligible) if we trace out the complement of the local neighborhood of $C$. In other words,
\begin{equation}
 D(\rho_{EBC} \Vert \calE_C[\rho_{EBC}])  \gg D(\rho_{BC} \Vert \calE_C[\rho_{BC}]),
\end{equation}
may happen. This phenomenon happens in the context of SWSSB phase transition.

We explain why {\bf P0} prevents this from happening. First, the conditional mutual information is monotonic under the action of quantum channels on an unconditioned region. Thus, \begin{equation}
    0 = I(E:C|B)_{\rho} \geq I(E:C|B)_{\calE_C[\rho]} = 0.
\end{equation}
Importantly, $\rho_{BCE}$ and $\calE_C[\rho_{BCE}]$ can be recovered from  $\rho_{BC}$ and $\calE_C[\rho_{BC}]$ respectively by the \emph{same} quantum channel. Such a quantum channel can be chosen to be the Petz map~\cite{PETZ_2003} $\calE^{\text{Petz}}_{B\to BE}(\cdot):= \rho_{BE}^{\frac{1}{2}}\, \rho_B^{-\frac{1}{2}} (\cdot)\rho_{B}^{-\frac{1}{2}} \rho_{BE}^{\frac{1}{2}}$.
By the monotonicity of relative entropy we must have $ D(\rho_{BC} \Vert \calE_C[\rho_{BC}]) \le D(\rho_{EBC} \Vert \calE_C[\rho_{EBC}])$, where the quantum channel is the partical trace $\Tr_E$. The existence of the recovery map $\calE^{\text{Petz}}_{B\to BE}$, on the other hand, implies $D(\rho_{EBC} \Vert \calE_C[\rho_{EBC}]) \le D(\rho_{BC} \Vert \calE_C[\rho_{BC}])$. Thus,
\begin{align}
    D(\rho_{EBC} \Vert \calE_C[\rho_{EBC}]) = D(\rho_{BC} \Vert \calE_C[\rho_{BC}]).
\end{align}
Therefore, the local channel acting on $C$ cannot induce a global change in the density matrix.

\subsection{Sphere has zero memory capacity}\label{app:S2}

We show that the information convex set on the sphere has a unique element. Thus, no robust information can be stored on a sphere. The memory capacity $\calQ(S^2|\sigma)=0$.

\begin{Proposition}
    Let $S^2$ be a sphere, and $\sigma$ be a reference state, then 
    \begin{equation}
        \Sigma(S^2|\sigma) = \{ \sigma_{S^2}\}.
    \end{equation}
\end{Proposition}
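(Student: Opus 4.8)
The plan is to adapt the uniqueness argument from Lemma~\ref{lemma:Sigma(Disk)}, which already shows that any $\rho_M\in\Sigma(M|\sigma)$ agrees with $\sigma$ on every disk-like region, and to upgrade it: on the sphere there is no non-contractible cycle, so a disk can be grown until it is all of $S^2$ minus an arbitrarily small cap, and then the recoverability axiom {\bf P0} on that small cap pins down the global state as well. Concretely, I would proceed as follows.

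First I would invoke Lemma~\ref{lemma:Sigma(Disk)}: for any $\rho_M\in\Sigma(S^2|\sigma)$ and any disk $\mathfrak{D}\subset S^2$ one has $\rho_{\mathfrak{D}}=\sigma_{\mathfrak{D}}$. Next, take a partition of $S^2$ into $ABC$ where $C$ is a small disk-like cap, $B$ is a thin annulus surrounding it, and $A$ is the remaining large disk (the complement of the disk $BC$). Since $A$, $AB$, and $BC$ are all disk-like regions, Lemma~\ref{lemma:Sigma(Disk)} gives $\rho_{AB}=\sigma_{AB}$ and $\rho_{BC}=\sigma_{BC}$. Now {\bf P0} applied to $\rho_M$ gives $I(A:C|B)_{\rho}=0$, i.e. $\rho_{ABC}$ is a quantum Markov chain $A-B-C$ and is therefore uniquely reconstructed from $\rho_{AB}$ and $\rho_{BC}$ by the Petz recovery map $\calE^{\sigma}_{B\to BC}$, which depends only on $\sigma_{BC}$. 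Since the same reconstruction applied to $\sigma_{AB}$, $\sigma_{BC}$ yields $\sigma_{ABC}=\sigma_{S^2}$, and since $\rho_{AB}=\sigma_{AB}$, $\rho_{BC}=\sigma_{BC}$, we conclude $\rho_{S^2}=\rho_{ABC}=\sigma_{S^2}$. Hence $\Sigma(S^2|\sigma)=\{\sigma_{S^2}\}$, and $\calQ(S^2|\sigma)=0$ follows immediately from the definition of the memory capacity.

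The one point that needs care — and what I expect to be the main obstacle — is the topological input: one must genuinely use that $S^2$ is simply connected so that the complement $A$ of a small disk $BC$ is itself disk-like, allowing Lemma~\ref{lemma:Sigma(Disk)} to apply to $A$ (and to $AB$). On a torus the analogous complement of a small disk is not disk-like, which is exactly why $\Sigma(\mathbb{T}^2|\sigma)$ can be nontrivial; making this distinction precise may require a short remark that every region appearing in the partition $ABC$ above is contractible and of the type for which the local-indistinguishability lemma was proved, possibly after first coarse-graining so that the annulus $B$ and the cap $C$ are thick enough for {\bf P0} in the stated form. Beyond that, every step is a direct application of results already in the excerpt (Lemma~\ref{lemma:Sigma(Disk)}, the Markov-state reconstruction via the Petz map, and the definition of $\calQ$), so no further machinery is needed.
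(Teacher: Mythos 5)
Your proposal is correct and is essentially the paper's own argument: the paper partitions $S^2$ so that $B$ is a thickened equator (making $AB$ and $BC$ the two hemispherical disks), invokes Lemma~\ref{lemma:Sigma(Disk)} to fix $\rho_{AB}=\sigma_{AB}$ and $\rho_{BC}=\sigma_{BC}$, and then uses {\bf P0} to conclude that both global states are the unique Markov reconstruction from these marginals. Your variant with a small cap $C$ and surrounding annulus $B$ is the same argument with a cosmetically different partition, and your remark about simple connectedness being the essential topological input is exactly the right point of contrast with the torus.
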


\begin{proof}
The proof is an analog of Prop.~3.7 of pure state entanglement bootstrap~\cite{Shi2019fusion}, where in the latter case one can also show the reference state on the sphere is pure. Let the sphere $S^2$ be partitioned into $A,B$, and $C$ such that $B$ is the thickening of the equator. Thus, $AB, BC$ are two disks. Suppose $\rho_{ABC},\sigma_{ABC} \in \Sigma(S^2)$. As both $AB$ and $BC$ are disks, Lemma~\ref{lemma:Sigma(Disk)} implies that $\rho_{AB} = \sigma_{AB}$ and  $\rho_{BC} = \sigma_{BC}$.  Then, by {\bf P0}, we have $I(A\,{:}\,C|B)=0 $ for both $\rho$ and $\sigma$. Thus, the states are determined by the marginals on $AB$ and $BC$. Thus, the two global states $\rho_{ABC} = \sigma_{ABC}$. This completes the proof.
\end{proof}

\section{Information convex set beyond closed manifolds}\label{app:ICS-Omega}

In the main text, we have considered information convex sets on closed manifolds. Similar definition applies to regions with boundaries.
When we discuss `boundaries' in the bootstrap context, there are two types: ($i$) physical boundary, beyond which the system is in a product state and does not mediate correlation nor entanglement, and ($ii$) correlation boundary (generalize the notion of entanglement boundary discussed in \cite{Shi2019fusion}), which is not. The boundary we shall be interested in are the correlation boundary.

As in \cite{Shi2019fusion}, we consider an embedded region $\Omega$, possibly with boundaries. Importantly,
there is a natural notion of ``smooth'' deformation of regions on the physical background, as we explain. Then we explain the important statement about the isomorphism of the information convex sets under smooth deformation of regions.

We starts with the definition (Def.~\ref{def:ICS-mixed}) which generalizes Def.~\ref{def:ICS-mixed-closed-M} by allow $\Omega$ be a region embedded in the background physical system equipped with a reference state $\sigma$ satisfying {\bf P0}, {\bf M0} and {\bf M1}.

 \begin{definition}[Information convex set]\label{def:ICS-mixed}
 The information convex set associated with an embedded region $\Omega$ (possibly with boundary) is
 \begin{equation}
     \Sigma(\Omega|\sigma) = \{ \rho_\Omega \,|\, (1), (2)\}, \quad \text{where}
 \end{equation}
 \begin{itemize}[leftmargin=13pt]
     \item [(1)] $\rho_{b}  = \sigma_b$, for any bounded radius ball $b\subset \Omega$ (region that thickens $BC$ in Fig.~\ref{fig:ICS-def-short}.) 
     \item [(2)]  $I(A\,{:}\,C|B)_\rho =0$ for configurations in Fig.~\ref{fig:ICS-def-short}. This is an analog of {\bf P0} which allows the partition to be located near the entanglement boundary.
 \end{itemize}
 \end{definition} 

 \begin{figure}[h]
     \centering    
     \includegraphics[width=0.9\linewidth]{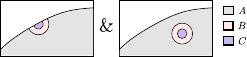}
     \caption{ Regions used in the definition of information convex set (Def.~\ref{def:ICS-mixed}).  These are partitions of $\Omega=ABC$ with bounded region $BC$ located either in the interior of $\Omega$ or in the neighborhood of its entanglement boundary.}
     \label{fig:ICS-def-short}
 \end{figure}

Let us make a few remarks about the definition:
\begin{enumerate}[leftmargin=13pt]
    \item We always consider a region that contains a finite number of coarse-grained sites. Each site has a finite Hilbert space dimension. Again, one can show that the set $\Sigma(\Omega)$ is a compact convex set.

    \item Information convex sets associated with different regions are related to each other nontrivially. Notably  
 \begin{equation}
     \Tr_B \rho_{AB} \in \Sigma(A), \quad \forall \rho_{AB} \in \Sigma(AB).
 \end{equation}
  \item Let $D$ be a disk, then $\Sigma(D) = \{ \sigma_D\}$. This generalizes Lemma~\ref{lemma:Sigma(Disk)}.
  
  \item 
    The purpose of having the full set of conditions near the entanglement boundary is to ensure that qubits located in far-separated parts of the entanglement boundary do not entangle with each other. This definition is a close analog to Def.~C.1 of~\cite{Shi2019fusion}. To see more closely on the similarity, we note that, the properties inherited from the reference state and the argument in Appendix~\ref{app:axioms-global} imply the properties listed in Fig.~\ref{fig:ICS-properties}. This further guarantees that the states on region $\Omega$ can be smoothly extended, leading to the important \emph{isomorphism} property, Eq.~\eqref{eq:iso-thm}, as we discuss below.
    
    \item More broadly speaking, the definition can be further generalized to immersed (i.e., locally embedded) regions. The reason is that the first condition in the definition [Definition~\ref{def:ICS-mixed}] puts consistency on local balls only. 
    While we omit the discussion of immersed regions for simplicity, we point out that the information convex sets for such regions is known to give nontrivial constraints for braiding statistics~\cite{Braiding2023,Shi2024-figure8} in pure state entanglement bootstrap.  
\end{enumerate}

 \begin{figure}[h]
     \centering    \includegraphics[width=0.9\linewidth]{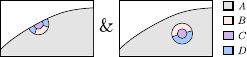}
     \caption{Some properties of $\rho \in \Sigma(\Omega)$. $I(A\,{:}\,C)_\rho =0$ and $I(A\,{:}\,C|B)_\rho =0$ for the configurations shown. Here, the region $B$ is disk-like. In the left figure, $B$ may fill in the blank region on the left, right or both.}
     \label{fig:ICS-properties}
 \end{figure}

Let us explain the notion of ``smooth deformation'' of an embedded region. 
We say embedded regions  $\Omega $ and $\Omega'$ can be smoothly deformed into each other (denoted as $\Omega \sim \Omega'$) if they are related by a finite sequence of elementary steps. Here, an elementary step is the process of adding or deleting a small ball near the entanglement boundary in a way that preserves the topology of the region, as shown in Fig.~\ref{fig:elementary-step}. 

\begin{figure}[h]
    \centering\includegraphics[width=0.60\linewidth]{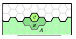}
    \caption{What we mean by an elementary step. $\Omega=AB$ and $\Omega'=ABC$ are related by an elementary step. $BC$ is within a bounded radius ball, and only part of $C$ is shown.}
    \label{fig:elementary-step}
\end{figure}

The important statement (known as the isomorphism theorem \cite{Shi2019fusion}) is the statement that
   \begin{equation}\label{eq:iso-thm}
        \Sigma(\Omega) \cong \Sigma(\Omega'), \quad \forall\, \Omega \sim \Omega'.
    \end{equation}
    Here $\Omega \sim \Omega'$ (reads $\Omega$ is smoothly deformable to $\Omega'$) if two subsystems $\Omega$ and $\Omega'$ can be related by a finite sequence of elementary steps (Fig.~\ref{fig:elementary-step}).
    The meaning of isomorphism ``$\cong$'' will be explained in a few paragraphs, and for now, we just mention that an isomorphism of the information convex sets is obtained by a sequence of quantum channels that is determined by the elementary steps in the deformation process $\Omega \sim \Omega'$. It is thus sufficient to explain what an elementary step does.~\footnote{The proof of the following statement is nontrivial, but it is essentially identical to the argument in Appendix~C of~\cite{Shi2019fusion}. We thus omit the proof here.}
    Consider the following elementary steps for $\Omega=AB$ and $\Omega'=ABC$ as in Fig.~\ref{fig:elementary-step}. Note that $I(A\,{:}\,C|B)_\rho =0$ for any $\rho \in \Sigma(ABC)$, following Def.~\ref{def:ICS-mixed}. The nontrivial statements are the operations on the convex set of quantum states
\begin{itemize}[leftmargin=13pt]
    \item Elementary step of extension:
    \begin{equation}
        \calE_{B\to BC}^\sigma(\rho_{AB}) = \rho_{ABC} \in \Sigma(ABC),  
    \end{equation}
    for any $ \rho_{AB} \in \Sigma(AB)$. Here $\calE_{B\to BC}^\sigma$ is the Petz map defined from the reference state.
    \item Elementary step of restriction:
        \begin{equation}
        \Tr_C(\rho_{ABC}) = \rho_{AB} \in \Sigma(AB),  
    \end{equation} 
    for any $ \rho_{ABC} \in \Sigma(ABC)$.
\end{itemize}
Importantly, the composition of $\calE_{B\to BC}\circ \Tr_C$ is an identity on the set $\Sigma(ABC)$ and $\Tr_C \circ \,\calE_{B\to BC} $ is an identity on the set $\Sigma(AB)$. 
Thus, the two sets must be isomorphic. This further implies the three conditions below, which underpin our notion of isomorphism ``$\cong$'':
\begin{enumerate}[leftmargin=13pt]
    \item (geometry preserving) The two sets must be identical as geometrical convex sets in the sense that the convex combination commutes with the mapping:
    \begin{equation}
        \Phi( p \rho + (1-p) \lambda) = p \,\Phi( \rho ) + (1-p) \Phi(\lambda)    
    \end{equation} 
    \item (distance preserving) The distance measure between elements is preserved
    \begin{equation}
        D(\rho,\lambda) = D(\Phi(\rho),\Phi(\lambda)),
    \end{equation}
    where $D(\cdot, \cdot)$ is any distance measure that is monotonic under quantum channels, and $\Phi$ is the isomorphism map between the sets.
    \item (entropy difference preserving)
    \begin{equation}
      S(\rho) - S(\lambda) =   S(\Phi(\rho)) - S(\Phi(\lambda)). 
    \end{equation}
\end{enumerate}
These properties imply further conclusions, including the conservation of memory capacity,
\begin{equation}
    \calQ(\Omega|\sigma) =\calQ(\Omega'|\sigma), \quad \forall \Omega \sim \Omega'.
\end{equation} 

Importantly, the isomorphism guarantees a sense of uniformity of many-body quantum states on the 2D manifold. This is because it is possible to deform a region $\Omega$ to another $\Omega'$ that is separated far apart from it.  That the structure captured by the information convex sets of $\Omega$ and $\Omega'$ (e.g., the topological charges) is isomorphic implies that the system has an emergent homogeneity without invoking any symmetry (e.g., translation).

\section{Robustness of isomorphism}\label{app:robust-iso}

To analyze the robustness of the isomorphism away from the fixed points of topological mixed states, we allow relaxations in which the axioms may take small decaying values rather than being required to vanish exactly. In addition, we also relax the strict local equivalence condition of the information convex set. We therefore define a notion of approximate information convex set as a refinement to Def.~\ref{def:ICS-mixed}:

\begin{definition}[Approximate information convex set]\label{def:ICS-mixed-robust}
  For a subsystem $\Omega$ and a relaxation $\mathfrak{q} \ge 0$, we define the approximate information convex set as
  \begin{equation}
      \Sigma_\mathfrak{q}(\Omega):= \{\rho_\Omega| (1),(2)\}
  \end{equation}
  \begin{enumerate}
      \item [(1)] $||\rho_b - \sigma_b||_1 \le \mathfrak{q}$ for bounded radius balls $b \subset \Omega$ (region that thickens $BC$ in Fig.~\ref{fig:ICS-def-short}.) 
       
      \item [(2)]  $I(A\,{:}\,C|B)_\rho \le \mathfrak{q}$ for configurations in Fig.~\ref{fig:ICS-def-short}. This is an analog of approximate {\bf P0} which allows the partition to be located near the entanglement boundary. 
  \end{enumerate}
\end{definition}

\begin{figure}[h]
    \centering    \includegraphics[width=0.95\linewidth]{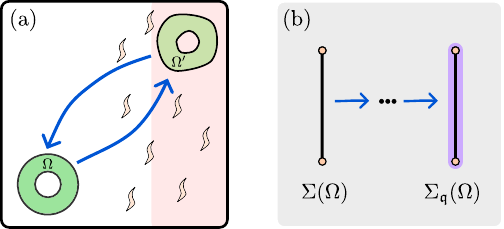}
    \caption{On the robust (approximate) isomorphism of information convex sets. 
    }
    \label{fig:robust-iso}
\end{figure}

With this definition stated, we now analyze whether deformations of the information convex set within the physical system preserve its structure under errors decaying with the length scale. If so, the  topological data extracted from a local region remains invariant across the system, thereby ensuring that the information convex set is stable under small, decaying perturbations.  

Now we discuss a toy problem that the axioms are satisfied exactly on the left part of the system and is noisy on the right side of the system with axioms violated with errors up to \(\delta(r)\le \mathcal{O}(1)e^{-r/\xi}\), as shown the geometry of Fig.~\ref{fig:robust-iso}(a). Suppose we transport the states on region $\Omega$ to the noisy region (to $\Omega'$) and back by quantum channels
\(\Phi_{\mathrm{forward}}\) and \(\Phi_{\mathrm{backward}}\). We ask if the information is approximately preserved. This corresponds to an estimation of the error $\mathfrak{q}$ in
\begin{equation} \Phi_{\mathrm{backward}}\circ\Phi_{\mathrm{forward}}: \quad \Sigma(\Omega)\to\Sigma_{\mathfrak q}(\Omega).
\end{equation}
Using tools of quantum information theory like Fawzi-Renner inequality \cite{FawziRenner_2015}, Alicki-Fannes-Winter inequality \cite{Audenaert_2007,Winter_2016}, and Fuchs-van de Graaf inequality\cite{fuchs1998cryptographicdistinguishabilitymeasuresquantum}, and the idea of merging quantum Markov states~\cite{Kato2015,Kim-unpublished}, we are able to come up with an estimation
\begin{equation}
\label{eq:q-decay}
\mathfrak q(r)\;\le\; 
\textrm{poly}(r)\,\exp({-\frac{r}{\mathfrak{c}\,\xi}}).
\end{equation}
where $\mathfrak{c}\in \mathbb{R}_+$ is an $O(1)$ constant that is depend on the details of the deformation path but is independent of the length scale $r$. 
In particular, the structure of the information convex set is preserved up to exponentially suppressed corrections, so that violations of the axioms only induce a tiny “blurring" of the convex set geometry as shown in Fig.~\ref{fig:robust-iso}(b).
We refer to a future work~\cite{Shi2026} for details. 
Importantly, the degree of $\textrm{poly}(r)$ in equation \eqref{eq:q-decay} is still unclear. If one can bound the degree of the polynomial, then the convex set structure is still preserved under certain milder power-law decay violations of the axioms.

\section{Bound on memory capacity}\label{app:bound-C-TEE}

In this appendix, we prove a general theorem (Thm.~\ref{thm:capacity-general}) which gives an upper bound of memory capacity in terms of the topological entropy $\gamma_{\textrm{LW}}$. This implies the bound on capacity in the main text (Prop.~\ref{prop:torus-capacity}). We start with a useful lemma.

\begin{lemma}\label{lemma:LW}
Consider regions $A,B,C$ and $D$ embedded in $M$ as in Fig.~\ref{fig:M-ABCD}(a). Then,
\begin{equation}\label{eq:LW-global}
     I(A\,{:}\,C|B)_{\rho}= 2 \gamma_{\textrm{LW}}, \quad \forall \rho \in \Sigma(M|\sigma).
\end{equation}
\end{lemma}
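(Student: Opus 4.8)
The plan is to deduce the statement from two ingredients already available: the rigidity of elements of the information convex set on disk-like regions (Lemma~\ref{lemma:Sigma(Disk)}), and the deformation invariance of the Levin--Wen topological entropy recalled in the comparison with the pure-state bootstrap (which also follows from the isomorphism theorem of Appendix~\ref{app:ICS-Omega}). Write $\mathcal{A}:=ABC$ for the annular region of Fig.~\ref{fig:M-ABCD}(a), with its sub-pieces $A$, $B$, $C$ arranged as in the definition of $\gamma_{\textrm{LW}}$, Eq.~\eqref{eq:TE}, and let $D_0\subset D$ denote the disk that $\mathcal{A}$ encloses.

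The one substantive step is the identity
\begin{equation}
    \rho_{\mathcal{A}} = \sigma_{\mathcal{A}}\,,\qquad \forall\,\rho\in\Sigma(M|\sigma)\,.
\end{equation}
Because the annulus $\mathcal{A}$ in Fig.~\ref{fig:M-ABCD}(a) is \emph{contractible}, the region $\mathfrak{D}:=\mathcal{A}\cup D_0$ is a disk-like region of $M$; Lemma~\ref{lemma:Sigma(Disk)} gives $\rho_{\mathfrak{D}}=\sigma_{\mathfrak{D}}$, and tracing out $D_0$ yields $\rho_{\mathcal{A}}=\sigma_{\mathcal{A}}$. Consequently every marginal of $\rho$ on a subregion of $\mathcal{A}$ equals the corresponding marginal of $\sigma$; in particular $\rho_{AB}=\sigma_{AB}$, $\rho_{BC}=\sigma_{BC}$, $\rho_{B}=\sigma_{B}$, and $\rho_{ABC}=\sigma_{ABC}$. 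Since $I(A:C|B)_\rho = S(\rho_{AB})+S(\rho_{BC})-S(\rho_{B})-S(\rho_{ABC})$ depends on $\rho$ only through these marginals, we conclude $I(A:C|B)_\rho = I(A:C|B)_\sigma$ for all $\rho\in\Sigma(M|\sigma)$.

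Finally, $I(A:C|B)_\sigma = 2\gamma_{\textrm{LW}}$: by definition $\gamma_{\textrm{LW}}=\tfrac{1}{2}\,I(A:C|B)_\sigma$ for the Levin--Wen annular partition drawn inside a disk, and since that quantity is invariant under smooth deformations of the regions we may deform the annulus $\mathcal{A}\subset M$ into a disk neighborhood without changing its value. Combining the two facts gives $I(A:C|B)_\rho = 2\gamma_{\textrm{LW}}$ for every $\rho\in\Sigma(M|\sigma)$. The only point requiring care is the topological input that $\mathcal{A}$ is contractible, so that it is ``filled in'' by the disk $D_0\subset M$: this is precisely what forces $\rho_{\mathcal{A}}$ to the reference value even though $\rho$ may differ from $\sigma$ on all of $M$. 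For an annulus wrapping a noncontractible cycle the argument (and the conclusion) would fail, since $\rho_{\mathcal{A}}$ could then land on a lower-entropy extreme point of $\Sigma(\mathcal{A}|\sigma)$, e.g. in a non-Abelian model; so the proof must genuinely use that the enclosed region is a disk.
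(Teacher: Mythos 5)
There is a genuine gap, and it comes from a misreading of the geometry in Fig.~\ref{fig:M-ABCD}(a). In that figure $BCD$ is a bounded ball and $A$ is the \emph{complement of that ball in the closed manifold} $M$, so $ABC = M\setminus D$ is $M$ with a disk removed --- a region that carries all the noncontractible cycles of $M$ --- not a contractible annulus sitting inside a disk. Your region $\mathfrak{D}=ABC\cup D_0$ is therefore not disk-like (for $M\neq S^2$), Lemma~\ref{lemma:Sigma(Disk)} does not apply to it, and the central identity $\rho_{ABC}=\sigma_{ABC}$ is false. Concretely, for the dephased toric code on $M=\mathbb{T}^2$ the extreme points $\rho^e_{ab}$ of Prop.~\ref{prop:convexset} are distinguished by logical string operators that can be deformed to avoid the disk $D$, so their marginals on $M\setminus D$ are mutually orthogonal and $S(\rho_{ABC})$ varies over $\Sigma(M|\sigma)$ by up to the memory capacity. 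Your own closing caveat --- that the argument fails ``for an annulus wrapping a noncontractible cycle'' --- describes precisely the configuration the lemma is about; if $ABC$ really were a local annulus in a disk, the statement would be an immediate consequence of local indistinguishability and would not merit a lemma.

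What survives of your computation is that $\rho_{BC}=\sigma_{BC}$ and $\rho_B=\sigma_B$ (these \emph{are} contained in the ball $BCD$), so the nontrivial content is that $S(\rho_{AB})-S(\rho_{ABC})$ is the same for every $\rho\in\Sigma(M|\sigma)$ even though each term separately is not. The paper establishes this by a two-sided bound: splitting $A=A_1A_2$ with $A_1$ a collar making $A_1BC$ a Levin--Wen annulus inside a disk, strong subadditivity gives $I(A:C|B)_\rho\ge I(A_1:C|B)_\rho=2\gamma_{\textrm{LW}}$; for the reverse direction the chain rule gives $I(A:C|B)_\rho-2\gamma_{\textrm{LW}}=I(A_2:C|A_1B)_\rho\le I(A_2:CD_1|A_1B)_\rho=0$ by the (global form of) axiom {\bf M1}. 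Some such use of {\bf M1} on the far side of the annulus is unavoidable; it cannot be replaced by deformation invariance of $\gamma_{\textrm{LW}}$ applied to a region that is not actually deformable to a local annulus.
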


\begin{figure}[h]
    \centering
    \includegraphics[width=0.96\linewidth]{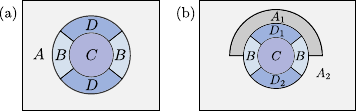}
    \caption{Partition the closed manifold $M$ into a ball $BCD$ (with topology as shown) and the complement $A$.}
    \label{fig:M-ABCD}
\end{figure}

\begin{proof}
Let us relabel $D$ as $D_1 D_2$ and $A$ as $A_1 A_2$ as shown in Fig.~\ref{fig:M-ABCD}(b). First, we note that
\begin{equation}\label{eq:bound<}
\begin{aligned}
   I(A\,{:}\,C|B)_\rho & \ge I(A_1:C|B)_\rho \\
   & =2 \gamma_{\textrm{LW}}.  
   \end{aligned}
\end{equation}
The inequality follows from the strong subadditivity. 
Secondly,  
\begin{equation}\label{eq:bound>}
\begin{aligned}
    I(A\,{:}\,C|B)_\rho - 2 \gamma_{\textrm{LW}} &= I(A_1 A_2 :C |B)_\rho - I(A_1:C |B)_\sigma \\
     &  = I(A_1 A_2 :C |B)_\rho - I(A_1:C |B)_\rho \\
     & = I(A_2: C | A_1 B)_\rho \\
     & \le I(A_2: C D_1 |A_1 B)_\rho  \\
     & = 0.
\end{aligned}
\end{equation}
    The ``$=$" in the first line uses the definition of $\gamma_{\textrm{LW}}$, noticing that $A_1, B, C$ is the Levin-Wen partition of an annulus $A_1 B C$; the annulus is embedded in a disk. The 2nd line uses that $\rho_M \in \Sigma(M)$ is identical to the reference state $\sigma$ on the disk containing $A_1 B C$. The third line is obtained by the chain rule (i.e. expansion and trivial rewriting) of the conditional mutual information. The fourth line is due to the strong subadditivity. The last line is by axiom {\bf M1} with the choice of the local ball $(A_1 B, C D_1, D_2)$.

   By Eq.~\eqref{eq:bound>}, $I(A\,{:}\,C|B)_\rho \le 2 \gamma_{\textrm{LW}}$, while Eq.~\eqref{eq:bound<} says $I(A\,{:}\,C|B)_\rho \ge 2 \gamma_{\textrm{LW}}$. Thus Eq.~\eqref{eq:LW-global} holds.
\end{proof}

\begin{theorem}\label{thm:capacity-general}
    Suppose $\sigma_M$ is a reference state on closed 2D manifold $M$, satisfying all the axioms {\bf P0}, {\bf M0} and {\bf M1}.
    The memory capacity satisfies 
    \begin{equation}
        \calQ(M|\sigma ) \le 2 n(M) \,\gamma_{\textrm{LW}},
    \end{equation}
    where $\gamma_{\textrm{LW}}$ is the topological entropy and $n(M)$ is an integer, such that
    \begin{itemize}[leftmargin=13pt]
        \item If $M$ is orientable, i.e. $M= \# g \mathbb{T}^2$, $n(M) = 2 g$.
        \item If $M$ is non-orientable, $M= \#h \mathbb{RP}^2$, $n(M) = h$.
    \end{itemize}
\end{theorem}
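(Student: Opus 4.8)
\emph{Proof plan.} The strategy is a \emph{telescoping along a handle decomposition}: present $M$ as a disk with $1$-handles and one $2$-handle attached, grow the state region by region, and show that the capacity is additive over the ``handle-creating'' steps, each of which costs at most $2\gamma_{\textrm{LW}}$, while every other step costs nothing. Since a handle decomposition with a single $0$-handle and a single $2$-handle has exactly $2-\chi(M)=\dim_{\mathbb{Z}_2}H_1(M;\mathbb{Z}_2)=n(M)$ one-handles ($2g$ for $M=\#g\mathbb{T}^2$, $h$ for $M=\#h\mathbb{RP}^2$), this yields $\calQ(M|\sigma)\le 2n(M)\gamma_{\textrm{LW}}$ and specializes to Proposition~\ref{prop:torus-capacity} when $M=\mathbb{T}^2$.

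Fix arbitrary $\rho,\lambda\in\Sigma(M|\sigma)$ with $S(\rho)\ge S(\lambda)$; we bound $S(\rho)-S(\lambda)$. Choose a nested family of embedded regions $D=R_0\subset R_1\subset\cdots\subset R_N=M$ in which $R_0$ is a small disk (the $0$-handle), each later $R_j$ is obtained from $R_{j-1}$ by attaching a thin band along $\partial R_{j-1}$, and $R_N$ is obtained by capping a contractible boundary circle with a disk (the $2$-handle). Telescoping the von Neumann entropy and using $\Sigma(D)=\{\sigma_D\}$ (Lemma~\ref{lemma:Sigma(Disk)}),
\begin{equation}
 S(\rho)-S(\lambda)=\sum_{j=1}^{N}\Big(\Delta_j(\rho)-\Delta_j(\lambda)\Big),\qquad \Delta_j(\mu):=S(R_j)_\mu-S(R_{j-1})_\mu .
\end{equation}
Steps that do not change the topology of the region---attaching a band along a \emph{single} arc of $\partial R_{j-1}$, or capping the final contractible circle---are smooth deformations in the sense of Appendix~\ref{app:ICS-Omega}, so the isomorphism theorem~\eqref{eq:iso-thm}, in its entropy-difference-preserving form, gives $\Delta_j(\rho)-\Delta_j(\lambda)=0$ there. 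The only surviving terms come from the \emph{handle-creating} steps, in which a band is attached along \emph{two} disjoint arcs, equivalently two boundary arcs of the current region are self-identified; there are exactly $n(M)$ of these.

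It remains to bound one handle-creating step by $2\gamma_{\textrm{LW}}$. After peeling off the trivial part (Appendix~\ref{app:ICS-Omega}), reduce to $R'=R\cup\beta$ with $\beta$ a small disk attached to $R$ along two short arcs and contained in a ball $\mathfrak{b}\subset M$. Inside $\mathfrak{b}$ choose a Levin--Wen annulus $A_1BC$ threaded once by the new non-contractible cycle, with $B,C$ inside $\mathfrak{b}$ and $A'=R'\setminus BC$. Since $R\sim R'\setminus C$ (removing the middle annulus $C$ of the band leaves two arcs attached trivially) and $BC$ is obtained from $B$ by capping a contractible boundary circle, the chain rule for the conditional mutual information gives, up to these entropy-difference-preserving isomorphisms,
\begin{equation}
 \Delta(\rho)-\Delta(\lambda)=-\big(I(A':C|B)_\rho-I(A':C|B)_\lambda\big).
\end{equation}
Finally, for every $\mu\in\Sigma(M|\sigma)$ one has $0\le I(A':C|B)_\mu\le 2\gamma_{\textrm{LW}}$: the lower bound is strong subadditivity; for the upper bound enlarge $A'$ to $M\setminus BC$ and argue exactly as in the proof of Lemma~\ref{lemma:LW}, splitting it as $A_1A_2$ so that $I(A':C|B)_\mu\le I(A_1:C|B)_\mu+I(A_2:CD_1|A_1B)_\mu=2\gamma_{\textrm{LW}}+0$, the vanishing being axiom \textbf{M1}. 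Hence $|\Delta(\rho)-\Delta(\lambda)|\le 2\gamma_{\textrm{LW}}$ at each of the $n(M)$ handle-creating steps, and summing proves the theorem.

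The main obstacle is the region bookkeeping. One must verify that every step of the filtration---in particular the self-gluing move---decomposes into legitimate elementary steps (Fig.~\ref{fig:elementary-step}) plus a single identification, so that the isomorphism theorem and the properties of Fig.~\ref{fig:ICS-properties} genuinely apply to the intermediate regions with correlation boundaries; and that the choice of $A_1,B,C$ at a handle-creating step really realizes a contractible Levin--Wen annulus on one side of the new cycle (this is where $\beta\subset\mathfrak{b}$ is used), with the \textbf{M1}-induced vanishing holding uniformly over all of $\Sigma(M|\sigma)$ rather than only for $\sigma$. Tracking which boundary circles remain contractible (hence cappable without changing the information convex set) throughout the handle decomposition is the part that needs the most care.
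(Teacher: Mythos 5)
Your proposal is correct and takes essentially the same route as the paper: a handle decomposition of $M$ into a disk plus $n(M)$ handles, with the entropy difference between two elements of $\Sigma(M|\sigma)$ changing by at most $2\gamma_{\textrm{LW}}$ per handle, controlled by the conditional mutual information bound of Lemma~\ref{lemma:LW} (the paper cuts handles starting from $M$ rather than attaching them starting from a disk, but the per-handle estimate $S_{ABC}|^\rho_\lambda - S_{AB}|^\rho_\lambda = I(A{:}C|B)|^\lambda_\rho \le 2\gamma_{\textrm{LW}}$ is identical to yours). The only cosmetic difference is that you route the topology-preserving steps through the isomorphism theorem of Appendix~\ref{app:ICS-Omega}, whereas the paper uses the equivalent observation that {\bf P0} and local indistinguishability make those steps entropy-difference preserving directly.
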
 

\begin{proof}
 Let us start with a basic topology observation. Let $M^*$ be the manifold with boundary obtained by removing a ball from $M$.
Then, $M^*$ can be converted to a disk by cutting $n(M)$ handles. See Fig.~\ref{fig:cut-handle}(a) for the meaning of cutting a handle.  

\begin{figure}[h]
    \centering
    \includegraphics[width=0.8\linewidth]{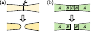}
    \caption{Cut a handle and how it affects the entropy difference. In (b), only part of the region $A$ is shown.}
    \label{fig:cut-handle}
\end{figure}

Now, let us discuss the quantum aspect of removing a ball and cutting a handle. 
   Removing an isolated ball does not change the entropy difference, 
    \begin{equation}
        S(\rho_M)- S(\lambda_M) = S(\rho_{M^*})- S(\lambda_{M^*}), \quad \rho,\lambda \in \Sigma(M).
    \end{equation}
    This follows from condition {\bf P0} on the two states $\rho$ and $\lambda$ and the fact that their local reduced density matrices are identical. 
    
    On the other hand, cutting a handle can lower the entropy difference by an amount. Let $ABC \subset M$ as shown in Fig.~\ref{fig:cut-handle}(b). For $\rho, \lambda \in \Sigma(M)$,
    \begin{equation}
    \begin{aligned}
          S_{ABC}|^\rho_\lambda - S_{AB}|^\rho_\lambda & = I(A\,{:}\,C|B)|^\lambda_\rho\\
          &\le I(A\,{:}\,C|B)_\lambda \\
          & \le 2 \gamma_{\textrm{LW}}.
    \end{aligned}      
    \end{equation}
    The first line is because $\rho$ and $\lambda$ are identical on $BC$. The 2nd line follows because $ I(A\,{:}\,C|B)_\rho$ is non-negative. The last line follows from Lemma~\ref{lemma:LW}. 
    
    This implies that the entropy difference $S_{ABC}|^\rho_\lambda$ and $S_{AB}|^\rho_\lambda$ may be different, but their difference is upper bounded, due to the fact that $AB$ is obtained by $ABC$ by cutting a handle.  
     But, on a disk, we have a unique state, by Lemma~\ref{lemma:Sigma(Disk)}. Thus, the memory capacity of $M$ can be at most $2 n(M) \gamma_{\textrm{LW}}$. This completes the proof.  
\end{proof}

This proof is elementary. The idea is inspired by the equally elementary proof by Kim~\cite{Kim2013storage} as well as the uniqueness of the groundstate of invertible state~\cite{Kitaev-2011}. In the pure state context, Kim's bound is stricter than ours, and the bound saturates for Abelian anyon theory; however, it is not optimal when there are non-Abelian anyons. The more advanced pure state bootstrap technique can give a precise prediction on the memory capacity~\cite{Braiding2023}. The analog of such results in the mixed state setup remains an open problem.

 \section{Quantities failed to be invariant on topological mixed states }\label{app:non-topo-invariant}
 
We demonstrate why certain topological invariants on the ground state of topological order are no longer invariant under smooth deformation of the regions for topological mixed states. 
We show that {\bf P0}, {\bf M0}, and {\bf M1} cannot guarantee the invariance of 
 \begin{itemize}[leftmargin=15pt]
     \item The Kitaev-Preskill entropy~\cite{KitaevPreskill_2006}  
\begin{equation}
    \gamma_{\rm KP} := S_{AB} + S_{BC} + S_{CA} - S_{A} - S_B - S_C - S_{ABC}
\end{equation}
     \item The modular commutator~\cite{Kim2022c-minus}
     \begin{equation}
         J(A,B,C)= i \Tr(\rho_{ABC} [\ln \rho_{AB}, \ln \rho_{BC}]),
     \end{equation}
 \end{itemize}
 where the partition $A,B,C$ in question is that in Fig.~\ref{fig:coarse-grain}(a). 
We construct the counterexample as follows. 
We start with a fixed-point state $\sigma$ on a certain coarse-grained lattice, where $\sigma$ satisfies all three axioms on this coarse-grained lattice as exemplified in Fig.~\ref{fig:coarse-grain}(b). 
Let us further suppose that $\sigma$ has $\gamma_{\rm KP}$ and $J(A,B,C)$ to be the same value anywhere on the coarse-grained lattice.

Next, we stack a mixed state $\lambda_{abc}$ on the three coarse-grained sites $a,b,c$ of Fig.~\ref{fig:coarse-grain}(b) to obtain
\begin{equation} \label{eq:tensorstack}
    \tilde{\sigma} =\sigma \otimes \lambda_{abc}.
\end{equation}
By choosing $\lambda_{abc}$ appropriately, we can shift $\gamma_{\rm KP}$ and $J(A,B,C)$ for $\tilde{\sigma}$ relative to those for $\sigma$. In particular, 
\begin{align}
    \lambda_{abc}^{\gamma} &= \frac{1}{2} \big( \ket{000}\bra{000}+\ket{111}\bra{111} \big) \nonumber \\
    \lambda_{abc}^J &= \frac{1}{8} \Big[ I+\frac{1}{2}(X_aX_b+Z_aZ_b+X_aY_bZ_c) \Big]
\end{align}
decrease $\gamma_{\rm KP}$ by $\log 2$ and $J$ by $\simeq 0.3$, respectively. 
These shifts occur only for tripartitions $(A,B,C)$ whose triple-intersection point coincides with the center of the $abc$ cell.

\begin{figure}[!t]
     \centering
     \includegraphics[width=0.85\linewidth]{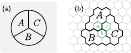}
     \caption{A coarse-grained lattice where $A,B,C$ form a useful partition of the disk. $a\,{\subset}\,A$, $b\,{\subset}\,B$ and $c\,{\subset}\,C$.}
     \label{fig:coarse-grain}
 \end{figure}

Furthermore, for any state $\lambda_{abc}$ supported on the three adjacent coarse-grained cells, the stacked state $\tilde{\sigma} =\sigma \otimes \lambda_{abc}$ satisfies the axioms on the coarse-grained lattice. 
Because $a,b,c$ are adjacent cells, when it comes to the calculation of mutual information or conditional mutual information between \emph{separated} regions, their correlation never appears; furthermore, $\lambda_{abc}$ is in a tensor product with $\sigma$, which already satisfies the axioms.

 Finally, it is easy to see why in the pure state version of the entanglement bootstrap, such counterexamples do not exist. In this context, we are only allowed to stack pure state $|\lambda_{abc}\rangle$ for preserving axioms the pure state axioms; suppose that we tensor with a mixed state at a local region, then the axioms {\bf A0} centered at the region will be violated by twice the entropy of the mixed state. Such a pure state does not give any correction to $\gamma_{\rm KP}$ and $J$. In particular, $J(A,B,C)_{|\psi_{ABC}\rangle} = i \langle [\ln \rho_{AB},\ln \rho_{BC}|\rangle =i \langle [\ln\rho_C, \ln \rho_A]\rangle =0$ for any tripartite pure state $|\psi_{ABC}\rangle$~\cite{Modular-commutator-Gapped}.

\section{Decoherence and entropy conditions}\label{app:violation}

In this appendix, we give details on the calculations of axiom violations. In Appendix~\ref{app:fixed-pts}, we analytically compute the violations of various entropy conditions at the mixed state fixed point in the vicinity of a domain wall between $\rho^e$ and $\otimes_i \rho_i$. Such a fixed point state is a stabilizer state, and thus we borrow well-known tools to compute the entanglement entropy~\cite{Hamma2004}. In Appendix~\ref{app:num}, we explain the numerical method to compute the entropy for arbitrary error rate $p \in [0,1]$ from the nontrivial state, following the technique of~\cite{Sang2024}. This also explains the method to obtain Fig.~\ref{fig:violations}. In Appendix~\ref{app:fragile} we explain why the quantum channel from a trivial state to maximally dephased toric code $\rho^e$ must be fragile.

\subsection{Fixed point of the domain wall and the violation of {\bf M1}}\label{app:fixed-pts}

\emph{Stabilizer state entropy:~} Let us first make a general remark on the computation of entanglement entropy of stabilizer states. Let $\{S_i\}_{i=1}^k$ be a set of stabilizer generators. Each of them is a product of Pauli operators, $S_j^2=1$; they mutually commute $[S_i,S_j]=0$, and they are independent as generators of the group by multiplication. A stabilizer state associated with $\{S_i\}_{i=1}^k$ on $n$ qubits is 
\begin{equation}
    \rho[\{S_i\}_{i=1}^k] := \frac{1}{2^{n}} \prod_{j=1}^k (1+ S_j).
\end{equation}
The entanglement entropy of a stabilizer state is $S(\rho)= (n-k) \log 2$. In other words, computing the entropy corresponds to finding the number of independent stabilizer generators.

\emph{Domain wall configuration:~} We consider the dry boundary and the wet boundaries  in Fig.~\ref{fig:dry-wet-bdy}.  On the upper half plane, we have the fixed point $\alpha=\prod_v (1 + A_v)$ and on the lower half plane, we have $\beta=\prod_i |+\rangle^{\otimes N}$. As both states are stabilizer states, we only need to specify the stabilizer generators. This is shown in Fig.~\ref{fig:dry-wet-stabilizer}. On the upper half plane, each vertex has a stabilizer generator $A_v= \prod_{e \ni v} X_e$. On the lower half plane, the stabilizer generator is $X_e$ on each link $e$. On the domain wall lies the difference, 
\begin{itemize}[leftmargin=13pt]
    \item For each dry boundary site, we do not associate any stabilizer generator; Fig.~\ref{fig:dry-wet-stabilizer}(a).
    \item For the wet boundary, for each boundary vertex $v'$ we have a stabilizer generator that is the product of $X_e$ on three adjacent links pointing left, right and up; see Fig.~\ref{fig:dry-wet-stabilizer}(b).
\end{itemize}

\begin{figure}[!t]
    \centering
    \includegraphics[width=0.85\linewidth]{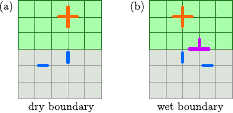}
    \caption{The stabilizer generators for the dry and wet boundaries.  
    There are three different types of stabilizer generators. Each of them is a product of Pauli $X$ on the colored links. The red ones are centered at a vertex in the upper half plane. The blue ones are on a single (horizontal or vertical) link on the lower half plane. Compared to the dry boundary, the wet boundary has an additional purple stabilizer for each boundary site. }
    \label{fig:dry-wet-stabilizer}
\end{figure}

Based on this knowledge of stabilizer generators, one can compute the von Neumann entropy of any finite region, and thus we can compute the correction to the axioms explicitly. Nonetheless, we mention an alternative way to compute the conditional mutual information $I(A\,{:}\,C|B)_\rho$, which may be more intuitive. 

Consider a stabilizer state $\rho_{ABC}$. Its reduced density matrices $\{\rho_{AB}, \rho_{BC}, \rho_B\}$ must all be stabilizer states. Then,
\begin{equation}
    I(A\,{:}\,C|B)_\rho = (k_{B} + k_{ABC}  - k_{AB} - k_{BC}) \log 2 
\end{equation}
due to the cancellation of the Hilbert space dimensions, where $k_X$ is the number of stabilizer generators associated with region $X$. Note that the stabilizer generators associated with $\rho_{AB}$ can be chosen to be the set $\calS_{A|B} \cup \calS_B$, where $\calS_B$ are the stabilizer generators associated with $\rho_B$ and $\calS_{A|B}$ is the subset of stabilizers associated with $\rho_{AB}$ which are supported on $AB$ and cannot be reduced to $B$ by multiplication. Similarly, the set of stabilizer generators associated with $\rho_{BC}$ can be chosen to be  $\calS_{C|B} \cup \calS_B$. $|   \calS_{A|B}|= k_{AB}- k_B$ and $|   \calS_{C|B}|= k_{BC}- k_B$.
Also note $\calS^{\text{loc}}_{ABC}:= \calS_{A|B} \cup \calS_B \cup \calS_{C|B}$
is a set of independent stabilizer generators, $\calS^{\text{loc}}_{ABC} \subset \calS_{ABC}$. Physical speaking $\calS^{\text{loc}}_{ABC}$ is the subset of stabilizer generators of $\rho_{ABC}$ that one can infer from local regions $AB$ and $BC$.
By simple counting 
\begin{equation}
    k_{ABC} + k_{B} - k_{AB} - k_{BC} = |\calS_{ABC}| -|\calS^{\text{loc}}_{ABC}|,
\end{equation} 
one can see that the computation of conditional mutual information
$I(A\,{:}\,C|B)_\rho$ of the stabilizer state $\rho_{ABC}$ is boiled down to finding the number of stabilizer generators that cannot be inferred from local regions.
 
From this reasoning, we verified the value of $\gamma_{\rm dry}$ and $\gamma_{\rm wet}$ in Table~\ref{tab:TE-dry-wet} and Fig.~\ref{fig:bdy-TE}. The preserved axioms on the boundary, shown in Fig.~\ref{fig:bdy-axiom}, are verified with the same consideration.

\subsection{Numerical method and data}\label{app:num}

We explain the numerical method to obtain the data in Fig.~\ref{fig:violations} of the main text. 
    We need to calculate the von-Neumann entropy of a local region at any error rate $p \in [0, 1/2]$. Especially, we want to be able to do such a computation at the spatial boundary of decoherence. 
    Following the method of~\cite{Sang2024}, since states with different anyonic configurations are orthogonal to each other, we have
    \begin{equation}\label{eq:S}
        S\Big(\sum_ip_i\rho_i \Big) = \sum_ip_iS(\rho_i)+H(\{p_i\}),
    \end{equation}
    where $H(\{p_i\}):= -\sum_i p_i \log p_i$ denotes the Shannon entropy and subscript $i$ denotes the anyon configuration excited by errors. $p_i$ is the error probability that gives the anyon configuration labeled by $i$. The von-Neumann entropy is the same across each $\rho_i$. For the no-error event corresponding to the state $\rho_0$, the axioms are already satisfied. The first terms in Eq.~\eqref{eq:S} would be canceled when it comes to the calculation of the conditional mutual information; therefore, even for a noisy case, this contribution would disappear. 
    To calculate the second term, we can use the Monte Carlo method:
    \begin{equation}
        H(\{p_i\}) = -\sum_i p_i\log(p_i) = -\mathbb{E}_{p_i}\left[\log(p_i)\right]. 
    \end{equation}
    Thus, we can sample the anyonic configuration from the error probability $p$ at each edge.  
    After getting the configuration $i$, we can get $p_i$ by contracting a tensor network. The sampling process is exact, and no Markov Chain Monte Carlo procedure is needed. Note that the variance is reduced if we calculate the violation of $\bf P1$ and $\bf M0$ for the same samples $\{p_{i,ABC}\}$:
    \begin{equation}
        \begin{aligned}
    I(A\,{:}\,C|B) = & \mathbb -\mathbb{E}_{p_i}\bigl[\log(p_{i,AB})+\log(p_{i,BC})
    \\ 
    &
    -\log(p_{i,B})-\log(p_{i,ABC})\bigr],
    \end{aligned}
    \end{equation}
    where  $\{p_i,_{AB}, p_{i,B}, p_{i,BC}\}$ are computed from the same samples as $\{p_{i,ABC}\}$. 

    The computation of probabilities for a given anyon configuration is done by the contraction of a tensor network, as in~\cite{Sang2024}. To illustrate how the data in Fig.~\ref{fig:violations} is obtained, we write the explicit tensors for $r=3$ for the calculation of probabilities related to the $\bf M1$ violation $\delta_1$ of Fig.~\ref{fig:violations}:
    \begin{equation}
        p_{i,ABC} =
        \raisebox{-0.5\height}{\includegraphics[width=0.3\textwidth]{}}
        \label{eq:piABC}
    \end{equation}
    
    \begin{align}
      p_{i, AB} &=
      \raisebox{-0.5\height}{\includegraphics[width=0.3\textwidth]{}} \nonumber \\[12pt]
      &= \frac{1}{2}
      \raisebox{-0.5\height}{\includegraphics[width=0.3\textwidth]{}}
      \label{eq:pi_AB}
    \end{align}
    The upper middle tensor in the first line is $
        \delta_{\sum_ia_i=1\; \text{(mod2)}}$.
    Now consider a valid set of $\{a_i\}$, the number of the degree left for the inner links is only two  and both of their weight are one so it's two times the original tensor network.
    \begin{equation}
      p_{i, B} =
      \raisebox{-0.5\height}{\includegraphics[width=0.2\textwidth]{violation_B.pdf}}
      \label{eq:pi_B}
    \end{equation}
    
    \begin{equation}
      p_{i, BC} =
      \raisebox{-0.5\height}{\includegraphics[width=0.2\textwidth]{violation_BC.pdf}}
      \label{eq:pi_BC}
    \end{equation}
       
    Even though the partition $AB$ crosses the boundary, because region $AB$ is not simply connected as shown in Fig.~\ref{fig:violations}(a), we still need to take into account of the possible excitation of this closed loop stabilizer in the lower half plane. The sample sizes for $ n=[4, 5, 6, 7]$ for Fig.~\ref{fig:violations} are $[10^7, 2.5\times 10^6, 2.5\times10^5, 5\times10^4]$ respectively. 
    
    Finally, the local tensors are:
    \begin{equation}
      \raisebox{-0.5\height}{\includegraphics[width=0.08\textwidth]{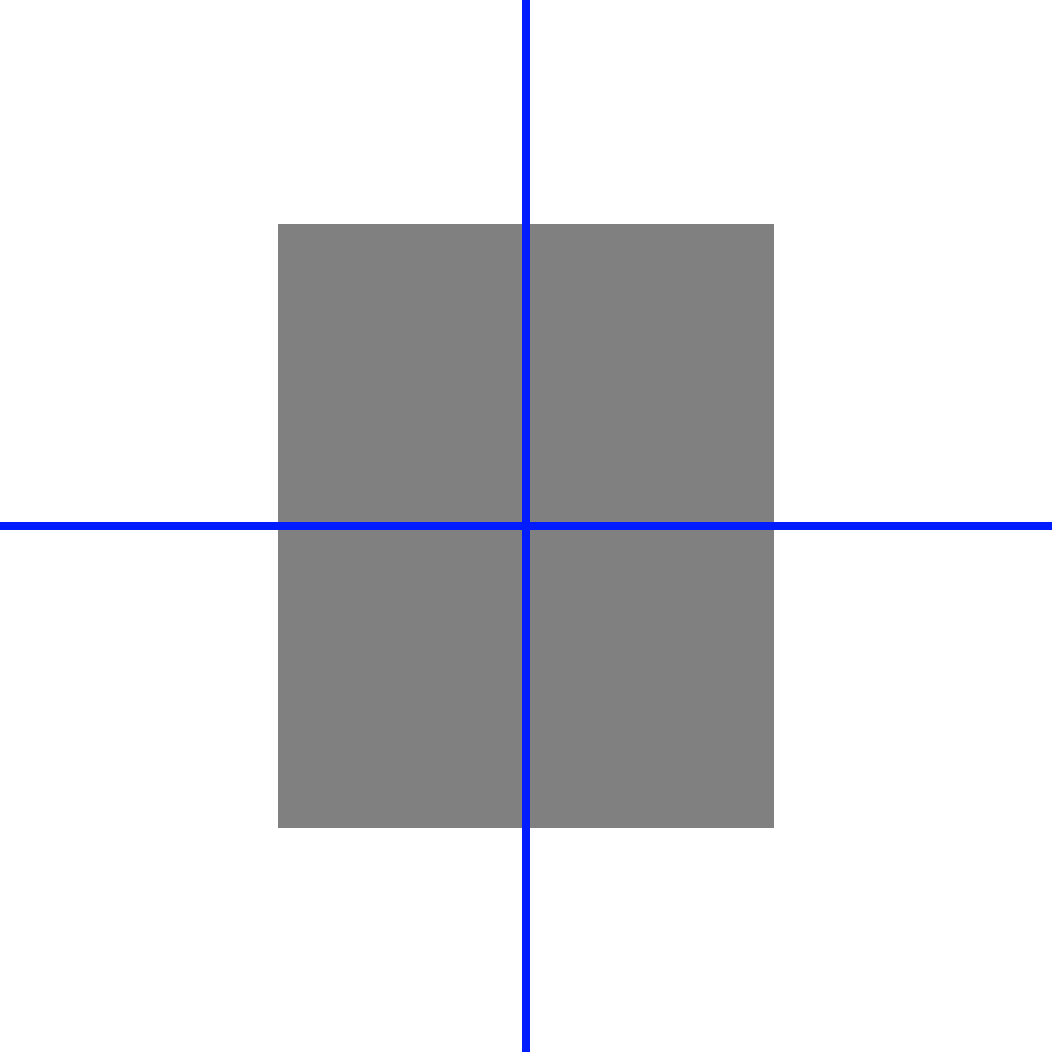}}
      = \delta_{i+j+k+l\equiv 0 \; (\text{mod} \; 2)}
     \label{eq:four_leg_unexcited}
    \end{equation}

    \begin{equation}
      \raisebox{-0.5\height}{\includegraphics[width=0.08\textwidth]{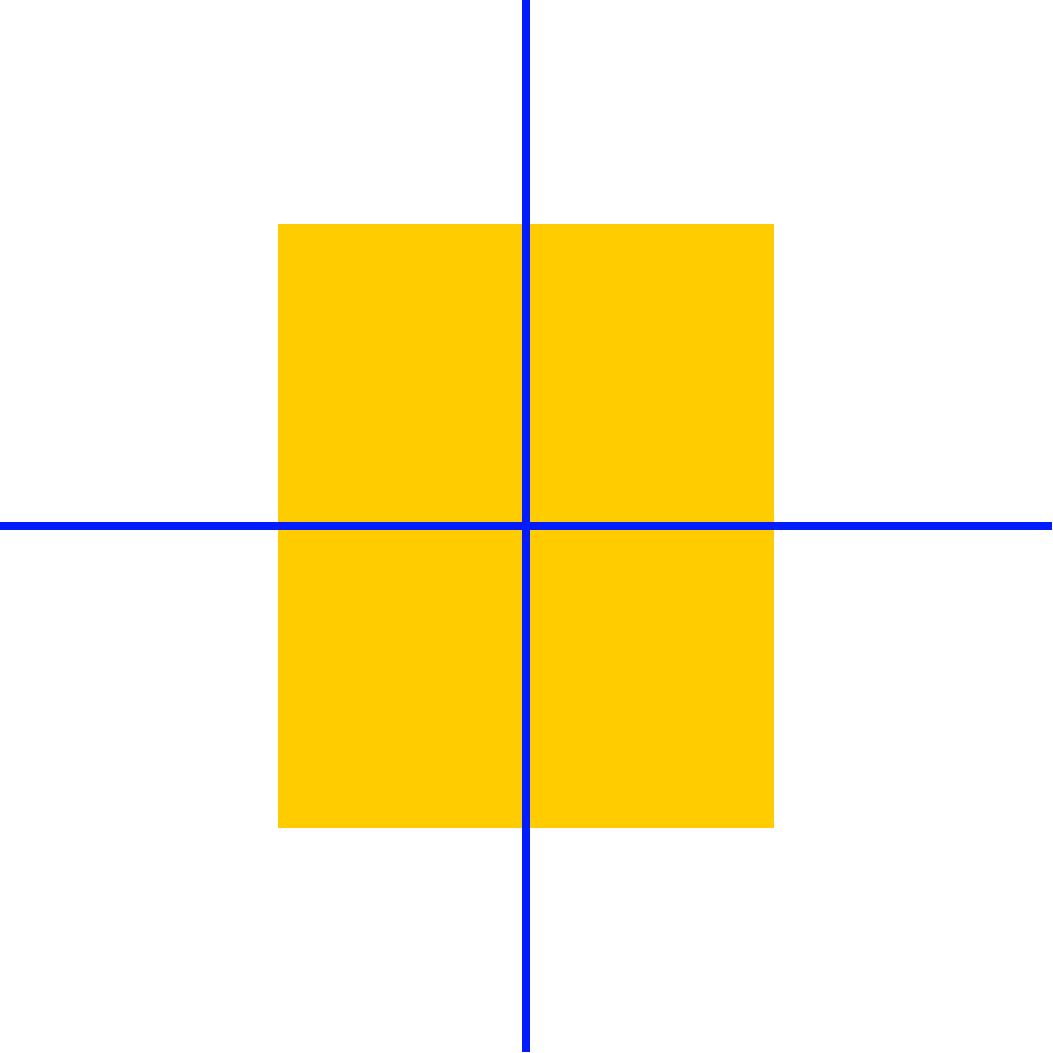}}
      = \delta_{i+j+k+l\equiv 1 \; (\text{mod} \; 2)}
      \label{eq:four_leg_excited}
    \end{equation}

    \begin{equation}
      \raisebox{-0.5\height}{\includegraphics[width=0.1\textwidth]{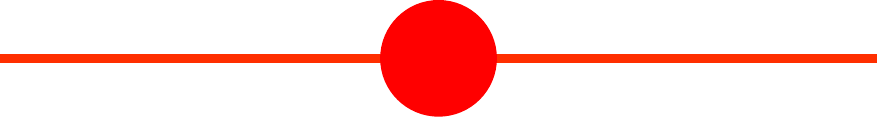}}
      = \begin{pmatrix}
            1 - p & 0 \\
            0 & p
         \end{pmatrix}.
      \label{eq:two_leg}
    \end{equation}
    
    For three-leg tensors, they're $\delta_{i+j+k\equiv 0, 1\; \text{mod 2}}$ depending whether there is an anyon there. Finally, the uncontracted tensor legs will be sumed over in the end.
    No approximation is used for Fig.~\ref{fig:violations}, but we expect some approximations framework like boundary MPS or tensor renormalization group \cite{HOTRG, TRG} can help us investigate larger system sizes in the future.

    \begin{figure*}[t]
  \begin{minipage}{\linewidth}
  \centering
  \centering
         \begin{tikzpicture} 
    \node[anchor=south west, inner sep=0] (img1) at (0,0) {\includegraphics[width=0.27\columnwidth]{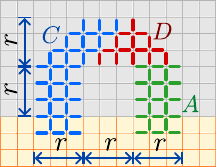}};
    \node at (2.3, -0.5) {(a)};  
    \node[anchor=south west, inner sep=0] (img2) at (5.2,0) {\includegraphics[width=0.29\columnwidth]{violation_renyi.pdf}};
    \node at (7.57, -0.5) {(b)};
     \node[anchor=south west, inner sep=0] (img3) at (10.6,-0.73) { \includegraphics[width=0.35\linewidth]{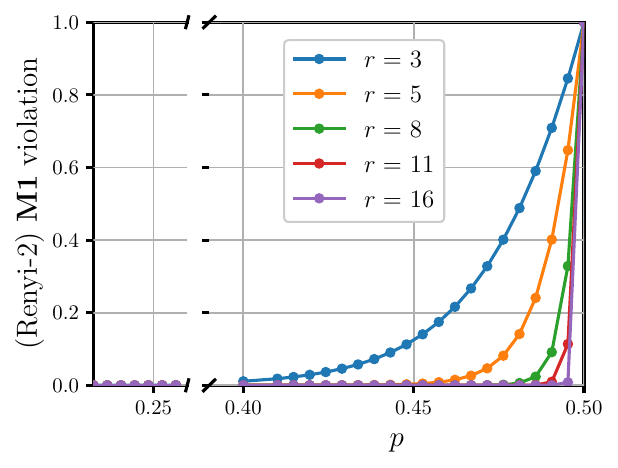}};
    \node at (14, -0.5) {(c)};
  \end{tikzpicture}      
        \caption{ {\bf Fragility of the wet boundary.} (a) Decoherence on the upper half plane, where the $Z$-decoherence acts on the gray links. The subsystems $A,C,D$ are shown at $r=3$. For  $p=0.5$, the  decohered state~\eqref{eq:fragile-rho} has a wet boundary between $\rho^e$ and the product state. (b) The tensor network to calculate Renyi-2 version of $I(A\,{:}\,C|D)$, which indicate (and lower bound) a certain \textbf{M1} violation on the boundary $\delta^{(2)}_1:= S^{(2)}_{AD} + S^{(2)}_{DC} - S^{(2)}_{D} - S^{(2)}_{ADC}$ (known as $2\gamma_{\text{wet}}$ in Sec.~\ref{subsec:channel-wet-dry}). (c) A plot of Renyi-2 {\bf M1} violation $\delta^{(2)}_1$, in unit of $\log 2$, as a function of decoherence strength $p$ and subsystem sizes $r$.
        }
  \label{fig:fragile}
  \end{minipage}
\end{figure*}

\subsection{From product state to $\rho^e$ and fragility}\label{app:fragile}

    Staring from $\ket{+}^{\otimes N}$, the density matrix under $\prod_{e \in p} Z_e$ decoherence on every plaquettes for the toric code is:
    \begin{equation}\label{eq:fragile-rho}
        \rho_\textrm{u} \propto \sum_{\bm{l}_\perp \in {\cal S}_d }  \qty[ \prod_{v \in \partial \bm{l}_\perp} (1-2p) ]X^{\bm{l}_\perp}  
    \end{equation}
    where $p$ is the decoherence strength of $B_p^f$ and  ${\cal S}_d$ is the set of all possible strings on the dual lattice. 
    In this appendix, we investigate if the state is in the same phase as the dephased toric code $\rho^e$, above a certain criticality. 
    
    For this purpose, we are interested in computing $I(A\,{:}\,C|D)_{\rho_\textrm{u}}$ in Fig.~\ref{fig:fragile}(a).
    Although there is no efficient ways to get the von-Neumann entropy, we can calculate the Renyi-2 entropy as follows:
    \begin{equation}\label{eq:ap_fragile}
       \exp(- S^{(2)}(\rho_\textrm{u})) = \Tr(\rho_\textrm{u}^2) = \sum_{\bm{l}_\perp}(1-2p)^{2|\partial \bm{l}_\perp|}.
    \end{equation}
    Here the sum runs over all configurations $\bm{l}_\perp$.    

This computation can be done efficiently using a tensor network, where the $r = 3$ case is illustrated as Fig.~\ref{fig:fragile}(a) and (b). The local tensors are:    
    \begin{equation}
    \begin{aligned}
      \raisebox{-0.5\height}{\includegraphics[width=0.08\textwidth]{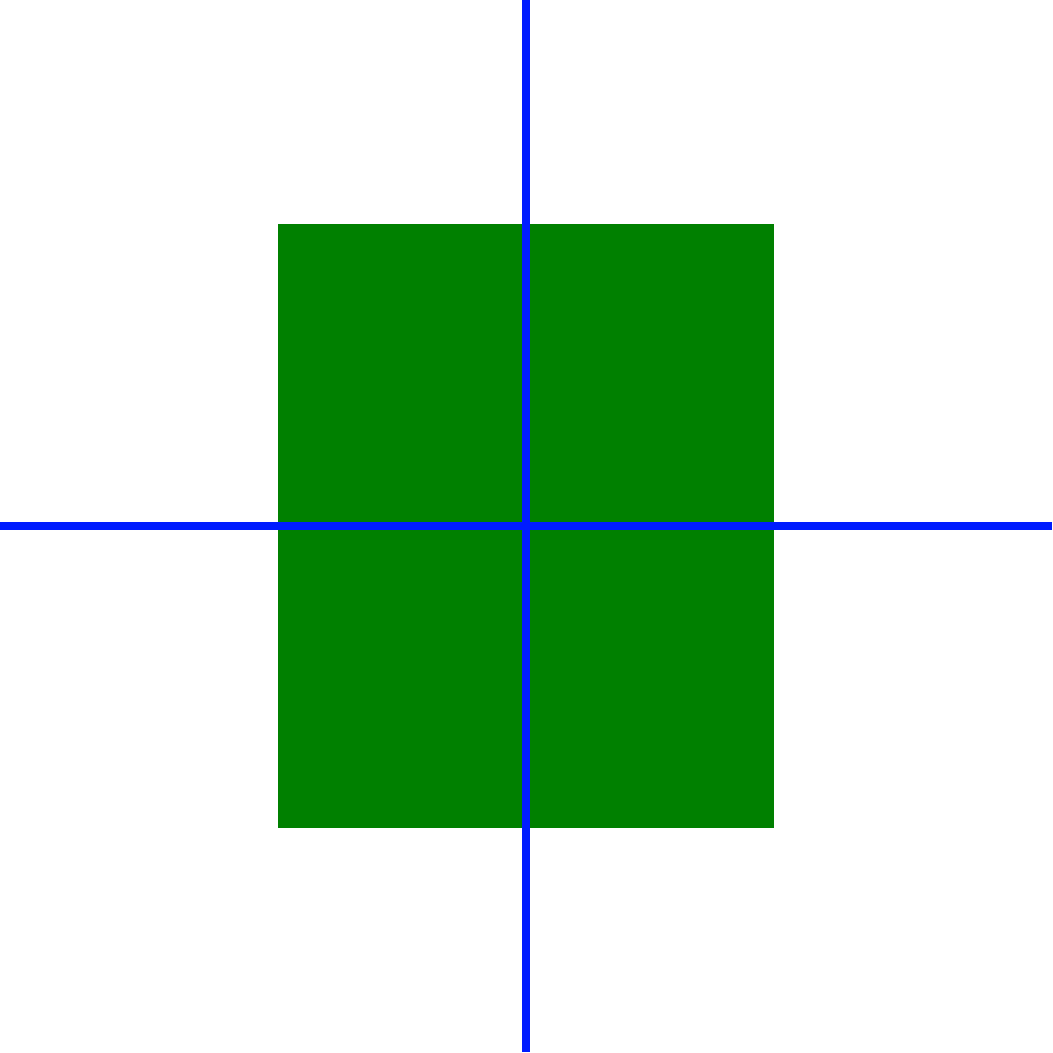}}
      = & \,\, \delta_{i+j+k+l\equiv 0 \; (\text{mod} \; 2)} \\ & +  (1 - 2p)^2 \delta_{i+j+k+l\equiv 1 \; (\text{mod} \; 2)}
    \end{aligned}
    \label{eq:four_leg_excited_renyi}
    \end{equation}
    
Note that here the computation is simpler than that in Sec.~\ref{app:num} as the weight of the excitations is on the anyons themselves instead of being on the edges so we can sum over all the configurations with one tensor network without doing Monte Carlo. For three-leg tensors, we just replace the delta functions with $\delta_{i+j+k\equiv 0, 1\; \text{mod 2}}$ respectively. Same for two-leg tensors. Finally, the indices of uncontracted legs are sum over as before.

Contracting this tensor network numerically yields the Rényi-2 entropy \(S^{(2)}\) as a function of the probability \(p\) of applying $\prod_{e\in}Z_e$ to each plaquette.  The result is shown in Fig.~\ref{fig:fragile}(c).

\subsubsection{Another argument, with bulk $\gamma_{\text{LW}}$}

Another indirect way to justify the fragility of the channel preparing the state $\rho^e$ is through the theoretical computation of the Renyi-2 LW topological entropy of the bulk. 
First, the Renyi-2 entropy of \eqnref{eq:ap_fragile} can be rewritten as 
\begin{align}
    S^{(2)}(\rho) = - \ln {\cal Z}, \quad {\cal Z} = \sum_{\bm{l}_\perp\in {\cal S}_d} e^{-\beta |\partial \bm{l}_\perp|},
\end{align}
where $\beta = \ln \frac{1}{(1-2p)^2}$. Note that ${\cal Z}$ is equivalent to the partition function of the two-dimensional Ising gauge theory since 
\begin{align}
    {\cal Z} = \sum_{\bm{l}_\perp \in {\cal S}_d} e^{-\beta |\partial \bm{l}_\perp|} = \sum_{\bm{\sigma}} e^{-\beta \sum_p \prod_{e \in p} \sigma_e}.
\end{align}
where $\bm{\sigma}=\{\sigma_e \}$ lives in the links of the dual lattice.
Let $F_R = - \ln {\cal Z}_R$ be the free energy of the region $R$. Then the (Renyi-2) topological entropy is given as 
\begin{align}
    \gamma^{(2)}_{\textrm{LW}} = \frac{1}{2} \qty( F_{AB} + F_{BC} - F_B - F_{ABC} )
\end{align}
where the partition is that in \eqnref{eq:TE}.
However, since the Wilson loop operator decays area-law, at any finite temperature $\gamma^{(2)}_{\textrm{LW}}$ should vanish for large enough system size. At zero temperature, the exact two-fold topological degeneracy (from loop) provides $F_{ABC}$ smaller than $F_{AB} + F_{BC} - F_B$ by $\log 2$, but not at any finite temperature.

\section{Quantum double calculations}\label{app:QD}

Our notation of quantum double is similar to previous literature, e.g., \cite{Bombin2008}. 
The quantum double model~\cite{Kitaev1997} is a generalization of the toric code model, which allows a non-Abelian group $G$.  (See Ref.~\cite{Bombin2008} for a clear review.) The lattice model has a tensor product Hilbert space on the edges of a lattice. The Hilbert space on each edge $e$ is $\calH_e := \{ | g\rangle\, |\, g \in G \}$, which forms an orthonormal basis for the Hilbert space of dimension $|G|$. For visual representation convenience, we put an orientation on each edge such that upon flipping the orientation, we require
\begin{equation}\label{eq:g-gbar}
\vcenter{\hbox{\includegraphics[width=0.45\linewidth]{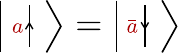}}}.
\end{equation}
On each oriented edge $e$, we define following operators:
\begin{eqnarray}
       T_e^g : &&\quad T_e^g | h\rangle =  \delta_{g,h}|h\rangle, \\
       L_e^g: &&\quad L_e^g |h \rangle = | gh\rangle,\\
       \bar{L}_e^g: &&\quad  \bar{L}_e^g |h \rangle = | h \bar{g}\rangle.
\end{eqnarray}
where $\bar{g}:=g^{-1}$. For abelian group $\bar{L}_e^g = (L_e^g)^\dagger = L_e^{\bar{g}}$. However, for non-abelian group, in general, $\bar{L}_e^g \neq L_e^{\bar{g}}$.

Equipped with single-edge operators, we define operators acting on the oriented 2-cellular complex where $\rd e = \sum_{v \in e} \epsilon_{e,v} v$ and $\rd p = \sum_{e \in p} \epsilon_{e,p} e$, where $\epsilon = \pm 1$ and $\rd^2 p = 0$. For instance, if $e \in E$ is oriented from vertex $v_1$ to $v_2$, one takes $\epsilon_{e,v_1} = 1$ and $\epsilon_{e,v_2} = -1$.  
With this understanding, define  
\begin{align}\label{eq:QD-terms}
    \includegraphics[width=0.7\linewidth]{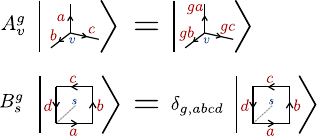}
\end{align}
Here, $v$ refers to a vertex, and $s$ is a face $p$ with a starting position chosen. What if the orientation of the edges are different from the figure? This is explained by Eq.~\eqref{eq:g-gbar}. For example, if the edge orientation for $b$ is toward $v$ instead of away from $v$, we can first reverse the orientation, apply $A_v^g$, and reverse it back as the following:
\begin{align}
    A_v^g |a_+, b_-,c_+ \rangle &= A_v^g | a_+, \bar{b}_+,c_+ \rangle = |ga_+, g\bar{b}_+, gc_+ \rangle  \nonumber \\
    &= |ga_+, b \bar{g}_-, gc_+ \rangle
\end{align}
where the subscript is $\epsilon_{e,v}$. The case for $B^g_s$ can be understood similarly.

While $A_v^g$ is unitary, $B_s^g$ is a projector. Using these operators, we define 
commuting projectors  
\begin{equation}\label{def:projector-A-B}
    A_v^R := \sum_{g\in G} \frac{\dim R}{|G|} \chi_R(g^{-1}) A^g_v, \quad B_p^C := \sum_{g\in C} B_s^{g},
\end{equation}
where $R \in (G)_{\text{ir}}$ and $C \in (G)_{\text{cj}}$ are labels for irreducible representation and conjugacy class of $G$. 
The projector properties of $B_p^C$
\begin{equation}
    B_p^C B_p^{C'}= \delta_{C,C'} B_p^C
\end{equation}
is obvious, while the projector property of $  A_v^R $ that is
\begin{equation}
    A_v^R A_v^{R'} = \delta_{R,R'} A_v^R
\end{equation}
will be verified in Appendix~\ref{app:projector}. 
These two sets of projectors $\{ A_v^R \}$ and $\{ B_p^C \}$ are commuting:
\begin{equation}
    [A_v^R, B_p^C]=0, \quad \forall v, p.
\end{equation}  

Of importance are the projectors corresponding to the identity irreducible representation $\bmcharge{1}\in (G)_{\text{ir}}$ and $\bmflux{1} \in (G)_{\text{cj}}$.
\begin{equation}  
A_v^{\bmcharge{1}} = \frac{1}{|G|} \sum_{g \in G} A_v^g, \quad \text{and} \quad  B_p^{\bmflux{1}} = B_s^{{1}}.
\end{equation}
where $1 \in G$ is the identity element.

For a given 2D lattice, we can write the quantum double Hamiltonian as
\begin{equation}
     H = - \sum_v A^{\bmcharge{1}}_v  - \sum_p B^{\bmflux{1}}_p,
\end{equation}
If we take $G=\mathbb{Z}_2$, we get the toric code.

\subsection{Projector properties}\label{app:projector}

Given a group $G$, for all elements $g\in G$, denote the unitary matrix of the irreducible representation of $g$ by $\Gamma_R(g)$. We have the great orthogonality relation:  
\begin{equation}
    \sum_{g\in G}\Gamma^{ij}_R(g)\bar\Gamma_{R'}^{i'j'}(g)=\frac{|G|}{\dim R}\delta_{R,R'}\, \delta_{i,i'} \,\delta_{j,j'},
\end{equation}
where the bar on top of $\Gamma$ means complex conjugation. By summing over $(i,i',j,j')$, we obtain the first orthogonality relation for the characters $\chi_R(g):=\Tr \Gamma_R(g)$:
\begin{equation}
    \frac{1}{|G|} \sum_{g\in G} \chi_{R}(g)\bar\chi_{R'}(g)= \delta_{R,R'}.
\end{equation}

Finally, we also have the second orthogonality relation:
\begin{equation}
     \sum_{R\in(G)_{ir}}\chi_R(C)\bar\chi_R(C') = \frac{|G|}{|C|}\delta_{C,C'}.
\end{equation}

Consider a set of operators $\{\calO_g\}_{g\in G}$ such that $\calO_g \calO_h =\calO_{gh}$. We define 
\begin{equation}
    e_R^{ij}=\frac{\dim R}{|G|}\sum_{g\in G}\bar\Gamma_R^{ij}(g) \,\calO_g.
\end{equation}
With straightforward but lengthy algebra, we see
\begin{align}
\begin{aligned}
     e_R^{ij}\, e_{R'}^{i'j'} &=\frac{\dim R \cdot \dim {R'}}{|G|^2}\sum_{g,g'} \bar\Gamma^{ij}_R(g) \, \bar\Gamma_{R'}^{i'j'}(g')\calO_g \calO_{g'} \\
    & = \frac{\dim R \cdot \dim {R'}}{|G|^2}\sum_{g,r} \bar\Gamma^{ij}_R(g) \, \bar\Gamma_{R'}^{i'j'}(g^{-1}r) \calO_r \\
    & = \frac{\dim R \cdot \dim {R'}}{|G|^2}\sum_{g,r,k}\bar\Gamma^{ij}_R(g)\Gamma_{R'}^{ki'}(g)  
    \bar\Gamma_{R'}^{kj'}(r) \calO_r \\
    & = \frac{\dim R}{|G|}\sum_{r,k}\delta_{R,R'}\delta_{i,k}\delta_{j,i'}\bar\Gamma_R^{kj'}(r) \, \calO_r \\
    & = \delta_{R,R'}\delta_{j,i'}e_R^{ij'}.
    \end{aligned}
\end{align}
Now, let us define $e_R=\sum_{i}e_R^{ii}$. It follows that
\begin{align}\label{Rorthogonal}
    e_R \, e_{R'}= \delta_{R,R'} \, e_R.
\end{align}
With Eq.~\eqref{Rorthogonal}, we can see that each $\{A_v^R\}_{R\in (G)_{\text{ir}}}$ of Eq.~\eqref{def:projector-A-B} is an orthogonal set of projectors because $A_v^g$ is a valid choice of $\calO_g$.

\subsection{Classical loop configurations}\label{app:non-Abelian-justify}

Consider the quantum double model on a lattice $M$ with a number of faces, links, and vertices, where the Hilbert space on each link is $\calH_e =\text{span}\{|g\rangle | g\in G\}$ on each link.  
 Let $|c \rangle$ be a product state in the $\{ |g\rangle_e\}\rangle$ basis, and with zero flux on every plaquette (i.e., $B_p^{\bmflux{1}} |c \rangle = |c \rangle$, $\forall p$). We say $|c \rangle \sim |c'\rangle$ if $|c'\rangle = \prod_v A_v^{g_v} |c \rangle$ by the action of unitary vertex terms $A_v^{g_v}$, with some $ g_v\in G$. Using local zero-flux condition, the following can be checked:
 \begin{itemize}[leftmargin=13pt]
\item Suppose $M$ is a connected tree, i.e., it has no faces and no closed loops formed by links; see Fig.~\ref{fig:trees}(a). Then it is always true that $|c \rangle \sim |1,1,\cdots, 1\rangle$. Furthermore, the number of configurations in the equivalent class is $|G|^{V-1}$ as the number of vertices ($V$) of a connected tree equals the number of links plus one.

Suppose $M$ is a union of $k$ connected trees, then it is still true that $|c \rangle \sim |1,1,\cdots, 1\rangle$. But, the number of configurations in the equivalent class is $|G|^{V-k}$.

    \begin{figure}[h]
       \centering
       \includegraphics[width=0.9\linewidth]{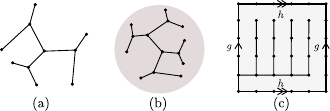}
       \caption{An illustration of trees and lattices. (a) a connected tree. (b) A tree on a sphere obtained by removing links according to a certain rule. (c) A tree that is attached to a rectangular frame on a torus.}
       \label{fig:trees}
   \end{figure}
 
     \item When $M$ is a disk or a sphere, we always have $|c \rangle \sim |1,1,\cdots, 1\rangle$. Thus, there is a unique equivalence class under $\sim$. 

   To see why this is true, we first turn the lattice into a tree in the following way. We remove a number of links in such a way that every step removes a face until we are left with 0 or 1 face. The resulting links form a connected tree, as in Fig.~\ref{fig:trees}(b). The important observation is that the group elements on the links in the tree determine the full configuration on the lattice of the disk or sphere. This is because the group elements on the removed links can be inferred from the tree by zero flux constraints of the faces.  From the property of connected trees, we see it is always possible to convert $|c \rangle \sim |1,1,\cdots, 1\rangle$ by the action of vertex terms. 
   Furthermore, the number of configurations in the equivalence class equals $|G|^{V -1}$, where $V$ is the number of vertices of the original lattice. 

   Similarly, if $M$ is the union of $k$ disjoint disks, $|c \rangle \sim |1,1,\cdots, 1\rangle$, and the number of configurations in the equivalent class is $|G|^{V-k}$.

     \item When $M$ is the torus, the equivalence class of $|c \rangle$ is labeled by $C_{g,h}$. Here, the set $C_{g,h} := \{ (r g \bar{r}, r h \bar{r})| r \in G\}$, defined when $gh = h g$, $g,h\in G$. Let $\calS_{g,h}$ be the set of configurations equivalent to that shown in Fig.~\ref{fig:gh}. Furthermore, the sizes of such sets satisfy
     \begin{equation}
         |\calS_{g,h}| = |C_{g,h}| \cdot |\calS_{1,1}|, \quad |\calS_{1,1}| =|G|^{V -1}.
     \end{equation}

 \begin{figure}[!t]
     \centering
     \includegraphics[width=0.24\textwidth]{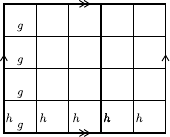}
     \caption{A representative configuration of the torus in the equivalent class $\calS_{g,h}$, where $g, h \in G$ and $gh = hg$. On each link, we assign a group element as shown. We omit the label of the group element whenever it is the identity $1\in G$. Each horizontal link is assigned a right arrow and each vertical link is assigned an up arrow, and we omit the drawing of them.}\label{fig:gh}
 \end{figure}

    To see why these statements are true, we first remove a number of links in such a way that every step removes a face until we are left with a unique face of the torus. The remaining graph is a tree attached to a square frame, with periodic boundary conditions, as Fig.~\ref{fig:trees}(c). It is then easy to see that the nontrivial loop and the zero flux constraints give rise to the equivalence classes $\calS_{g,h}$ stated. The condition $gh=hg$ follows from the zero flux constraint. The application of star terms $\prod_v A_v^r$  can result in $(g,h)\to (rg r^{-1}, r h r^{-1})$. That is why $C_{g,h}$ captures the equivalence classes.  
    The presence of the tree does not provide extra equivalence classes and it only increases all $|\calS_{g,h}|$ by the same factor. This justifies the statement. 
 \end{itemize}

 \subsection{The memory capacity of the torus} \label{app:derivation_memory}

Recall that the definition of the memory capacity in (\ref{memory_cap})
\begin{equation}
    \calQ(M|\sigma) \equiv \hspace{-5pt} \max_{\rho,\lambda \in \Sigma(M|\sigma)} \hspace{-5pt} S(\rho) - S(\lambda).
\end{equation}

We explain the group theoretical computation that leads to Eq.~\eqref{eq:capacity}, namely $\calQ(\mathbb{T}^2|\rho^m) = \log |G| + \log |G_{\rm cj}|$, where $|G_{\rm cj}|$ is the number of conjugacy classes of $G$.
First, given that the extreme points are orthogonal, and that the entropies of the extreme points are given by $S(\rho^{C_{g,h}}) = S(\rho^{C_{1,1}}) + \log |C_{g,h}|$, we have the minimal entropy be $S(\rho^{C_{1,1}})$ and the maximum entropy be $S(\rho^\star)=S(\rho^{C_{1,1}})+ \log\big( \sum  |C_{g,h}| \big)$, where the sum run over different $g, h\in G$ such that $gh = hg$ which give inequivalent $C_{g,h}$. This maximum entropy can be obtained by optimizing the convex sum $\rho = \sum_{(g,h)} p_{g,h} \rho^{C_{g,h}}$ over $p_{g,h}$.

To proceed, let $E_g$ be the centralizer group $\{ x \in G \,|\,x g x^{-1} = g\}$.
Note that two elements $h_1, h_2 \in E_g$ with $h_2 = s h_1 s^{-1}$ for some $s \in E_g$  
yield the same set $C_{g,h_1} = C_{g,h_2}$. Hence, one only sums \(\lvert\Cgh\rvert\) over $h$ in a choice of representatives of the conjugacy classes in \(\Eg\), and then sums over $g$ over a choice of representatives of the conjugacy classes in $G$.

Let $H:=\{h^{(i)} \}_{i=1}^k$ be the set of representatives of those conjugacy classes, $k=|(E_g)_{\rm cj}|$, then 
\begin{align}
    \sum_{h \in H} |C_{g,h} | = \sum_{i=1}^k \frac{|G|}{|E_{g,h^{(i)}}|}
\end{align}
where we used the orbit-stabilizer theorem, where $E_{g,h}:=\{ x \in G \,|\, xg x^{-1} = g, \, x h x^{-1} = h\}$.
On the other hand, the size of the conjugacy class for $h^{(i)}$ in $E_g$ is
\begin{align}
    | \text{conj. class of } h^{(i)} \text{ in } E_g | = \frac{|E_g|}{|E_{g,h^{(i)}}|}.
\end{align}
Since the union of these conjugacy classes exhausts $E_g$, we have 
\begin{align}
    \sum_{i=1}^k \frac{|E_g|}{|E_{g,h^{(i)}}|} = |E_g|.
\end{align}
Therefore, 
\begin{align}
    \sum_h |G_{g,h}| = \frac{|G|}{|E_g|} \sum_{i=1}^k \frac{|E_g|}{|E_{g,h^{(i)}}|} = |G|.
\end{align}
Let $|G_{\rm cj}|$ be the number of conjugacy classes of $G$. Then the sum $\sum_{g,h} |C_{g,h}|$ equals to $|G|\cdot |G_{\rm cj}|$.

\subsection{Levin-Wen topological entropy}\label{app:Levin-Wen}

We derive the Levin-Wen topological entropy $\gamma_{\rm LW} = \log \sqrt{|G|}$ (Eq.~\eqref{eq:LW-TE-G} of the main text), for the fixed point $\rho^m=\prod_p B^{\bmflux{1}}_p$ of the quantum double.
To do so, we apply the tools in Appendix~\ref{app:non-Abelian-justify}.
The important observation is that the wave function $\rho^m$ is an equal weight convex combination of zero-flux classical loop configurations. We thus have 
\begin{itemize}
    \item Let $D$ be the union of $k$ disjoint disks. Then, $S(\rho^m_D) =  (V_D-k) \log |G|$ where $V_D$ is the number of vertices in $D$. 

    \item Let $X$ be an annulus (embedded in a disk), then $S(\rho^m_X) = (V_X -1) \log |G|$.   The reason is that the zero flux conditions of the hole of the annulus effectively make the region a single disk.  
    \end{itemize}
    Now we are ready to compute the Levin-Wen topological entropy, 
    \begin{equation}
        \gamma_{\rm LW}= \frac{1}{2} I(A\,{:}\,C|B)_{\rho^m},
        \quad \text{for} \quad \vcenter{\hbox{\includegraphics[width=0.08\textwidth]{figs/fig_LW-TEE-v2.pdf}}} 
    \end{equation}
    According to the discussion above, we can first shrink the lattice on the annulus $ABC$ into a noncontractible loop and a tree attached to it. Let us call the resulting graph as $\mathbb{E}$, which is arranged in such a way that when $\mathbb{E}$ is restricted to $AB, BC$, it becomes a single tree, and when it is reduced to $B$ it becomes two disjoint trees. 
    
    We then compute 
    \begin{equation}
    \begin{aligned}
        I(A\,{:}\,C|B)_{\rho^m} &= (V_{AB}-1)\log |G| + (V_{BC}-1  ) \log |G| \\
        &-(V_B - 2) \log |G| - (V_{ABC}-1) \log |G| \\
        &=E_{AB} \log |G| + E_{BC}  \log |G| \\
        & - E_B \log |G| - (E_{ABC}-1) \log |G| \\
        &= \log |G|. 
    \end{aligned}   
    \end{equation}
    Here the first equality follows from the entropy counting from the number of vertices as explained above, where $V_Y$ is the number of vertices in region $Y$. About the 2nd equality, $E_{AB}$, $E_{BC}$, $E_B$ and $E_{ABC}$ are the number of links of the graph $\mathbb{E}$; they are related to the number of vertices due to the tree topology (with 1 or 2 components or with an extra loop). The last equality follows from $E_{AB} + E_{BC} = E_B + E_{ABC}$. Thus, the topological entropy $\gamma_{\rm LW} = \log \sqrt{|G|}$.

\bibliography{Bibliography}
 
\end{document}